\documentclass{article}
\usepackage{amsmath, amssymb}
\usepackage{amsfonts}
\usepackage{appendix}
\usepackage{graphicx}
\usepackage{color}
\usepackage{url}
\usepackage{bm}
\usepackage[all]{xy}
\usepackage{arydshln}
\usepackage{tikz}
\usepackage{float}

%you may inactive the following four commands, then it will become appearance of book pages

\setlength{\topmargin}{-0.1in} \setlength{\textheight}{8.3in}
\setlength{\oddsidemargin}{0.1 in} \setlength{\textwidth}{6.2 in}

%%%%%%%%%

%%%%%%%%%%%%%

\newtheorem{theorem}{Theorem}
\newtheorem{lemma}{Lemma}
\newtheorem{corollary}{Corollary}
\newtheorem{property}{Property}
\newtheorem{definition}{Definition}

\newtheorem{remark}{Remark}

\newtheorem{example}{Example}
\newtheorem{construction}{Construction}

\newenvironment{proof}{{\noindent\it Proof}\quad}{\hfill $\square$\par}

\newcommand{\Z}{\ensuremath{\mathbb Z}}

\newcommand{\done}{\hfill $\Box$ }

%%%%%%%%%%%%%%

\newcommand{\ls}[1]
    {\dimen0=\fontdimen6\the\font\lineskip=#1\dimen0
     \advance\lineskip.5\fontdimen5\the\font
     \advance\lineskip-\dimen0
     \lineskiplimit=0.9\lineskip
     \baselineskip=\lineskip
     \advance\baselineskip\dimen0
     \normallineskip\lineskip\normallineskiplimit\lineskiplimit
     \normalbaselineskip\baselineskip
     \ignorespaces}

%%%%%%%%

\begin{document}

\bibliographystyle{abbrv}

\title{New Construction of Complementary Sequence (or Array) Sets and Complete Complementary Codes}
\author{Zilong Wang$^1$,
%\thanks{The authors were supported in part by the National Natural Science Foundation of China (Grant No. 11301406), the Natural Science Basis Research Plan in Shaanxi Province of China (Program No. 2015JQ6210), the Fundamental Research Funds for the Central Universities (Grant No. JB161503), and the Programme of Introducing Talents of Discipline to Universities (China 111 Project, No. B16037) },
 Dongxu Ma$^{1, 2}$, Guang Gong$^2$, IEEE Fellow, Erzhong Xue$^1$\\
\small $^1$ State Key Laboratory of Integrated Service Networks, Xidian University \\[-0.8ex]
\small Xi'an, 710071, China\\
\small $^2$Department of Electrical and Computer Engineering, University of Waterloo \\
\small Waterloo, Ontario N2L 3G1, Canada  \\
\small\tt zlwang@xidian.edu.cn,  xidiandongxuma@foxmail.com, ggong@uwaterloo.ca, 2524384374@qq.com\\
}

%\date{}
\maketitle

\thispagestyle{plain} \setcounter{page}{1}

\begin{abstract}
A new method to  construct $q$-ary complementary sequence sets (CSSs) and complete complementary codes (CCCs) of size $N$ is proposed by using desired para-unitary (PU) matrices.  The concept of seed PU matrices is introduced and
 a systematic approach  on how to compute the explicit forms of the functions in constructed CSSs and CCCs from the seed PU matrices is  given.  A general form of these functions only depends on a basis of the functions from  $\Z_N$ to $\Z_q$ and representatives in the equivalent class of Butson-type Hadamard (BH) matrices.  Especially,  the realization   of  Golay pairs  from the our general form exactly coincides  with the standard Golay pairs. The realization of ternary complementary sequences of size $3$  is first  reported here.
 For the realization of the quaternary complementary sequences of size 4, almost all the sequences derived here are  never reported before. Generalized seed PU matrices and the recursive constructions of the desired PU matrices are also studied, and a large number of new constructions of CSSs and CCCs are given accordingly. From the perspective of this paper, all the known results of CSSs and CCCs with explicit GBF form in the literature (except non-standard Golay pairs) are  constructed from the Walsh matrices of order 2. This  suggests that the proposed method with the BH matrices of higher orders will yield a large number of new CSSs and CCCs with the exponentially increasing number of the sequences of  low peak-to-mean envelope power ratio.
\end{abstract}

{\bf Index Terms }Complementary sequence set, Complete complementary codes,   Hadamard matrix,  Boolean function, PMEPR.

\ls{1.5}
\section{Introduction}

{\em Golay sequence pair} (GSP) was first introduced by Golay in \cite{Golay61} when he studied infrared spectrometry \cite{Golay51}. The concept of GSP was extended later to {\em complementary sequence set} (CSS) for binary case \cite{Tseng72} and polyphase case
\cite{Sivaswamy78}.

Sequences in GSPs and CSSs have found many applications in physics, combinatorics and telecommunications such as channel measurement, synchronisation and spread spectrum communications. In particular, orthogonal frequency division multiplexing (OFDM) has recently seen rising popularity in international standards including coming 5G cellular systems. However, a major drawback of OFDM is the high peak-to-mean envelope power ratio (PMEPR) of uncoded OFDM signals \cite{Litsynbook}. Among all the approaches that have been proposed to deal with this power control problem, one of the most attractive methods is coding across the subcarriers and selecting codewords with lower PMEPR. These codewords can be a linear code or codewords drawn from cosets of a linear code \cite{offset,Tarokh00,PT2000}. It has been shown in \cite{Popovic91} the use of Golay sequences as codewords results in OFDM signals
with PMEPR of at most 2.

Golay \cite{Golay61} first proposed two recursive
 algorithms  to construct
binary GSPs of length $2^m$.  In 1999, Davis and Jedwab's milestone work \cite{DavisJedwab99} showed that
the  Golay  sequences
described in \cite{Golay61} can be obtained from specific second-order
cosets of the  first-order generalized Reed-Muller (GRM) code \cite{MacWilliams}.  This established an effective way of combining the coding approach  and the use of Golay sequences. However, the rate of this code suffers a dramatic decline when the
length increases. In order to increase the code rate, further cosets of the first order GRM code with  a slightly larger PMEPR were also proposed by exhaustive search  in \cite{DavisJedwab99}. Paterson \cite{Paterson00,Paterson02} studied general second-order cosets of the first-order GRM code, and showed that the codewords of such a coset lie in CSSs of size $N$, where $N$ is related to the graph associated with the quadratic form of the coset. Schmidt \cite{Schmidt06,Schmidt07} developed a sophisticated theory and proposed a construction of sequences with low PMEPR which were contained in higher-order cosets of the first-order GRM code.

Several other CSSs constructions have been given in \cite{Chen06,Stinchcombe,Wu2016}. These results yield more sequences  and provide better upper bounds on the PMEPR than the work in \cite{Paterson00,Schmidt07}. But it is difficult to use them in a practicable coding scheme for OFDM.

A large volume of works  have been done along the line of above methods in the past years. For example, non-standard Golay sequences were studied in \cite{Fiedler06,Fiedler2,Li05}, the exact value of PMEPR of binary Golay sequences was determined in \cite{Wang14}, and GSPs with low PMEPR over QAM constellation were constructed in  \cite{Chong03,Li10,Liu13,Tarokh01,Tarokh03,Wang09},  just to list a few.

The concept of {\em complete mutually orthogonal complementary set} (CMOCS) or {\em complete complementary code} (CCC) was  introduced in \cite{Suehiro88}. CCC is a source to design zero correlation zone (ZCZ) sequences, which can be used to eliminate the multiple access interference and multi-path interference in QS-CDMA system.  Several important methods to construct ZCZ sequences presented in \cite{ZCZ03,ZCZ13,ZCZ06,CCC,Tang2010} are all based on the construction of CCCs. In particular, the constructions of CCCs from  GRM codes were proposed in \cite{Chen08,Liu14,CCC}.

In the  aforementioned constructions, the sequences are  directly given by generalized Boolean functions (GBFs), so we refer  to it simply  as \emph{GBF-based constructions} in this paper. However,
in addition to those GBF-based constructions, there
exists another approach based on Hadamard matrices to construct CSSs and CCCs. For instance, the first recursive construction of CSSs  in \cite{Tseng72} was obtained by Hadamard matrices. However, how to obtain the explicit function form of the sequences derived by Hadamard-matrix-based methods constitutes an extremely challenge problem for several decades, so these methods are not as popular as the GBF approaches in the literature. On the other hand, based on the observation  on the filter bank theory and the design of GSPs \cite{BudSpas}, a new method to construct GSPs was proposed in \cite{BudIT} by {\em para-unitary} (PU) matrices. Here the word \lq\lq para-unitary\rq\rq\ refers to a unitary matrix with polynomial entries. It has been shown in \cite{BudIT} that this {\em PU-matrix-based construction}  can explain all the standard Golay sequences in \cite{DavisJedwab99} and all known GSPs over QAM constellation in \cite{Li10,Liu13}. Moreover, some new GSPs over QAM \cite{BudQAM,BudIT} are also derived from this method. Note that PU matrices are rarely used in the literature because they may not be able to map to $q$-ary sequences.

In 2016 , we \cite{SETA2016} proposed a new PU-matrix-based construction to generate $q$-ary CSSs of size $N$ and length $N^m$ where $N$ and $m$ are both positive integers. Almost at the same time, Budi\v{s}in \cite{Bud-unp} also proposed a PU-matrix-based construction on CSSs and CCCs. These two constructions are similar. However, we have introduced the {\em Butson-type Hadamard} (BH) matrices to make the PU-matrix-based construction producing $q$-ary sequences. Das \emph{et al.} \cite{TSP2018} proposed a construction by applying the BH matrices to Budi\v{s}in's construction. Actually, from the viewpoint of the PU matrices, the construction in \cite{TSP2018} is identical to our construction in \cite{SETA2016}, but they were  not aware of  our work \cite{SETA2016} until 2018. A hardware implementation,  an enumeration of the sequences, and a comparison with known constructions of CCCs in \cite{Marziani,Suehiro88} were also given in \cite{TSP2018}. Furthermore,   we have showed in \cite{BSC} that the  new sequences in CSSs of size 4  can be obtained by the PU-matrix-based construction \cite{SETA2016}.  However, the relationship between the sequences constructed by  PU-matrix-based  method and GBF-based method is unsolved  until now.

The concept of GSP was generalized to {\em Golay array pair} (GAP) in \cite{Luke}, which has found applications in coded imaging \cite{Ding16}. A powerful three-stage process was presented in \cite{Luke}
by showing that all known GSPs of length $2^m$ can be derived from the seed GAPs.

In this paper, inspired by the excellent idea introduced in \cite{Array2,Array1, Parker11}, we  generalize the concepts of CSS and CCC to a {\em complementary array set} (CAS) and a {\em complete complementary array} (CCA), and show that a large number of CSSs and CCCs of size $N$ can be constructed from a single CCA of size $N$. We also prove that a CCA of size $N$ can be constructed by a multivariate polynomial matrix of order $N$ satisfying some properties, which is referred to as a {\em desired PU matrix} in this paper. A principal objective of this paper is to construct desired PU matrices together with extracting the explicit forms of the functions (arrays) from the desired PU matrices. Consequently, CSSs and CCCs can be constructed from these functions.  The proposed constructions using designed PU matrices in this paper  not only generate new CSSs and CCCs, but also reveal that  the  known constructions of CSSs and CCCs of length $2^m$ \cite{DavisJedwab99, Paterson00,CCC,Schmidt07} are special cases of those constructions.

Since this paper consists of a large volume of the contents,  in order to easily read the paper,  in the following we provide a summary of our main contributions of this paper together with their dependency relationships among our new constructions.
\begin{enumerate}
	\item  We introduce a general framework for constructing desired PU matrices (Theorem 1 in Section 3) and a general treatment of sequences, arrays, and their respective generating functions.
	\item  By extending the univariate  PU matrices in \cite{SETA2016} to the array version, we obtain the first construction of the desired PU matrices in this paper, which is called  {\em seed PU matrices} (Theorem 2 in Section 4). More precisely, such a seed PU matrix is   constructed by iteratively multiplying $m$ times the product of BH matrices and a delay matrix  to the BH matrices.
	
	\begin{enumerate}
		\item

	 One important contribution of this paper is to provide a systematic approach to extract the functions (arrays) from the seed PU matrices. These functions depends on a basis of functions from  $\Z_N$ to $\Z_q$ and representatives in the equivalence class of BH matrices.
	
	\item From this construction,  all the functions fill up a larger number of cosets of a linear code.  Furthermore, since the newly constructed  functions from the seed PU matrices in this paper also determine all the sequences from PU-matrix-based constructed in \cite{TSP2018, Marziani,Suehiro88,SETA2016}, our enumeration based on a rigorous proof corrects  some results in \cite{TSP2018}.
\end{enumerate}

\item  Using  the construction from the seed PU matrices (Theorem 5 in Section 5), we realize four different classes with the explicit function forms of CCAs and the sequences in both CSSs and CCCs,  which are summarized as follows.

\begin{enumerate}

	\item
 For $q$-ary pair ($N=2$, $q$ is even), our Construction 1 exactly coincides  with the standard  GSPs \cite{DavisJedwab99} by Davis and Jedwab.

\item  For the case size 3 ($N=q=3$), there are $m!\cdot 2^{m-2}\cdot 3^{2m+1}$ new constructed ternary sequences in CSSs of 3.

\item For respective binary and quaternary cases of size 4 ($N=4$, $q=2$ and $q=4$), these binary and quaternary sequences can be represented by explicit GBFs in a general form respectively.
 On the other hand,  both binary and quaternary sequences fill up a large number of cosets of a linear code $S_L(q, 4)$ for $q=2$ and $q=4$ respectively, which contains  the first-order GRM code as a sub-code.  For $q=2$, the sequences are not reported in Paterson's construction \cite{Paterson00} and Schmidt's construction \cite{Schmidt07}, but can be explained by the results in \cite{Wu2016}. However, for $q=4$,  most of them are never reported before. Note that sequences in CSSs of size $4$ have been deeply studied for PMEPR control for almost 20 years.

\end{enumerate}
\item However, the construction using the seed PU matrices cannot explain all the known constructions.  We then turn our focus to  the arrays of size $2\times 2 \times \cdots \times 2$. We generalize the seed PU matrices of order $N=2^n$, where the delay matrix  in a  seed PU matrix in Theorem 2  is replaced by a generalized delay  matrix which is the Kronecker product of $m$ delay matrices of order 2 (Theorem 6 in Section 6). And the explicit GBF forms are derived in Theorem 7 (in Section 6). This result extends the univariate PU matrices in \cite{SPL2019} to the array version and determines all the sequences in PU-matrix-based construction \cite{SPL2019}.

\item  Following these two types of seed PU matrices, we introduce two basic recursive PU-matrix-based constructions.  Roughly speaking, the first  one is the product of two PU matrices of order $2^{n+1}$  where the multiplier is a block diagonal matrix of order $2^{n+1}$  by  two BH matrices of order $2^n$ as diagonal matrices, and the multiplicand  is constructed  using  $n$  seed PU matrices of order 2 to form a block diagonal matrix of order $2^{n+1}$ (Theorem 8 in Section 7). The second one is the product of three matrices: one is from the PU matrix in Theorem 8, the second matrix is a block diagonal matrix with two generalized delay matrices as the diagonal matrices, and the third is a block diagonal matrix with two BH matrices as the diagonal matrices (Theorem 9 in Section 7).   Now  the  known GBF-based  constructions of CSSs  by Paterson \cite{Paterson00} and by Schmidt \cite{Schmidt07} and CCCs by Rathinakumar and Chaturvedi \cite{CCC} can be constructed by the PU-matrix-based constructions in Theorem 9, where all BH matrices involved are the Kronecker product of the Walsh matrix of order 2.

\item We further generalize the above recursive constructions on CSSs and CCCs (Theorems 10 and 11 in Section 8).   Those results can be used in bounding  the PMEPR.

\end{enumerate}

Basically, we have explicitly showed that the  known constructions of CSSs and CCCs in \cite{DavisJedwab99, Paterson00,CCC,Schmidt07} can be constructed by our method only involving in seed PU matrices of order 2, or equivalently,  these results on CSSs and CCCs can be  constructed from the Walsh matrix of order 2 or its Kronecker product.  On the other hand,
there are two different equivalent class for quaternary BH matrices of order 4. One is equivalent to the Kronecker product of the Walsh matrix of order 2, and the other is equivalent to the Fourier matrix of order 4. It has been verified in our Construction 4 (in Subsection 5.5) that  any sequences (totally 10 cosets of $S_L(q=4, 4)$) of length $2^4$ related to the Fourier matrix of order 4  were not reported before. Thus, the results for the seed PU matrices of order $4$, combined with  the recursive constructions in this paper, may exponentially increase the number of the quaternary  sequences in CSSs of size $4$. Moreover, there are 15 equivalent classes of quaternary BH matrices of order $8$, and 319 equivalent classes of  BH matrices of order $12$. Thus, the observation in this paper sheds light on the PMEPR control problem by practical codes with higher rate.
It is expected that the results presented in this paper suggest  that we should go back to the first paper for CSSs \cite{Tseng72} where the CSSs are constructed by the Hadamard matrices of order $N$ for finding more new CSSs and CCCs.

The rest of our paper is organized as follows. In the next section, we introduce most of our notations, give a brief viewpoint of $m$-dimensional arrays, and define the concepts of  CCA, CCC, CAS and CSS. The evaluation  from  an array to a sequence which gives a way from a CCA to a large number of CCCs and CSSs is also given. In Section 3, we introduce desired PU matrices,  BH matrices and their equivalence relationship. In Section 4, we  propose a construction of seed PU matrices, and develop a systematic approach to extract the explicit form of functions and sequences from seed PU matrices. Binary sequences in GSPs, ternary sequences in CSSs of size 3, binary and quaternary sequences in CSSs of size 4 are  provided in Section 5. The results for seed PU matrices of order $2^n$ are generalized in Section 6.
We develop a general method to construct the desired PU matrices, and present a framework to recursively construct CCCs and CSSs in Sections 7 and 8. We conclude the paper with some open problems in Section 9.

\section{Preliminaries}

In this section, we introduce the notations of sequence and array, and define the concepts of  CCA, CCC, CAS and CSS. The evaluation  from  an array to a sequence gives our perspective on how to construct a large number of CSSs and CCCs from a single CCA. The following notations will be used throughout the paper.

\begin{itemize}
	\item For integer $p$, $\Z_p=\{0,1, \cdots, p-1\}$ is the residue class ring modulo $p$. $\Z^*_p$ denotes $\Z_p\backslash \{0\}$.

    \item For function $f(y_0,y_1,\cdots, y_{m-1})$, the permutation $\pi$  acting  on the function $f$, denoted as $\pi\cdot f$,  is defined by $\pi\cdot f=f(y_{\pi(0)},y_{\pi(1)},\cdots, y_{\pi(m-1)}).$

    \item $\omega=e^{\frac{2\pi \sqrt{-1}}{q}}$ is a $q$th primitive root of unity.
    \item $\bm{I}_N$ and $\bm{J}_N$ denote the identity matrix and all 1 matrix of order $N$, respectively. $\bm{E}_{i}$ is a matrix with single-entry 1 at $(i, i)$ and zero elsewhere.
\end{itemize}

\subsection{Sequences, Arrays and Generating Functions}

A $q$-ary sequence $\bm{f}$ of length $L$ is defined as
$$\bm{f}=(f(0), f(1),\cdots, f(L-1))$$
for each entry $f(t)\in \mathbb{Z}_q$ ($t\in \Z_L$). This sequence can be represented by a function
$f(t): \Z_L\rightarrow \Z_q,$
 which is referred to as the {\em corresponding function} of sequence $\bm{f}$.

Moreover, the sequence $\bm{f}$ can be associated with a polynomial defined by
\begin{equation}\label{seq-gene}
F(Z)=\sum_{t=0}^{L-1}\omega^{f(t)}Z^t.
\end{equation}
The polynomial $F(Z)$ is called the {\em generating function} of sequence $\bm{f}$.

Note that both the corresponding function $f(t)$ and the generating function $F(Z)$ are uniquely determined by the sequence $\bm{f}$, and vice versa. So we can use either the corresponding function $f(t)$ or the generating function $F(Z)$ to represent a sequence $\bm{f}$.

The above concepts about sequences were  extended to arrays in \cite{Array2}.
An $m$-dimensional $q$-ary array of size $\underbrace{p\times p \times \cdots \times p}_m$ can be represented by a {\em corresponding function}, i.e., a multivariate polynomial function mapping from $\Z_p^m$ to $\Z_q$:
$$f(\bm{y})=f(y_0, y_1, \cdots, y_{m-1}): \Z_p^m \rightarrow \Z_q$$
for $\bm{y}=(y_0, y_1, \cdots, y_{m-1})$ and $y_k\in \Z_{p}$. So in this paper,    an $m$-dimensional $q$-ary array of size $\underbrace{p\times p \times \cdots \times p}_m$ corresponds to
a multivariate polynomial function $f(y_0, y_1, \cdots, y_{m-1})$ mapping from $\Z_p^m$ to $\Z_q$.
In particular, if $p=2$,  An $m$-dimensional $q$-ary array of size $\underbrace{2\times 2 \times \cdots \times 2}_m$ can be realized by a generalized Boolean function (GBF):
$$f(\bm{x})=f(x_0, x_1, \cdots, x_{m-1}): \Z_{2}^m \rightarrow \Z_q.$$
Note that we always use $x_k$ to instead $y_k$ if $y_k$ is a Boolean variable in the whole paper.

The {\em (multivariate) generating function} of array $f(y_0, y_1, \cdots, y_{m-1})$ is defined by
\begin{equation}\label{array-gene}
F(\bm{z})=\sum_{y_0=0}^{p-1}\sum_{y_1=0}^{p-1}\cdots \sum_{y_{m-1}=0}^{p-1}\omega^{f(y_0, y_1, \cdots, y_{m-1})}z_0^{y_0}z_1^{y_1}\cdots z_{m-1}^{y_{m-1}}
\end{equation}
for $\bm{z}=(z_0, z_1, \cdots, z_{m-1})$. The term $z_0^{y_0}z_1^{y_1}\cdots z_{m-1}^{y_{m-1}}$ is denoted by $\bm{z}^{\bm{y}}$ for short in the whole paper.

From the definition above,  a sequence $f(t): \Z_L\rightarrow \Z_q$ can be regarded as an array of dimension $1$.

\subsection{CSS and CCC}
In this subsection, we introduce the definitions of CSS and CCC from the viewpoints of aperiodic correlation and the generating functions of the sequences, respectively.

\begin{definition}
For two $q$-ary sequences $\bm{f}_1$ and $\bm{f}_2$ of length $L$, the {\em aperiodic cross-correlation} of $\bm{f}_1$ and $\bm{f}_2$ at shift $\tau$ ($-L<\tau< L$) is defined by
$$C_{\bm{f}_1,\bm{f}_2}(\tau)=
\left\{
\begin{aligned}
&\sum_{t=0}^{L-1-\tau}{\omega^{\bm{f}_1(t+\tau)-\bm{f}_2(t)}},\
0\leq \tau<L,\\
&\sum_{t=0}^{L-1+\tau}{\omega^{\bm{f}_1(t)-\bm{f}_2(t-\tau)}},\
-L<\tau<0.
\end{aligned} \right.
$$
If $\bm{f}_1=\bm{f}_2=\bm{f}$, the {\em aperiodic autocorrelation} of  sequence $\bm{f}$ at shift $\tau$ is denoted by
$$C_{\bm{f}}(\tau)=C_{\bm{f},\bm{f}}(\tau).$$
\end{definition}

\begin{definition}
A set of $N$ sequences $S=\{\bm{f}_0, \bm{f}_1,\cdots, \bm{f}_{N-1}\}$ is called  a {\em complementary sequence set} (CSS) of size $N$ if
\begin{equation}\label{CSS-def-1}
\sum_{j=0}^{N-1}C_{\bm{f}_j}(\tau)=0\  \mbox{for}\  \tau\neq 0.
\end{equation}
\end{definition}
If the set size $N=2$, such a set is called a {\em Golay sequence pair} (GSP). Each sequence in GSP is called a {\em Golay sequence}.

Two CSSs $S_1=\{\bm{f}_{1,0}, \bm{f}_{1,1},\cdots, \bm{f}_{1,N-1}\}$ and $S_2=\{\bm{f}_{2,0}, \bm{f}_{2,1},\cdots, \bm{f}_{2,N-1}\}$ are said to be {\em mutually orthogonal}
if
\begin{equation}\label{CCC-def-1}
\sum_{j=0}^{N-1}C_{\bm{f}_{1,j},\bm{f}_{2,j}}(\tau)=0, \ \forall \tau.
\end{equation}
It is known that the number of CSSs which are
pairwise mutually orthogonal is at most equal to the number of sequences
in a CSS.

\begin{definition}
Let $S_i=\{\bm{f}_{i,0}, \bm{f}_{i,1},\cdots, \bm{f}_{i,N-1}\}$  be CSSs of size $N$ for $0\leq i<N$, which are pairwise mutually orthogonal. Such a collection of $S_i$ is called {\em complete mutually orthogonal complementary sets} (CMOCS) or {\em complete complementary codes} (CCC).
\end{definition}

The concept of CCC is better to view through a matrix. Let $S$ be a matrix where  the $i$th row is given by $S_i$, i.e.,
\begin{equation} \label{matrix1}
S=\begin{bmatrix}
  \bm{f}_{0,0} & \bm{f}_{0,1} & \cdots & \bm{f}_{0,N-1} \\
  \bm{f}_{1,0} & \bm{f}_{1,1} & \cdots & \bm{f}_{1,N-1} \\
  \vdots & \vdots & \ddots & \vdots \\
  \bm{f}_{N-1,0} & \bm{f}_{N-1,1} & \cdots & \bm{f}_{N-1,N-1} \\
\end{bmatrix}.
\end{equation}
Then $\{\bm{f}_{i,j}\}_{0\leq i,j<N}$ is a CCC if and only if  every row of the matrix $S$ is a CSS of size $N$, and any two rows  are mutually orthogonal.

\begin{example}\label{exam-1-1}
For $q=4, N=2$ and $L=4$, a matrix of sequences given by
\begin{equation*}
S=\begin{bmatrix}
  \bm{f}_{0,0}=(0,1,0,3) & \bm{f}_{0,1}=(0,1,2,1) \\
  \bm{f}_{1,0}=(0,3,0,1) & \bm{f}_{1,1}=(0,3,2,3)  \\
\end{bmatrix}
\end{equation*}
form a CCC of size $2$.
\end{example}

A useful tool for studying  CSS and CCC is the generating function $F(Z)$  of the  sequence $\bm{f}$.  We denote
the complex conjugate of $F(Z)$ by $\overline{F}(Z)$, i.e., $$\overline{F}(Z)=\sum_{t=0}^{L-1}\omega^{-f(t)}Z^t.$$
Suppose that  $F_1(Z)$ and $F_2(Z)$ are the generating functions of sequences $\bm{f}_1$ and $\bm{f}_2$, respectively. It is easy to verify
\begin{equation}\label{seq-correlation}
F_1(Z)\overline{F}_2(Z^{-1})=\sum_{\tau=1-N}^{N-1}C_{\bm{f}_1,\bm{f}_2}(\tau)Z^{\tau}.
\end{equation}
Then
$S=\{\bm{f}_0, \bm{f}_1,\cdots, \bm{f}_{N-1}\}$  forms a CSS if and only if their  generating functions
$\{F_0(Z), F_1(Z),$ $\cdots, F_{N-1}(Z)\}$ satisfy
\begin{equation}\label{CSS-def-2}
\sum_{j=0}^{N-1} F_j(Z)\overline{F}_j(Z^{-1})=NL.
\end{equation}
Moreover, two CSSs $S_1=\{\bm{f}_{1,0}, \bm{f}_{1,1},\cdots, \bm{f}_{1,N-1}\}$ and $S_2=\{\bm{f}_{2,0}, \bm{f}_{2,1},\cdots, \bm{f}_{2,N-1}\}$ are {\em mutually orthogonal} if and only if their respective generating functions $\{F_{i,0}(Z), F_{i,1}(Z),\cdots, F_{i, N-1}(Z)\}$ ($i=1,2$) satisfy
\begin{equation}\label{CCC-def-2}
\sum_{j=0}^{N-1} F_{1,j}(Z)\overline{F}_{2,j}(Z^{-1})=0.
\end{equation}
Thus the conditions given in the equations (\ref{CSS-def-1}) and (\ref{CCC-def-1}), defined by aperiodic correlation of the sequences,  are equivalent to the  conditions in  (\ref{CSS-def-2}) and (\ref{CCC-def-2}) defined  by their generating functions.

\subsection{CAS and CCA}

In this subsection, we generalize the concepts of CSS and CCC to CAS and CCA, respectively.

\begin{definition}
Let  $f_1(\bm{y})$ and $f_2(\bm{y})$ be two arrays from $\Z_p^m$ to $\Z_q$. For $\bm{\tau}=(\tau_0, \tau_1, \cdots ,\tau_{m-1})$ ($1-p\le \tau_k\le p-1$), \lq\lq$\bm{y}+\bm{\tau}$\rq\rq  is the element-wise addition of vectors over $\Z$. The aperiodic cross-correlation of arrays $f_1$ and $f_2$ at shift $\bm{\tau}=(\tau_0, \tau_1, \cdots ,\tau_{m-1})$ is defined by
$$C_{f_1,f_2}(\bm{\tau})=
\sum_{\bm{y}\in \mathbb{Z}_p^m}\omega^{f_1(\bm{y}+\bm{\tau})-f_2(\bm{y})},
$$
where $\omega^{f_1(\bm{y}+\bm{\tau})-f_2(\bm{y})}=0$ if  $f_1(\bm{y}+\bm{\tau})$ or $f_2(\bm{y})$ is not defined.
If $f_1=f_2=f$, then it is called the {\em aperiodic autocorrelation} of array $f$ at shift $\bm{\tau}$,  denoted by
$$C_f(\bm{\tau})=C_{f,f}(\bm{\tau}).$$
\end{definition}

Suppose that $F_1(\bm{z})$ and $F_2(\bm{z})$ are the generating functions of arrays $f_1(\bm{y})$ and $f_2(\bm{y})$, respectively. Here we denote $\bm{z}^{-1}=(z_0^{-1},z_1^{-1},\cdots, z_{m-1}^{-1})$ for short. Similar to the case of sequence, we have
\begin{equation}\label{array-correlation}
F_1(\bm{z})\cdot \overline{F}_2(\bm{z}^{-1})=\sum_{\bm{\tau}}C_{f_1,f_2}(\bm{\tau})z_0^{
\tau_0}z_1^{\tau_1}\cdots z_{m-1}^{\tau_{m-1}}.
\end{equation}

\begin{definition}
A set of arrays $\{f_0, f_1,\cdots, f_{N-1}\}$ from $\Z_p^m$ to $\Z_q$ is called  a {\em complementary array set} (CAS) of size $N$ if their generating functions $\{F_0(\bm{z}), F_1(\bm{z}),\cdots, F_{N-1}(\bm{z})\}$ satisfy
\begin{equation}\label{CAS-def-2}
\sum_{j=0}^{N-1}F_j(\bm{z})\cdot \overline{F}_j(\bm{z}^{-1})=N\cdot p^{m}.
\end{equation}
\end{definition}
If the set size $N=2$, such a set is called a {\em Golay array pair} (GAP). Each array (i.e.,  the function) in GAP is called a {\em Golay array}.

Two CASs $S_1=\{f_{1,0}, f_{1,1},\cdots, f_{1,N-1}\}$ and $S_2=\{f_{2,0}, f_{2,1},\cdots, f_{2,N-1}\}$ are said to be {\em mutually orthogonal}
if  their generating functions  $\{F_{i,0}(\bm{z}), F_{i,1}(\bm{z}),\cdots, F_{i, N-1}(\bm{z})\}$ ($i=1,2$) satisfy
\begin{equation}\label{CCA-def-2}
\sum_{j=0}^{N-1} F_{1,j}(\bm{z})\overline{F}_{2,j}(\bm{z}^{-1})=0.
\end{equation}

\begin{definition}
Let $S_i=\{f_{i,0}, f_{i,1},\cdots, f_{i,N-1}\}$ ($0\leq i<N$) be CASs  of size $N$, which  are pairwise mutually orthogonal. We call such a collection of $S_i$ ($0\leq i<N$) a {\em complete mutually orthogonal array set} or a {\em complete complementary arrays} (CCA).
\end{definition}

\begin{remark}
It is obvious that the equation (\ref{CAS-def-2}) is equivalent to $\sum_{j=0}^{N-1}C_{f_j}(\bm{\tau})=0$ for $\forall \bm{\tau}\neq \bm{0}$, and the equation (\ref{CCA-def-2}) is equivalent to $C_{f_1,f_2}(\bm{\tau})=0$ for $\forall \bm{\tau}$.
\end{remark}

\subsection{Relationship Between  Arrays and Sequences}

For an  array $f(\bm{y}): \Z_{p}^m \rightarrow \Z_q$, we define a sequence $f(t): \Z_{p^m}\rightarrow \Z_q$ by
  $$f(t)=f(\bm{y})\  \mbox{for}\ t=\sum_{k=0}^{m-1}y_k\cdot p^k.$$
We say that  the sequence $f(t)$ is evaluated by the function (or array) $f(\bm{y})$ in this paper.

      For an array $f(\bm{y}): \Z_{p}^m \rightarrow \Z_q$ and its  evaluated sequence $f(t): \Z_{p^m}\rightarrow \Z_q$, their respective generating functions $F(\bm{z})$ and $F(Z)$ can be connected by restricting $z_k=Z^{p^k}$. The derivation  is given as follows:
\begin{eqnarray*}
F(\bm{z})
&=&F(Z^{p^0}, Z^{p^1}, \cdots, Z^{p^{m-1}})\\
&=&\sum_{y_0,y_1,\cdots, y_{m-1}}\omega^{f(y_0,y_1,\cdots, y_{m-1})}\cdot Z^{y_0p^0}Z^{y_1p^1}\cdots Z^{y_{m-1}p^{m-1}}\\
&=&\sum_{y_0,y_1,\cdots, y_{m-1}}\omega^{f(y_0,y_1,\cdots, y_{m-1})}\cdot Z^{\sum_{k=0}^{m-1} y_kp^{k}}\\
&=&\sum_{t=0}^{L-1}\omega^{f(t)}\cdot Z^t\\
&=&F(Z).
\end{eqnarray*}

We summarize the above discussions in Figure \ref{fig-1}.
\begin{figure}[h]
	\centering
	$ \begin{gathered}\xymatrix@C=5.5cm@R=1cm{
		{\displaystyle{f(\bm{y})}
			\ar@{<->}[r]^{\textstyle{F(\bm{z})=\sum_{\bm{y}}{\omega^{f(\bm{y})}} \bm{z}^{\bm{y}}}} }
		\ar[d]^{\textstyle t=\sum_{k=0}^{m-1}y_k\cdot p^k}
		& F(\bm{z})\ar[d]^{\textstyle z_k=Z^{p^k}}\\
		f(t)\ar@{<->}[r]_{\textstyle{F(Z)=\sum_{t=0}^{p^m-1}\omega^{f(t)}Z^t}} & F(Z) \\
	}
	\end{gathered} $
	\caption{A Diagram of the Relationship between  Arrays and   Sequences, and the Respective Generating Functions}\label{fig-1}
\end{figure}
\begin{example}\label{exam-1-2}
For $q=4$, $p=m=2$, let $\omega=e^{\frac{\pi \sqrt{-1}}{2}}$ be a $4$th primitive root of unity and GBF $f(\bm{x})=f(x_0,x_1)=2x_0x_1+x_0$. The generating function of array $f(\bm{x})$ is given by
$$F(\bm{z})=F(z_0,z_1)=\omega^0+\omega^1z_0+\omega^0z_1+\omega^3z_0z_1.$$
The sequence evaluated by the GBF $f(\bm{x})$ is $\bm{f}=(0,1,0,3)$ with  generating function
$$F(Z)=\omega^0+\omega^1Z+\omega^0Z^2+\omega^3Z^3.$$
It is clear that $F(\bm{z})=F(Z)$ by setting $z_0=Z$ and $z_1=Z^2$.
\end{example}

Thus, if $\{f_0, f_1,\cdots, f_{N-1}\}$ constitutes a CAS, then its evaluated  sequences constitute a CSS. Moreover, recall the permutation $\pi$  acting on the array:
$\pi\cdot f=f(\pi\cdot\bm{y})=f(y_{\pi(0)},y_{\pi(1)},\cdots, y_{\pi(m-1)}).$
Since the equations (\ref{CAS-def-2}) and (\ref{CCA-def-2}) still hold if we apply a permutation acting on the arrays, it is easy to know that the set $\{f_0, f_1,\cdots, f_{N-1}\}$ is a CAS if and only if $\{\pi\cdot f_0, \pi\cdot f_1,\cdots, \pi\cdot f_{N-1}\}$ is also a CAS.

\begin{property}\label{prop-1}
\begin{itemize}
\item[(1)] If a set of arrays $\{f_0, f_1,\cdots, f_{N-1}\}$  is a CAS, then the sequences evaluated  by functions  $\{\pi\cdot f_0, \pi\cdot f_1,\cdots, \pi\cdot f_{N-1}\}$ form a CSS for $\forall \pi$.
\item[(2)]    Similarly, if $S_i=\{f_{i,0}, f_{i,1},\cdots, f_{i,N-1}\}$ ($0\leq i<N$) is a CCA, the sequences evaluated by functions
$S'_i=\{\pi\cdot f_{i,0}, \pi\cdot f_{i,1},\cdots, \pi\cdot f_{i,N-1}\}$ ($0\leq i<N$) form a CCC for $\forall \pi$.
\end{itemize}
\end{property}

From Property \ref{prop-1}, we can construct a large number of CSSs from a CAS (or a large number of CCCs from a CCA). This is one of the main reasons that we study the CAS and CCA instead of CSS and CCC.

\subsection{Special Case for $p=2$}
 We set $p=2$ in this subsection.  For GBFs $f_1, f_2$, and any affine GBF $f'=\sum_{k=0}^{m-1}c_kx_k+c'$ ($c_k, c'\in \Z_q$), we have
\begin{equation}\label{affine-fun}
C_{f_1+f',f_2+f'}(\bm{\tau})=
\omega^{f'(\bm{\tau})-c'}\sum_{\bm{x}\in \mathbb{Z}_2^m}\omega^{f_1(\bm{x}+\bm{\tau})-f_2(\bm{x})}=\omega^{f'(\bm{\tau})-c'}C_{f_1, f_2}(\bm{\tau}) .
\end{equation}
Then the  following assertions are  immediately obtained from the definitions.

\begin{property}\label{prop-2}
For arbitrary permutation $\pi$ and affine function $f'$,
\begin{itemize}
\item[(1)] if a set of GBFs $\{f_0, f_1,\cdots, f_{N-1}\}$ is a CAS,  the sequences evaluated by $\{\pi\cdot f_0+f', \pi\cdot f_1+f',\cdots, \pi\cdot f_{N-1}+f'\}$
form a CSS; and
\item[(2)] if $S_i=\{f_{i,0}, f_{i,1},\cdots, f_{i,N-1}\}$ ($0\leq i<N$) is a CCA, the sequences evaluated by $S'_i=\{\pi\cdot f_{i,0}+f', \pi\cdot f_{i,1}+f',\cdots, \pi\cdot f_{i,N-1}+f'\}$ ($0\leq i<N$) form  a CCC.
\end{itemize}
\end{property}

Note that the above property was  proved in  \cite[Lemma 8]{Array2} for the pair case.
The process from the array  $f$ to the sequences evaluated by $(\pi\cdot f_0+f')$ in this paper is actually equivalent to the simplified  three-stage process in \cite{Array2}.

From Property \ref{prop-2}, we can construct a large number of CSSs (or CCCs) from a CAS (or CCA). On the other hand, the set $\{\pi\cdot f+f'\}$  where $\pi$ can be any permutation  and $f'$ can be  any affine GBF, must be comprised of some cosets of the first-order GRM code. This is  another strong supporting evidence that we investigate the constructions for CSSs through the constructions of  CASs.

For a given GBF $f$, let $N_0$ be the number of pairs $(\pi, f')$ such that $\pi\cdot f+f'=f$, and $N_1$ be  the number of functions in the set $\{\pi\cdot f+f'\}$. From the orbit-stabilizer theorem \cite{Serge2002}, we have
$$N_1=\frac{m!\cdot q^{m+1}}{N_0}.$$

\begin{example}\label{exam-1-3}
Let $q$ be an even integer. It is known that  Rudin-Shapiro function $f(\bm{x})=\frac{q}{2}\sum_{k=0}^{m-2}x_kx_{k+1}$ and $f(\bm{x})+\frac{q}{2}x_0$ form a CAP. From Property \ref{prop-2}, the sequences evaluated by
$$\left\{
\begin{aligned}
&\pi\cdot f+f'=\frac{q}{2}\sum_{k=1}^{m-1} x_{\pi(k-1)}x_{\pi(k)}+\sum_{k=0}^{m-1}c_kx_{k}+c',\\
&\pi\cdot f+f'+x_{\pi(0)}\\
\end{aligned}\right.
$$
form a GSP for  $\forall \pi, c_k, c'\in \Z_q$, which agree with the standard Golay pair in \cite{DavisJedwab99}. Moreover, the number of the standard Golay sequences  can be  calculated by $N_1=\frac{1}{2}m!q^{m+1}$ for $N_0=2$.
\end{example}

\section{Para-unitary Matrices and Hadamard Matrices}

For $0\leq i,j \leq N-1$, let $f_{i,j}(\bm{y})$ be an $m$-dimensional $q$-ary arrays of size $\underbrace{p\times p \times \cdots \times p}_m$, i,e., $f_{i,j}(\bm{y})$ is a function from $\Z_p^m$ to $\Z_q$. We introduce three types of matrices.

The first type of matrices is called the {\em function matrices},  denoted by $\bm{{\widetilde{M}}}(\bm{y})$ where the entries are denoted as ${\widetilde{{M}}}_{i,j}(\bm{y})={f}_{i,j}(\bm{y})$ for $0\leq i,j \leq N-1$, i.e.,
\begin{equation}\label{func-matrix}
\bm{{\widetilde{M}}}({\bm{y}})=	\begin{bmatrix}
f_{0,0}({\bm{y}})&f_{0,1}({\bm{y}})&\dots&f_{0,N-1}({\bm{y}})\\
f_{1,0}({\bm{y}})&f_{1,1}({\bm{y}})&\dots&f_{1,N-1}({\bm{y}})\\
\vdots             &\vdots             &\ddots&\vdots             \\
f_{N-1,0}({\bm{y}})&f_{N-1,1}({\bm{y}})&\dots&f_{N-1,N-1}({\bm{y}})\\
\end{bmatrix}.
\end{equation}
The permutation $\pi$ action on the matrix $\bm{{\widetilde{M}}}(\bm{y})$ is denoted by
$\pi\cdot \widetilde{\bm{M}}(\bm{y})$ where each entry is given by
$$(\pi\cdot \widetilde{\bm{M}}(\bm{y}))_{i,j}=\pi\cdot f_{i,j}(\bm{y})$$
for $0\leq i,j\leq N-1$.

The second type of matrices is called the {\em generating (polynomial) matrices},  denoted by $\bm{M}(\bm{z})$ where each entry is given by ${{M}}_{i,j}(\bm{z})={F}_{i,j}(\bm{y})$, the generating function of array ${f}_{i,j}(\bm{y})$, i.e.,
\begin{equation}\label{gene-matrix}
\bm{M}({\bm{z}})=	\begin{bmatrix}
F_{0,0}({\bm{z}})&F_{0,1}({\bm{z}})&\dots&F_{0,N-1}({\bm{z}})\\
F_{1,0}({\bm{z}})&F_{1,1}({\bm{z}})&\dots&F_{1,N-1}({\bm{z}})\\
\vdots             &\vdots             &\ddots&\vdots             \\
F_{N-1,0}({\bm{z}})&F_{N-1,1}({\bm{z}})&\dots&F_{N-1,N-1}({\bm{z}})\\
\end{bmatrix}.
\end{equation}
The third type of the matrices is denoted by $\bm{M}(\bm{y})$ where the entries are ${{M}}_{i,j}(\bm{y})=\omega^{{f}_{i,j}(\bm{y})}$, i.e.,
\begin{equation}\label{polyphase-func-matrix}
{\bm{{M}}}(\bm{y})=	\begin{bmatrix}
\omega^{{f}_{0,0}(\bm{y})}&\omega^{{f}_{0,1}(\bm{y})}&\dots&\omega^{{f}_{0,N-1}(\bm{y})}\\
\omega^{{f}_{1,0}(\bm{y})}&\omega^{{f}_{1,1}(\bm{y})}&\dots&\omega^{{f}_{1,N-1}(\bm{y})}\\
\vdots             &\vdots             &\ddots&\vdots             \\
\omega^{{f}_{N-1,0}(\bm{y})}&\omega^{{f}_{N-1,1}(\bm{y})}&\dots&\omega^{{f}_{N-1,N-1}(\bm{y})}\\
\end{bmatrix}.
\end{equation}
The matrix $\bm{M}(\bm{y})$ will be frequently used in the proofs of the paper, since the generating matrix can be written in the form of a multivariate polynomial matrix
\begin{equation}\label{gene-matrix-dec}
\bm{M}(\bm{z})=\sum_{0\leq y_0, y_1, \cdots, y_{m-1}< p}\bm{M}(\bm{y})\bm{z}^{\bm{y}},
\end{equation}
where $\bm{M}(\bm{y})$ is the coefficient matrix of $\bm{M}(\bm{z})$ of the monomial term $z_0^{y_0}z_1^{y_1}\cdots z_{m-1}^{y_{m-1}}$.

\subsection{Multivariate Para-unitary Matrices and CCA}

A multivariate para-unitary matrix is straightforwardly generalized from a univariate para-unitary matrix, which plays a central role in signal processing, in particular in the areas of filter banks and wavelets.

\begin{definition}
Let $\bm{M}(\bm{z})$ be a square multivariate polynomial matrix of order $N$. If
\begin{equation*}
\bm{M}(\bm{z})\cdot\bm{M}^{\dagger}(\bm{z}^{-1})=c\cdot \bm{I}_N,
\end{equation*}
where $(\cdot)^\dagger$ denotes the Hermitian transpose and $c$ is a real constant.
We say that $\bm{M}(\bm{z})$ is a {\em multivariate para-unitary  (PU) matrix}.
\end{definition}

Recall the definitions of function matrices and generating matrices in (\ref{func-matrix}) and (\ref{gene-matrix}), respectively. The following assertions follow immediately from the definition of PU matrices and Properties \ref{prop-1} and \ref{prop-2}.

\begin{theorem}\label{thm-1}
If a PU matrix $\bm{M}(\bm{z})$ is a generating matrix of function matrix $\bm{{\widetilde{M}}}(\bm{y})$, we have
\begin{enumerate}
\item[(1)] The arrays $\{{f}_{i,j}(\bm{y})\}$ in function matrix $\bm{{\widetilde{M}}}(\bm{y})$ form a CCA.
\item[(2)] For $\forall \pi$,  the sequences evaluated by the functions in $\pi\cdot \widetilde{\bm{M}}(\bm{y})$ form a CCC.
\item[(3)] The arrays $\{{f}_{i,j}(\bm{y})\}$ in each row (or column) of $\bm{{\widetilde{M}}}(\bm{y})$  form a CAS.
\item[(4)] For $\forall \pi$, the sequences evaluated by the functions in  each row (or column) of the matrix $\pi\cdot \widetilde{\bm{M}}(\bm{y})$ form a  $q$-ary CSS of size $N$ and length $p^m$.
\item[(5)] If $p=2$,  for $\forall \pi$ and affine GBF $f'$, the sequences evaluated by the functions in $\left(\pi\cdot \widetilde{\bm{M}}(\bm{x})+f'(x)\cdot \bm{J}_N\right)$ form a CCC, and the sequences evaluated by the functions in  each row (or column) of the matrix $\left(\pi\cdot \widetilde{\bm{M}}(\bm{x})+f'(x)\cdot \bm{J}_N\right)$ form a  $q$-ary CCS of size $N$ and length $2^m$.
\end{enumerate}
\end{theorem}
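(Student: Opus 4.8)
\textit{Proof proposal.}
The plan is to reduce every assertion to the single matrix identity $\bm{M}(\bm{z})\bm{M}^{\dagger}(\bm{z}^{-1})=c\cdot\bm{I}_N$ and to read off the correlation conditions entry by entry. First I would expand the $(k,l)$ entry of the product. Since $\bm{M}(\bm{z})$ is the generating matrix of $\bm{{\widetilde{M}}}(\bm{y})$, its $(i,j)$ entry is $F_{i,j}(\bm{z})$, and the Hermitian transpose gives $(\bm{M}^{\dagger}(\bm{z}^{-1}))_{j,l}=\overline{F}_{l,j}(\bm{z}^{-1})$; hence the $(k,l)$ entry equals $\sum_{j=0}^{N-1}F_{k,j}(\bm{z})\overline{F}_{l,j}(\bm{z}^{-1})$. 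Applying the correlation identity (\ref{array-correlation}) to each summand turns this into $\sum_{\bm{\tau}}\bigl(\sum_{j=0}^{N-1}C_{f_{k,j},f_{l,j}}(\bm{\tau})\bigr)\bm{z}^{\bm{\tau}}$, which is the object I then compare against $c\cdot\delta_{k,l}$.

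Next I would match coefficients of $\bm{z}^{\bm{\tau}}$ on both sides of the PU identity. To pin down the constant I isolate the $\bm{\tau}=\bm{0}$ coefficient: because $C_{f_{k,j}}(\bm{0})=\sum_{\bm{y}\in\Z_p^m}1=p^m$, the diagonal entries force $c=N\cdot p^{m}$. Reading off the remaining coefficients then yields, for $k=l$, the relation $\sum_{j}C_{f_{k,j}}(\bm{\tau})=0$ for all $\bm{\tau}\neq\bm{0}$, and for $k\neq l$, the relation $\sum_{j}C_{f_{k,j},f_{l,j}}(\bm{\tau})=0$ for all $\bm{\tau}$. By the Remark following the CCA definition (equivalently, by (\ref{CAS-def-2}) and (\ref{CCA-def-2})), the first relation says each row of $\bm{{\widetilde{M}}}(\bm{y})$ is a CAS and the second says any two rows are mutually orthogonal; together these give that the rows form a CCA, which is part (1), while part (3) is precisely the diagonal half of the same computation. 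For the column claims I would observe that since $\bm{M}$ is square, $\bm{M}(\bm{z})\bm{M}^{\dagger}(\bm{z}^{-1})=c\cdot\bm{I}_N$ forces $\bm{M}^{\dagger}(\bm{z}^{-1})\bm{M}(\bm{z})=c\cdot\bm{I}_N$ as well, and repeating the entrywise argument on this second product delivers the CAS and orthogonality conditions for columns.

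Finally, parts (2), (4), and (5) follow with no further correlation computation. Once the arrays in $\bm{{\widetilde{M}}}(\bm{y})$ form a CCA (respectively a CAS in each row or column), I apply Property \ref{prop-1}(2) and Property \ref{prop-1}(1) to conclude that the sequences evaluated from $\pi\cdot f_{i,j}$ form a CCC (respectively a CSS) for every permutation $\pi$. In the special case $p=2$ I instead invoke Property \ref{prop-2}, which additionally absorbs the affine GBF $f'$; this is legitimate because, as recorded in (\ref{affine-fun}), adding $f'$ to every entry multiplies each correlation by the common unimodular factor $\omega^{f'(\bm{\tau})-c'}$ and therefore preserves both the CAS and the mutual-orthogonality conditions.

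I expect the genuinely delicate points to be bookkeeping rather than conceptual: keeping the index transposition in $\bm{M}^{\dagger}(\bm{z}^{-1})$ straight so that the inner summation runs over the shared column index $j$ (and thus matches the form of (\ref{CAS-def-2}) and (\ref{CCA-def-2})), and justifying $c=N\cdot p^{m}$ by extracting the zero-shift coefficient rather than assuming its value. Everything beyond that is a direct translation between the PU identity and the generating-function characterizations of CAS and CCA, combined with the already-established Properties \ref{prop-1} and \ref{prop-2}.
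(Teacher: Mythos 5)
Your proposal is correct and follows exactly the route the paper intends: the paper's own proof is a one-line assertion that the theorem "follows immediately from the definition of PU matrices and Properties \ref{prop-1} and \ref{prop-2}," and your argument is precisely that reasoning spelled out — entrywise comparison of the PU identity against the generating-function characterizations (\ref{CAS-def-2}) and (\ref{CCA-def-2}), the observation that $\bm{M}^{\dagger}(\bm{z}^{-1})\bm{M}(\bm{z})=c\cdot\bm{I}_N$ handles the column claims, and Properties \ref{prop-1} and \ref{prop-2} for parts (2), (4), and (5). Your extraction of $c=N\cdot p^m$ from the zero-shift coefficient is a detail the paper leaves implicit, but it is consistent with the constant appearing in Theorem \ref{thm-2}.
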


According to Theorem \ref{thm-1}, a large number of CSSs and CCCs can be constructed if $\bm{M}(\bm{z})$ is not only a PU matrix, but also the generating matrix of a
function matrix ${\bm{\widetilde{M}}}(\bm{y})$. Such a matrix $\bm{M}(\bm{z})$ is called a {\em desired PU matrix} in the rest of the paper.

One principal objective of this paper is to find the desired PU matrices.
However, each entry of a desired PU matrix $\bm{M}(\bm{z})$, denoted by $M_{i,j}(\bm{z})$,
is a multivariate polynomial, but not a function from $\Z_p^m$ to $\Z_q$. So another principal objective of this paper is to exact the function matrix $\bm{{\widetilde{M}}}(\bm{y})$ from its generating matrix $\bm{M}(\bm{z})$.

\subsection{Butson-Type Hadamard Matrices}

A complex Hadamard matrix of order $N$ is a complex matrix $\bm{H}$ of order $N$
satisfying $|H_{i,j}|=1$ ($0\leq i,j<N$) and
$\bm{H}\cdot\bm{H}^{\dagger}=N\cdot \bm{I}_N$.
A complex Hadamard matrix $\bm{H}$ of order $N$ is
called {\em Butson-type} \cite{Butson62} if all the
entries of \(\bm{H}\) are $q$th roots of unity.  For fixed $N$ and $q$,  we denote the set of all Butson-type
Hadamard (BH) matrices by $H(q,N)$.

It is obvious that the BH matrix is the simplest desired PU matrix for $p=m=1$, i.e., the entry of the desired PU matrix is associated with a sequence of length $1$.

Two BH matrices, $\bm{H}_1, \bm{H}_2\in H(q,N)$ are
called {\em equivalent}, denoted by $\bm{H}_1\simeq \bm{H}_2$, if there exist
diagonal unitary matrices $\bm{Q}_1, \bm{Q}_2$ where each diagonal entry
 is a $q$th root of unity and permutation matrices \(\bm{P}_1\),
\(\bm{P}_2\) such that:

\begin{equation}\label{Hadamard-1}
\bm{H}_1 = \bm{P}_1 \cdot\bm{Q}_1 \cdot\bm{H}_2\cdot \bm{Q}_2 \cdot \bm{P}_2.
\end{equation}

\begin{example}\label{exam-1}
For $q$ odd, $H(q,2)=\varnothing$.
For $q$ even, all Hadamard  matrices in $H(q,2)$ are
equivalent to the Walsh matrix of order 2:
$$
\begin{bmatrix}
1 & 1 \\
  1 &-1
\end{bmatrix}.
$$
\end{example}

We introduce the Kronecker product of matrices.
\begin{definition}
If $\bm{A}=\{a_{i,j}\}$ is a square matrix of order $n_1$ and $\bm{B}$ is a square matrix of order $n_2$, then the Kronecker product  $\bm{A}\otimes\bm{B}$ is a block matrix of  order $n_1\cdot n_2$, given by
\begin{equation*}
 \bm{A}\otimes\bm{B}=\begin{bmatrix}a_{0,0}\cdot\bm{B}&a_{0,1}\cdot\bm{B}&\cdots&a_{0,n_1-1}\cdot\bm{B}\\
 a_{1,0}\cdot\bm{B}&a_{1,1}\cdot\bm{B}&\cdots&a_{1,n_1-1}\cdot\bm{B}\\
 \vdots&\vdots&\ddots&\vdots\\
 a_{n_1-1,0}\cdot\bm{B}&a_{n_1-1,1}\cdot\bm{B}&\cdots&a_{n_1-1,n_1-1}\cdot\bm{B}
 \end{bmatrix}.
\end{equation*}
\end{definition}

\begin{example}\label{exam-2}
All binary Hadamard  matrices in  \(H(2,4)\) are
equivalent to the Hadamard matrix
$$
\begin{bmatrix}
1 & 1 &1&1\\1 & -1&1&-1\\
 1 &1&-1&-1\\
  1 &-1&-1& 1
\end{bmatrix},
$$
which is the Kronecker product of Walsh matrix of order 2.
\end{example}

\begin{example}\label{exam-3}
All  quaternary  BH  matrices in  \(H(4,4)\) are
equivalent to one of  the following two
BH matrices:
$$
\begin{bmatrix}
1 & 1 &1&1\\1 & -1&1&-1\\
 1 &1&-1&-1\\
  1 &-1&-1& 1
\end{bmatrix}
~and~ \begin{bmatrix}
1 & 1 &1&1\\1 &\sqrt{-1} &-1&-\sqrt{-1}\\
 1 &-1&1&-1\\
  1 &-\sqrt{-1}&-1& \sqrt{-1}
\end{bmatrix}.
$$
The first matrix is the Kronecker product of Walsh matrix of order 2, and the second matrix is the Fourier matrix of order 4.
\end{example}

\subsection{Phase Matrices of Hadamard Matrices}\label{phase-hadamard}

For a BH matrix $\bm{H}\in H(q,N)$, define its phase matrix $\widetilde{\bm{H}}$ by
$\widetilde{H}_{i,j}=s$ if $H_{i,j}=\omega^s$. Actually, the phase matrix $\widetilde{\bm{H}}$ is the function matrix of the generating matrix $\bm{H}$. For given $N$ and $q$, we denote $\widetilde{H}(q,N)$ the set consisting of all phase matrices of BH matrices.

\begin{definition}
Let $\widetilde{\bm{H}}_1$ and $\widetilde{\bm{H}}_2$ be two phase matrices induced by BH matrices $\bm{H}_1$ and $\bm{H}_2$, respectively. The equivalence of the phase matrices is defined by the equivalence of the Hadamard matrices, i.e., $\widetilde{\bm{H}}_1\simeq \widetilde{\bm{H}}_2$ if and only if $\bm{H}_1\simeq \bm{H}_2$.
\end{definition}

%\begin{definition}
%Let $\widetilde{H}_1$ and $\widetilde{H}_2$ be two phase matrix induced by Butson type Hadamard matrices $H_1$ and %$H_2$, respectively. Define the operator between two phase matrix $\odot$.
%$$\widetilde{H}_1\odot \widetilde{H}_2=\widetilde{H_1H_2}.$$

%The equivalence of the phase matrix is defined by the equivalence of the Hadamard matrices, i.e., %$\widetilde{H}_1\simeq \widetilde{H}_2$ if and only if $H_1\simeq H_2$.
%\end{definition}

%\begin{equation}\label{Hadamard2}
%H_1 = D_1 P_1 H_2 P_2 D_2.
%\end{equation}

\begin{example}\label{exam-4}
For $q$ odd, $\widetilde{H}(q,2)=\varnothing$.
For $q$ even, all phase  matrices in $\widetilde{H}(q,2)$ are
equivalent to
$$
\begin{bmatrix}
0 & 0 \\
  0 &q/2
\end{bmatrix}.
$$
\end{example}

\begin{example}\label{exam-5}
All binary phase  matrices in  \(\widetilde{H}(2,4)\) are
equivalent to the $4\times 4$ phase matrix:
$$
\begin{bmatrix}
0 & 0 &0&0\\
0 & 1&0&1\\
 0 &0&1&1\\
  0 &1&1& 0
\end{bmatrix}.
$$
\end{example}

\begin{example}\label{exam-6}
All quaternary   phase  matrices in  \(\widetilde{H}(4,4)\) are
equivalent to one of  the following two
 $4\times 4$ phase matrices:
$$
\begin{bmatrix}
0 & 0 &0&0\\
0 & 2&0&2\\
 0 &0&2&2\\
  0 &2&2& 0
\end{bmatrix}
~and~ \begin{bmatrix}
0 & 0 &0&0\\
0 &1 &2&3\\
 0 &2&0&2\\
 0 &3&2& 1
\end{bmatrix}.
$$
\end{example}

The representatives of the equivalent classes in $\widetilde{H}(q,N)$ play a crucial role in the process of extracting explicit forms of the functions from the desired PU matrix. We denoted $S_{\widetilde{H}}(q, N)$ the set consisting of all the representatives of the equivalent classes of the phase matrices.

\section{Seed PU Matrices and Corresponding Function Matrices}

In this section and the next section, we always assume $N=p$. We will first introduce a construction of the desired PU matrices, and then develop a new approach to study the explicit function forms of arrays in  corresponding function matrices.

\subsection{Construction of Seed PU Matrices}

Let $\bm{D}(z)$, called the {\em delay matrix},  be a $p$ by $p$ diagonal
matrix with the form $\bm{D}(z)=diag\{z^0, z^1, z^2,$ $\cdots, z^{p-1}\}$, i.e.,
\begin{equation}\label{delay-matrix}
\bm{D}(z)=\begin{bmatrix}z^0&0&0&0&\cdots&0\\0&z^1&0&0&\cdots&0
\\0&0&z^2&0&\cdots&0\\0&0&0&z^3&\cdots&0\\\vdots&\vdots&\vdots&\vdots&\ddots&\vdots\\0&0&0& 0&\cdots&z^{p-1}\end{bmatrix}.
\end{equation}
Let $\bm{H}^{\{k\}}$ be an arbitrary BH matrix chosen from  $H(q, N)$ for $0\leq k\leq m$. Define a multivariate polynomial matrix $\bm{M}(\bm{z})$ as follow.
\begin{eqnarray}\label{seed-PU}
\bm{M}(\bm{z})&=&\bm{H}^{\{0\}}\cdot \bm{D}(z_0)\cdot
\bm{H}^{\{1\}}\cdot\bm{D}(z_1)\cdots \bm{H}^{\{m-1\}}\cdot
\bm{D}(z_{m-1})\cdot \bm{H}^{\{m\}}\\
&=&\left(\prod_{k=0}^{m-1}\left(\bm{H}^{\{k\}}\cdot \bm{D}(z_k)\right)\right)\cdot \bm{H}^{\{m\}}\nonumber.
\end{eqnarray}

\begin{theorem}\label{thm-2}
$\bm{M}(\bm{z})$ defined above is a desired PU matrix, i.e.,
\begin{itemize}
	\item[(1)] $\bm{M}(\bm{z})\cdot\bm{M}^{\dagger}(\bm{z}^{-1})=N^{m+1}\cdot \bm{I}_N$;
	\item[(2)] Each entry of $\bm{M}(\bm{z})$ can be expressed by the generating function of an array $f(\bm{y}): \Z_p^m \rightarrow \Z_q$.
\end{itemize}
\end{theorem}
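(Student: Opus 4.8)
The plan is to verify the two claims separately, both by direct manipulation of the product in (\ref{seed-PU}). For part (1), the essential ingredients are that each factor $\bm{H}^{\{k\}}$ is a BH matrix, so $\bm{H}^{\{k\}}\cdot(\bm{H}^{\{k\}})^{\dagger}=N\cdot\bm{I}_N$, and that the delay matrix $\bm{D}(z)$ is itself para-unitary. Indeed, since $\bm{D}(z)=diag\{z^0,z^1,\cdots,z^{p-1}\}$ has real (unit) coefficients and is diagonal, the Hermitian transpose leaves it unchanged, so $\bm{D}^{\dagger}(z^{-1})=\bm{D}(z^{-1})$ and hence $\bm{D}(z)\cdot\bm{D}^{\dagger}(z^{-1})=\bm{I}_N$. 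First I would write $\bm{M}^{\dagger}(\bm{z}^{-1})$ by reversing the order of the factors and transposing each, then compute $\bm{M}(\bm{z})\cdot\bm{M}^{\dagger}(\bm{z}^{-1})$ by collapsing the product from the centre outward: the innermost pair $\bm{H}^{\{m\}}\cdot(\bm{H}^{\{m\}})^{\dagger}=N\bm{I}_N$ reduces to a scalar, the surrounding delay factors cancel as $\bm{D}(z_{m-1})\cdot\bm{D}(z_{m-1}^{-1})=\bm{I}_N$, and each subsequent Hadamard pair contributes a further factor of $N$. Iterating through all $m+1$ Hadamard matrices yields $N^{m+1}\cdot\bm{I}_N$.

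For part (2) I would expand the matrix product entrywise. The $(i,j)$ entry of $\bm{M}(\bm{z})$ is a sum over intermediate indices $a_0,a_1,\cdots,a_{m-1}$ running over $\Z_p$ (the common order of the Hadamard and delay matrices, using $N=p$):
\begin{equation*}
M_{i,j}(\bm{z})=\sum_{a_0,\cdots,a_{m-1}}H^{\{0\}}_{i,a_0}H^{\{1\}}_{a_0,a_1}\cdots H^{\{m-1\}}_{a_{m-2},a_{m-1}}H^{\{m\}}_{a_{m-1},j}\cdot z_0^{a_0}z_1^{a_1}\cdots z_{m-1}^{a_{m-1}},
\end{equation*}
since each diagonal delay factor $\bm{D}(z_k)$ contributes the monomial $z_k^{a_k}$ while pinning the incoming and outgoing indices at that stage to coincide. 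The decisive observation is that the monomial $z_0^{a_0}\cdots z_{m-1}^{a_{m-1}}$ determines the tuple $(a_0,\cdots,a_{m-1})$ uniquely, so each monomial $\bm{z}^{\bm{y}}$ with $0\le y_k<p$ occurs exactly once in this sum. Comparing with the decomposition (\ref{gene-matrix-dec}), the coefficient matrix $\bm{M}(\bm{y})$ has $(i,j)$ entry equal to a single product of $m+1$ entries of BH matrices, hence a $q$th root of unity. Writing each factor in phase form via $\widetilde{\bm{H}}^{\{k\}}$ exhibits this entry as $\omega^{f_{i,j}(\bm{y})}$ with the explicit array
\begin{equation*}
f_{i,j}(\bm{y})=\widetilde{H}^{\{0\}}_{i,y_0}+\sum_{k=1}^{m-1}\widetilde{H}^{\{k\}}_{y_{k-1},y_k}+\widetilde{H}^{\{m\}}_{y_{m-1},j}\pmod q,
\end{equation*}
so $\bm{M}(\bm{z})$ is the generating matrix of a genuine function matrix from $\Z_p^m$ to $\Z_q$.

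The routine part is the telescoping in (1); the conceptual crux is the uniqueness of the monomial-to-index correspondence in (2), which is exactly where the standing assumption $N=p$ is used: it guarantees that the exponents produced by the delay matrices fill out precisely the index set of the Hadamard matrices, so that no two summation paths collide on the same monomial and every coefficient collapses to a single root of unity rather than a sum. I expect the only care needed is the index bookkeeping---tracking which intermediate index is pinned by each diagonal $\bm{D}(z_k)$---and confirming that the phase-matrix substitution is legitimate, i.e.\ that the product of $q$th roots of unity reduces to $\omega$ raised to the sum of phases modulo $q$.
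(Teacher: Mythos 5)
Your proposal is correct and follows essentially the same route as the paper: part (1) is the identical telescoping argument from $\bm{H}^{\{k\}}\cdot\bm{H}^{\{k\}\dagger}=N\bm{I}_N$ and $\bm{D}(z_k)\cdot\bm{D}^{\dagger}(z_k^{-1})=\bm{I}_N$, and part (2) is the paper's Lemmas \ref{lem-1} and \ref{lem-2} carried out entrywise rather than through the coefficient matrices $\prod_{k}\bigl(\bm{H}^{\{k\}}\bm{E}_{y_k}\bigr)\bm{H}^{\{m\}}$, yielding the same single-product formula $\omega^{f_{i,j}(\bm{y})}=H^{\{0\}}_{i,y_0}\bigl(\prod_{k=1}^{m-1}H^{\{k\}}_{y_{k-1},y_k}\bigr)H^{\{m\}}_{y_{m-1},j}$. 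The only cosmetic difference is organizational (summation over intermediate indices versus the $\bm{E}_{y_k}$ decomposition), and your identification of where the distinct-monomial observation enters is consistent with the paper's argument.
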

$\bm{M}(\bm{z})$ defined in (\ref{seed-PU}) is called the {\em seed PU matrix}.

\subsection{The Proof for Seed PU Matrices}

The first assertion in Theorem \ref{thm-2} on the para-unitary condition follows from
$$\bm{H}^{\{k\}}\cdot\bm{H}^{\{k\}\dagger}=N\cdot \bm{I}_N$$
and
$$\bm{D}(z_k)\cdot\bm{D}^{\dagger}(z_k^{-1})=\bm{I}_N.$$

We expand $\bm{M}(\bm{z})$ in the  form:
\begin{equation}\label{matrix-ex}
\bm{M}(\bm{z})=\sum_{y_0=0}^{p-1}\sum_{y_1=0}^{p-1}\cdots \sum_{y_{m-1}=0}^{p-1}\bm{M}(y_0, y_1, \cdots ,y_{m-1})\bm{z}^{\bm{y}}.
\end{equation}
Note that the second assertion in Theorem \ref{thm-2} is valid if and only if all the entries of $\bm{M}(y_0, y_1, \cdots, y_{m-1})$ are $q$th roots of unity for $0\leq y_0, y_1, \cdots, y_{m-1}\leq p-1$.

The coefficient matrices $\bm{M}(y_0, y_1, \cdots ,y_{m-1})$ can be determined by the following lemma.

\begin{lemma}\label{lem-1}
The coefficient matrix $\bm{M}(y_0, y_1, \cdots, y_{m-1})$ can be determined by BH matrices as follow.
\begin{equation}\label{co-matrix}
\bm{M}(y_0, y_1, \cdots, y_{m-1})=\prod^{m-1}_{k=0}\left(\bm{H}^{\{k\}}\cdot
\bm{E}_{y_k}\right)\cdot \bm{H}^{\{m\}}.
\end{equation}
\end{lemma}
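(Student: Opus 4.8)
The plan is to expand each delay matrix into a sum of single-entry matrices and then read off the coefficient of each monomial directly, exploiting the fact that the variables $z_0,\ldots,z_{m-1}$ are separated among the factors of the product (\ref{seed-PU}).

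First I would rewrite each delay matrix as a linear combination of the single-entry matrices $\bm{E}_{y_k}$. Since $\bm{D}(z_k)=diag\{z_k^0,z_k^1,\ldots,z_k^{p-1}\}$ places $z_k^{y_k}$ in the $(y_k,y_k)$ position and zeros elsewhere, we have
\begin{equation*}
\bm{D}(z_k)=\sum_{y_k=0}^{p-1}z_k^{y_k}\bm{E}_{y_k}.
\end{equation*}
Substituting this identity for each $k$ into the definition (\ref{seed-PU}) of $\bm{M}(\bm{z})$ and using distributivity of matrix multiplication over finite sums, I would expand the product into
\begin{equation*}
\bm{M}(\bm{z})=\sum_{y_0=0}^{p-1}\cdots\sum_{y_{m-1}=0}^{p-1}\left(\prod_{k=0}^{m-1}\left(\bm{H}^{\{k\}}\bm{E}_{y_k}\right)\cdot\bm{H}^{\{m\}}\right)z_0^{y_0}z_1^{y_1}\cdots z_{m-1}^{y_{m-1}},
\end{equation*}
where the scalars $z_k^{y_k}$ are pulled to the front and the left-to-right order of the matrix factors is preserved.

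The key point I would stress is that each variable $z_k$ occurs in exactly one factor, namely $\bm{D}(z_k)$. Hence distinct exponent vectors $(y_0,\ldots,y_{m-1})$ give distinct monomials $\bm{z}^{\bm{y}}$ with no merging or cancellation, so the display above is precisely the monomial expansion of $\bm{M}(\bm{z})$. Matching it term by term against the defining expansion (\ref{matrix-ex}) and reading off the coefficient of $\bm{z}^{\bm{y}}$ yields
\begin{equation*}
\bm{M}(y_0,\ldots,y_{m-1})=\prod_{k=0}^{m-1}\left(\bm{H}^{\{k\}}\bm{E}_{y_k}\right)\cdot\bm{H}^{\{m\}},
\end{equation*}
which is the claimed formula.

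I do not expect a serious obstacle; the argument is entirely a matter of bookkeeping. The only points requiring care are the noncommutativity of matrix multiplication—one must keep the factors $\bm{H}^{\{k\}}\bm{E}_{y_k}$ in their original order rather than collecting them as commuting scalars—and the observation that the separation of the variables among the delay factors is exactly what guarantees the coefficient matrices are well defined and equal to the stated products.
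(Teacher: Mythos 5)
Your proposal is correct and follows essentially the same route as the paper's proof: expand each $\bm{D}(z_k)$ as $\sum_{y_k}\bm{E}_{y_k}z_k^{y_k}$, distribute over the product in (\ref{seed-PU}), and read off the coefficient of each monomial $\bm{z}^{\bm{y}}$ against the expansion (\ref{matrix-ex}). Your explicit remark that the variables are separated among the factors (so no monomials merge) is a small clarification the paper leaves implicit, but the argument is identical.
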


\begin{proof}
We write the delay matrix $\bm{D}(z_k)$ in the form
$$ \bm{D}(z_k)=\sum_{y_k=0}^{p-1} \bm{E}_{y_k}z_k^{y_k}.$$
Then multivariate polynomial matrix $\bm{M}(\bm{z})$ can be re-expressed as follows.
\begin{eqnarray*}
\bm{M}(\bm{z})
&=&\left(\prod_{k=0}^{m-1}\left(\bm{H}^{\{k\}}\cdot \bm{D}(z_k)\right)\right)\cdot \bm{H}^{\{m\}}\\
&=&\left(\prod_{k=0}^{m-1}\left(\bm{H}^{\{k\}}\cdot \sum_{y_k=0}^{p-1} \bm{E}_{y_k}z_k^{y_k}\right)\right)\cdot \bm{H}^{\{m\}}\\
&=&\left(\prod_{k=0}^{m-1}\sum_{y_k=0}^{p-1} \left(\bm{H}^{\{k\}}\cdot \bm{E}_{y_k}z_k^{y_k}\right)\right)\cdot \bm{H}^{\{m\}}\\
&=&\left(\sum_{y_0=0}^{p-1}\sum_{y_1=0}^{p-1}\cdots \sum_{y_{m-1}=0}^{p-1} \prod_{k=0}^{m-1} \left(\bm{H}^{\{k\}}\cdot \bm{E}_{y_k}z_k^{y_k}\right)\right)\cdot \bm{H}^{\{m\}}\\
&=&\sum_{y_0=0}^{p-1}\sum_{y_1=0}^{p-1}\cdots \sum_{y_{m-1}=0}^{p-1} \left(\prod_{k=0}^{m-1} \left(\bm{H}^{\{k\}}\cdot \bm{E}_{y_k}\right)\cdot \bm{H}^{\{m\}}z_0^{y_0}z_1^{y_1}\cdots z_{m-1}^{y_{m-1}}\right).
\end{eqnarray*}
We complete the proof according to the expansion of  $\bm{M}(\bm{z})$ in (\ref{matrix-ex}).
\end{proof}

\begin{lemma}\label{lem-2}
Let $M_{i,j}(y_0, y_1, \cdots, y_{m-1})$ be the entry of the coefficient matrix  $\bm{M}(y_0, y_1, \cdots, y_{m-1})$ at $(i, j)$. Then $M_{i,j}(y_0, y_1, \cdots, y_{m-1})$ must be a product of some entries of BH matrices $\bm{H}^{\{k\}}$. More precisely, we have
$$M_{i,j}(y_0, y_1, \cdots, y_{m-1})=H^{\{0\}}_{i,y_0}\left(\prod^{m-1}_{k=1}H^{\{k\}}_{y_{k-1},y_k}
\right)\cdot H^{\{m\}}_{y_{m-1},j}.$$
\end{lemma}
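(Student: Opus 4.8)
The plan is to start from the closed form for the coefficient matrix already established in Lemma~\ref{lem-1}, namely $\bm{M}(y_0, \cdots, y_{m-1}) = \prod_{k=0}^{m-1}(\bm{H}^{\{k\}}\cdot \bm{E}_{y_k})\cdot \bm{H}^{\{m\}}$, and to evaluate this matrix product entry-by-entry. The crucial observation is that right-multiplication by $\bm{E}_{y_k}$ --- the single-entry matrix with a $1$ at position $(y_k,y_k)$ --- annihilates every column except column $y_k$. Concretely, $(\bm{H}^{\{k\}}\cdot \bm{E}_{y_k})_{a,b} = H^{\{k\}}_{a,y_k}\,\delta_{b,y_k}$, where $\delta$ is the Kronecker delta. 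Hence each factor $\bm{H}^{\{k\}}\cdot \bm{E}_{y_k}$ is supported on the single column $y_k$, and this sparsity is precisely what forces the internal summations in the matrix product to collapse to single terms.

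The key step is an induction on the length of the partial product. I would prove that for $0\le \ell\le m-1$,
\[
\left(\prod_{k=0}^{\ell}\left(\bm{H}^{\{k\}}\cdot \bm{E}_{y_k}\right)\right)_{a,c} = H^{\{0\}}_{a,y_0}\left(\prod_{k=1}^{\ell}H^{\{k\}}_{y_{k-1},y_k}\right)\delta_{c,y_\ell},
\]
i.e.\ each partial product remains supported on the single column $y_\ell$, with that column given by the telescoping product of Hadamard entries. The base case $\ell=0$ is immediate from the observation above. For the inductive step, multiplying the length-$\ell$ product (supported on column $y_\ell$) on the right by $\bm{H}^{\{\ell+1\}}\cdot \bm{E}_{y_{\ell+1}}$ (supported on column $y_{\ell+1}$) leaves only one surviving summation index, namely $y_\ell$, which contributes the new factor $H^{\{\ell+1\}}_{y_\ell,y_{\ell+1}}$ and shifts the support to column $y_{\ell+1}$.

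Finally I would take $\ell = m-1$ and right-multiply by the last Hadamard matrix $\bm{H}^{\{m\}}$. Since the length-$(m-1)$ product is supported on column $y_{m-1}$, the sum defining $(\,\cdot\,\bm{H}^{\{m\}})_{i,j}$ again collapses to the single index $c=y_{m-1}$, appending the factor $H^{\{m\}}_{y_{m-1},j}$ and yielding exactly
\[
M_{i,j}(y_0, \cdots, y_{m-1}) = H^{\{0\}}_{i,y_0}\left(\prod_{k=1}^{m-1}H^{\{k\}}_{y_{k-1},y_k}\right)\cdot H^{\{m\}}_{y_{m-1},j}.
\]
I do not anticipate a genuine obstacle here: the statement is essentially a bookkeeping identity, and the only point requiring care is formulating the induction hypothesis so that it records \emph{both} the value of the surviving column \emph{and} the fact that the support collapses to a single column. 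Getting that support bookkeeping right is what makes the repeated collapse of the inner sums transparent rather than a tangle of nested summations.
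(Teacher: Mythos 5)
Your proof is correct and follows essentially the same route as the paper: the paper's own proof simply cites Lemma~1 together with the fact that $\bm{E}_{y_k}$ has a single nonzero entry at $(y_k,y_k)$, which is exactly the sum-collapsing mechanism you formalize. Your explicit induction on the partial products merely spells out the bookkeeping the paper leaves to the reader, so there is no substantive difference in approach.
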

\begin{proof}
It follows from Lemma \ref{lem-1} and $\bm{E}_{y_k}$ being a matrix with single-entry 1 at $(y_k, y_k)$ and zero elsewhere.
\end{proof}

According to Lemma \ref{lem-2}, we know that $M_{i,j}(y_0, y_1, \cdots, y_{m-1})$ must be a $q$th root of unity, denoted by $M_{i,j}(y_0, y_1, \cdots y_{m-1})=\omega^{f_{i,j}(y_0,y_1,\cdots, y_{m-1})}$. Then each entry of $\bm{M}(\bm{z})$ can be written in the form
\begin{eqnarray}
M_{i,j}(\bm{z})&=&\sum_{y_0, y_1, \cdots ,y_{m-1}}M_{i,j}(y_0, y_1, \cdots ,y_{m-1})\bm{z}^{\bm{y}} \nonumber \\
&=&\sum_{y_0,y_1,\cdots, y_{m-1}}\omega^{f_{i,j}(y_0,y_1,\cdots, y_{m-1})}\bm{z}^{\bm{y}} \label{PU-function},
\end{eqnarray}
which completes the proof of  Theorem \ref{thm-2}. \done

\begin{remark}
If we restrict  $z_k=Z^{p^{\pi(k)}}$ in  multivariate seed PU matrices in this subsection, we obtain the univariate PU matrices proposed in \cite{TSP2018,SETA2016}.
\end{remark}

\subsection{Extracting Functions from Seed PU Matrices}

In this subsection, we introduce an approach to extract the explicit function form $f_{i,j}(y_0,y_1,\cdots, y_{m-1})$ shown in (\ref{PU-function}) in  the corresponding function matrices of the seed PU matrices.

Recall $N=p$ in this section.  We first introduce a basis of the functions from $\mathbb{Z}_N$ to  $\mathbb{Z}_q$.

\begin{definition}\label{basis}
Let $g_i$ be a function: $\mathbb{Z}_N\rightarrow \mathbb{Z}_q$ such that $g_i(j)=\delta_{i,j}$ where $\delta_{i,j}$ is the Kronecker-delta function, i.e.,
$$g_i(j)=\left\{
\begin{aligned}
&1, \mbox{if} \ i=j, \\
&0, \mbox{if} \ i\neq j.
\end{aligned} \right.$$
\end{definition}
Then any function $g$: $\mathbb{Z}_N\rightarrow \mathbb{Z}_q$ can be represented by
$$g=\sum_{i=0}^{N-1}g(i)g_i.$$

\begin{remark}
An alternative terminology to easily understand the above notations and the results in this subsection is group ring and module theory \cite{Serge2002}. Let $(G, \cdot)$ be a semi-group and $\Z_q$ a ring. The semigroup ring of $G$ over $\Z_q$, denoted by $\Z_q[G]$, is the set of mappings $g$: $G \rightarrow \mathbb{Z}_q$. Furthermore, $\Z_q[G]$ is a $\Z_q$-module. In this subsection, the semi-group $(G, \cdot)$ can be replaced by $(\Z_N, \cdot)$ or $(\Z_N^m, \cdot)$ where the operator $(\cdot)$ denotes production and element-wise production, respectively.
\end{remark}

\begin{example}\label{exam-7}
If $N=p=2^n$,  $\forall y\in \Z_N$ can be represented by a vector $(x_0, x_1, \cdots, x_{n-1})$  where $y=\sum_{v=0}^{n-1}x_v\cdot2^v$ and $x_v\in \Z_2$. Let $x_v^{\{0\}}$ denote the negation of $x_v$, i.e.  $x_v^{\{0\}}=1-x_v$. Conversely, let $x_v^{\{1\}}=x_v$. We have
\begin{equation*}
g_{i}(x_0, x_1, \cdots, x_{n-1})=\prod_{v=0}^{n-1}x_v^{\{i_v\}},
\end{equation*}
where $i=\sum_{v=0}^{n-1}i_v\cdot2^v$.
\end{example}

From Lemma \ref{lem-2} and formula (\ref{PU-function}), we have
\begin{equation}\label{function-0}
\omega^{f_{i,j}(y_0,y_1,\cdots, y_{m-1})}=H^{\{0\}}_{i,y_0}\left(\prod^{m-1}_{k=1}H^{\{k\}}_{y_{k-1},y_k}
\right)\cdot H^{\{m\}}_{y_{m-1},j}.
\end{equation}
Let $\widetilde{\bm{H}}^{\{k\}}$ be the phase matrix of the BH matrix $\bm{H}^{\{k\}}$ for $0\leq k\leq m$. The function $f_{i,j}$ can be written in the following way.
\begin{equation}\label{function-1}
f_{i,j}(y_0,y_1,\cdots, y_{m-1})=\widetilde{H}^{\{0\}}_{i,y_0}+\sum^{m-1}_{k=1}\widetilde{H}^{\{k\}}_{y_{k-1},y_k}
+ \widetilde{H}^{\{m\}}_{y_{m-1},j}.
\end{equation}
If we denote $y_{-1}=i$ and $y_{m}=j$ for convenience, the above equation can be re-expressed in a unified form:
\begin{equation}\label{function-2}
f_{i,j}(y_0,y_1,\cdots, y_{m-1})=\sum^{m}_{k=0}\widetilde{H}^{\{k\}}_{y_{k-1},y_k}.
\end{equation}

\begin{lemma}\label{lem-3}
Let  $\widetilde{H}_{y_{\theta}, y_{\eta}}$ be the entry of  $\widetilde{\bm{H}}$ at $(y_{\theta}, y_{\eta})$, and $h(y_{\theta}, y_{\eta})$  a function from $\mathbb{Z}^2_N$ to $\mathbb{Z}_q$ such that $h(y_{\theta}, y_{\eta})=\widetilde{H}_{y_{\theta}, y_{\eta}}$. Then $h(y_{\theta}, y_{\eta})$ can be represented by the following form:
\begin{equation}\label{base-1}
h(y_{\theta}, y_{\eta})=\sum_{i=0}^{N-1}\sum_{j=0}^{N-1}\widetilde{H}_{i, j}g_i(y_{\theta})g_j(y_{\eta}).
\end{equation}
\end{lemma}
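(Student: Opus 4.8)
The plan is to prove the identity (\ref{base-1}) pointwise: both sides are functions from $\Z_N^2$ to $\Z_q$, so it suffices to show they agree at every point $(a,b)\in\Z_N^2$. The engine of the argument is the single-variable expansion recorded right after Definition \ref{basis}, namely that every $g\colon\Z_N\to\Z_q$ satisfies $g=\sum_{i=0}^{N-1}g(i)g_i$, which is itself an immediate consequence of the defining property $g_i(j)=\delta_{i,j}$.

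First I would record the elementary fact that the products $g_i(y_\theta)g_j(y_\eta)$ are precisely the indicator functions of the points of $\Z_N^2$: evaluating at $(a,b)$ gives $g_i(a)g_j(b)=\delta_{i,a}\delta_{j,b}$, which equals $1$ exactly when $(i,j)=(a,b)$ and $0$ otherwise. Since $g_i(a),g_j(b)\in\{0,1\}$, this product is a genuine joint Kronecker delta, and the computation is well defined over the ring $\Z_q$ with no interference from zero divisors.

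Then I would substitute into the right-hand side of (\ref{base-1}) evaluated at an arbitrary $(a,b)$:
\begin{equation*}
\sum_{i=0}^{N-1}\sum_{j=0}^{N-1}\widetilde{H}_{i,j}\,g_i(a)g_j(b)
=\sum_{i=0}^{N-1}\sum_{j=0}^{N-1}\widetilde{H}_{i,j}\,\delta_{i,a}\delta_{j,b}
=\widetilde{H}_{a,b}.
\end{equation*}
By the definition of $h$, the left-hand side of (\ref{base-1}) evaluated at $(a,b)$ is $h(a,b)=\widetilde{H}_{a,b}$, so the two sides agree at every $(a,b)\in\Z_N^2$ and hence coincide as functions, which establishes the lemma.

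I do not expect any real obstacle; the statement is just the two-variable analogue of the basis expansion, and the only point requiring care is verifying that the product of the two delta-basis functions is the point indicator of $\Z_N^2$. If a more structural phrasing is preferred, the same conclusion follows from the module-theoretic remark: the family $\{g_ig_j\}_{0\le i,j<N}$ is the tensor-product basis of $\Z_q[\Z_N^2]\cong\Z_q[\Z_N]\otimes\Z_q[\Z_N]$, so $h$ expands uniquely in this basis with coordinates $h(i,j)=\widetilde{H}_{i,j}$, giving (\ref{base-1}) at once.
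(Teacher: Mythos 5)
Your proof is correct and is essentially the same as the paper's: the paper's one-line argument is precisely that $g_i(y_{\theta})g_j(y_{\eta})=1$ if and only if $(y_{\theta},y_{\eta})=(i,j)$, so the double sum collapses to the single entry $\widetilde{H}_{y_{\theta},y_{\eta}}$, which is what your pointwise Kronecker-delta computation spells out in more detail.
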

\begin{proof}
Recall the definition of basis functions $g_i$ in Definition \ref{basis}. It follows from
$g_i(y_{\theta})=g_j(y_{\eta})=1 \Longleftrightarrow y_{\theta}=i$ and $y_{\eta}=j$.
\end{proof}

Let $\bm{g}(y): \mathbb{Z}_N\rightarrow \mathbb{Z}_q^N$ be a vector function  represented by
\begin{equation}
\bm{g}(y)=(g_0(y), g_1(y), \cdots, g_{N-1}(y)).
\end{equation}
Then the function $h(y_{\theta}, y_{\eta})$ in (\ref{base-1}) can be represented as a matrix form:
\begin{equation}\label{base-2}
h(y_{\theta}, y_{\eta})=\bm{g}(y_{\theta})\widetilde{\bm{H}} \bm{g}(y_{\eta})^T,
\end{equation}
where $\bm{g}(y_{\eta})^T$ denotes the transpose of the vector function $\bm{g}(y_{\eta})$.
According to Lemma \ref{lem-3} and formulas (\ref{function-2}) and (\ref{base-2}), we obtain a formula for computing the function $f_{i,j}$.
\begin{lemma}\label{lem-4}
The function $f_{i,j}$ can be expressed in the form:
\begin{equation}\label{function4}
f_{i,j}(y_0,y_1,\cdots, y_{m-1})=\sum_{k=0}^{m}\bm{g}(y_{k-1})\widetilde{\bm{H}}^{\{k\}} \bm{g}(y_{k})^T.
\end{equation}
\end{lemma}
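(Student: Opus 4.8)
The plan is to derive the claimed identity (\ref{function4}) directly from the unified expression (\ref{function-2}) by rewriting each of its summands through Lemma \ref{lem-3}. First I would recall that (\ref{function-2}) gives
\begin{equation*}
f_{i,j}(y_0,y_1,\cdots, y_{m-1})=\sum^{m}_{k=0}\widetilde{H}^{\{k\}}_{y_{k-1},y_k},
\end{equation*}
under the conventions $y_{-1}=i$ and $y_m=j$, so that the whole function is a sum of $m+1$ individual phase-matrix entries, one drawn from each $\widetilde{\bm{H}}^{\{k\}}$.

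Next I would apply Lemma \ref{lem-3} termwise. For each $k$ with $0\le k\le m$, I set $\theta=k-1$, $\eta=k$, and take $\widetilde{\bm{H}}=\widetilde{\bm{H}}^{\{k\}}$. Then the entry $\widetilde{H}^{\{k\}}_{y_{k-1},y_k}$, viewed as a function of the two arguments $y_{k-1}$ and $y_k$, is exactly the function $h(y_{k-1},y_k)$ of Lemma \ref{lem-3}, whose matrix form (\ref{base-2}) reads $\bm{g}(y_{k-1})\widetilde{\bm{H}}^{\{k\}}\bm{g}(y_k)^T$. Substituting this identity into each of the $m+1$ summands and summing over $k$ then yields (\ref{function4}) immediately; no further computation is needed, since the whole argument is a term-by-term replacement followed by re-summation.

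The one point requiring care---and the only real obstacle---is the treatment of the two boundary terms $k=0$ and $k=m$, where the ``variable'' $y_{-1}$ (respectively $y_m$) is actually the fixed row index $i$ (respectively column index $j$), not a genuine array variable. I would resolve this by observing that the vector function $\bm{g}$ of Definition \ref{basis}, evaluated at a fixed point $i$, returns the $i$th standard unit vector $\ebu_i$, because $g_l(i)=\delta_{l,i}$. Hence $\bm{g}(i)\widetilde{\bm{H}}^{\{0\}}\bm{g}(y_0)^T$ selects the $i$th row of $\widetilde{\bm{H}}^{\{0\}}$ and contracts it against $\bm{g}(y_0)^T$, reproducing $\widetilde{H}^{\{0\}}_{i,y_0}$; the same reasoning handles the $k=m$ term via $\bm{g}(j)=\ebu_j$. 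This confirms that the compact matrix identity (\ref{base-2}) holds uniformly whether its arguments are free variables or pinned to the index values $i$ and $j$, so the substitution is legitimate across all $k$ and the proof is complete.
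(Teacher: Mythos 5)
Your proposal is correct and follows essentially the same route as the paper, which obtains Lemma \ref{lem-4} directly by applying Lemma \ref{lem-3} (in its matrix form (\ref{base-2})) term by term to the unified expression (\ref{function-2}) with the conventions $y_{-1}=i$ and $y_m=j$. Your explicit check that the identity remains valid at the boundary terms $k=0$ and $k=m$, where $\bm{g}$ is evaluated at the fixed indices $i$ and $j$ and returns standard unit vectors, is a point the paper leaves implicit but is exactly the right justification.
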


It seems that we have already obtained an explicit form of $f_{i,j}$ for $m+1$ given phase matrices $\widetilde{\bm{H}}^{\{k\}}$. However, each phase matrix can be randomly chosen from $\widetilde{H}(q,N)$ which is a large set.  To obtain a general form
of $f_{i,j}$, we take the equivalence of phase matrices into our consideration.

\begin{lemma}\label{lem-5}
Assume that $\bm{Q}_1=diag\{\omega^{c_0}, \omega^{c_1}, \cdots, \omega^{c_{N-1}}\}$ and $\bm{Q}_2=diag\{\omega^{d_0}, \omega^{d_1}, \cdots, \omega^{d_{N-1}}\}$, $\bm{H}_1 = \bm{Q}_1 \bm{H} \bm{Q}_2$ for $\bm{H}, \bm{H}_1\in H(q, N)$, and $\widetilde{\bm{H}}, \widetilde{\bm{H}}_1$ are the phase matrices of $\bm{H}$ and $\bm{H}_1$, respectively. We have
$$\bm{g}(y_{\theta})\widetilde{\bm{H}}_1 \bm{g}(y_{\eta})^T=\bm{g}(y_{\theta})\widetilde{\bm{H}} \bm{g}(y_{\eta})^T+\sum_{i=0}^{N-1}c_ig_i(y_{\theta})+\sum_{j=0}^{N-1}d_jg_j(y_{\eta}).$$
\end{lemma}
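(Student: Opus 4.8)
The plan is to unwind the definition of the phase matrix and track how left- and right-multiplication by the diagonal matrices $\bm{Q}_1$ and $\bm{Q}_2$ modifies each entry. Writing $\bm{H}_1 = \bm{Q}_1 \bm{H} \bm{Q}_2$ entrywise, the $(i,j)$-entry satisfies $(\bm{H}_1)_{i,j} = \omega^{c_i}\, H_{i,j}\, \omega^{d_j} = \omega^{c_i + \widetilde{H}_{i,j} + d_j}$, since the diagonal structure means only the $i$th entry of $\bm{Q}_1$ and the $j$th entry of $\bm{Q}_2$ contribute. Reading off the exponent gives the phase matrix relation
\begin{equation*}
(\widetilde{\bm{H}}_1)_{i,j} = \widetilde{H}_{i,j} + c_i + d_j,
\end{equation*}
understood in $\Z_q$. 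This is the one computational fact the whole lemma rests on, and I would establish it first.

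Next I would substitute this relation into the matrix-form expression for $h$ supplied by formula (\ref{base-2}) of Lemma \ref{lem-3}. Applying (\ref{base-2}) to $\widetilde{\bm{H}}_1$ gives
\begin{equation*}
\bm{g}(y_{\theta})\widetilde{\bm{H}}_1 \bm{g}(y_{\eta})^T
= \sum_{i=0}^{N-1}\sum_{j=0}^{N-1}(\widetilde{H}_{i,j} + c_i + d_j)\, g_i(y_{\theta})\, g_j(y_{\eta}).
\end{equation*}
I would then split the sum into three pieces by linearity. The first piece is exactly $\bm{g}(y_{\theta})\widetilde{\bm{H}} \bm{g}(y_{\eta})^T$ by (\ref{base-2}) again. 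For the second piece, I factor out $\sum_i c_i g_i(y_{\theta})$ and use the identity $\sum_{j=0}^{N-1} g_j(y_{\eta}) = 1$, which holds because for any fixed value of $y_{\eta}$ exactly one basis function $g_j$ evaluates to $1$ (Definition \ref{basis}). Symmetrically, the third piece collapses to $\sum_j d_j g_j(y_{\eta})$ using $\sum_{i} g_i(y_{\theta}) = 1$. Assembling the three pieces yields the claimed formula.

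The only genuine subtlety, and the step I would flag as the main obstacle, is the normalization convention: the lemma writes $\bm{H}_1 = \bm{Q}_1 \bm{H} \bm{Q}_2$ without the permutation matrices $\bm{P}_1,\bm{P}_2$ that appear in the general equivalence (\ref{Hadamard-1}), so I must be careful to prove only the stated (purely diagonal) case and not inadvertently invoke permutations. Everything else is bookkeeping with the root-of-unity exponents in $\Z_q$ and the partition-of-unity property $\sum_i g_i \equiv 1$ of the basis from Definition \ref{basis}; these are routine once the entrywise phase relation above is in hand. I would present the argument in the order: entrywise phase identity, substitution into (\ref{base-2}), three-way split using $\sum g = 1$ twice.
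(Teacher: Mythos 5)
Your proof is correct and follows essentially the same route as the paper: both establish the entrywise phase relation $(\widetilde{\bm{H}}_1)_{i,j}=\widetilde{H}_{i,j}+c_i+d_j$, expand the bilinear form as a double sum, split it by linearity, and collapse the linear pieces using the partition-of-unity identity $\sum_i g_i \equiv 1$ from Definition \ref{basis}. Your version is if anything slightly more explicit, since the paper asserts the entrywise relation directly "from the definition" whereas you derive it from the diagonal structure of $\bm{Q}_1$ and $\bm{Q}_2$.
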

\begin{proof}
From the definition, we know the entry $\{\widetilde{H_1}\}_{i,j}=\{\widetilde{H}\}_{i,j}+c_i+d_j$. Thus
\begin{eqnarray*}
\bm{g}(y_{\theta})\widetilde{\bm{H}_1} \bm{g}(y_{\eta})^T
&=&\bm{g}(y_{\theta})\widetilde{\bm{H}} \bm{g}(y_{\eta})^T+\sum_{i=0}^{N-1}\sum_{j=0}^{N-1}(c_i+d_j)g_i(y_{\theta})g_j(y_{\eta})\\
&=&\bm{g}(y_{\theta})\widetilde{\bm{H}} \bm{g}(y_{\eta})^T+\sum_{i=0}^{N-1}\sum_{j=0}^{N-1}c_ig_i(y_{\theta})g_j(y_{\eta})+\sum_{i=0}^{N-1}\sum_{j=0}^{N-1} d_jg_i(y_{\theta})g_j(y_{\eta})\\
&=&\bm{g}(y_{\theta})\widetilde{\bm{H}} \bm{g}(y_{\eta})^T+\sum_{i=0}^{N-1} c_ig_i(y_{\theta}) \left(\sum_{j=0}^{N-1}g_j(y_{\eta})\right)+\sum_{j=0}^{N-1} d_jg_j(y_{\eta}) \left(\sum_{i=0}^{N-1} g_i(y_{\theta})\right).
\end{eqnarray*}
From the definition of $g_i$, we have
$$\sum_{j=0}^{N-1}g_j(y_{\eta})=\sum_{i=0}^{N-1} g_i(y_{\theta})=1,$$
which completes the proof.
\end{proof}

Let  $\chi$ be a permutation of symbols $\{0, 1, \cdots, N-1\}$ and $\bm{g}(y)=((g_0(y), g_1(y), \cdots, g_{N-1}(y))$.  Denote the permutation of the vector function $\bm{g}(y)$ by
\begin{equation}\label{chi}
\bm{g}_{\chi}(y)=(g_{\chi(0)}(y), g_{\chi(1)}(y), \cdots, g_{\chi(N-1)}(y)).
\end{equation}
\begin{lemma}\label{lem-6}
Assume that $\bm{H}_2 = \bm{P}_1 \bm{H} \bm{P}_2$, where $\bm{P}_1$ and $\bm{P}_2$ are both permutation matrices, and $\bm{H}, \bm{H}_2\in H(q, N)$.
There exist $\chi_L$ and $\chi_R$, which are both permutations of symbols $\{0, 1, \cdots, N-1\}$, such that
$$\bm{g}(y_{\theta})\widetilde{\bm{H}_2} \bm{g}(y_{\eta})^T=\bm{g}_{\chi_L}(y_{\theta})\widetilde{\bm{H}} \bm{g}_{\chi_R}(y_{\eta})^T.$$
\end{lemma}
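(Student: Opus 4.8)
The plan is to track exactly how the two permutation matrices act on the entries of the phase matrix and then to absorb that action into a relabelling of the basis functions $g_i$. First I would fix a convention for the permutation matrices: write $\bm{P}_1$ and $\bm{P}_2$ for the matrices associated with permutations $\pi_1$ and $\pi_2$ of $\{0,1,\dots,N-1\}$, so that $(\bm{P}_1\bm{H}\bm{P}_2)_{i,j}=H_{\pi_1(i),\pi_2(j)}$. Because every entry of a permutation matrix equals $1=\omega^0$, multiplying $\bm{H}$ on either side by a permutation matrix merely rearranges its entries without altering their phases; hence the phase matrix of $\bm{H}_2=\bm{P}_1\bm{H}\bm{P}_2$ is obtained from that of $\bm{H}$ by the same rearrangement, namely
\begin{equation*}
(\widetilde{\bm{H}_2})_{i,j}=\widetilde{H}_{\pi_1(i),\pi_2(j)}.
\end{equation*}
This is the place where the argument is genuinely simpler than Lemma \ref{lem-5}: no additive constants $c_i,d_j$ appear, precisely because permutation matrices contribute trivial phases.

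Next I would substitute this identity into the bilinear expansion supplied by Lemma \ref{lem-3}. Writing out
\begin{equation*}
\bm{g}(y_{\theta})\widetilde{\bm{H}_2}\bm{g}(y_{\eta})^T=\sum_{i=0}^{N-1}\sum_{j=0}^{N-1}\widetilde{H}_{\pi_1(i),\pi_2(j)}\,g_i(y_{\theta})g_j(y_{\eta}),
\end{equation*}
I would perform the change of summation indices $i'=\pi_1(i)$ and $j'=\pi_2(j)$. Since $\pi_1$ and $\pi_2$ are bijections of $\{0,1,\dots,N-1\}$, the sums over $i'$ and $j'$ again run over all indices, and the expression becomes $\sum_{i',j'}\widetilde{H}_{i',j'}\,g_{\pi_1^{-1}(i')}(y_{\theta})\,g_{\pi_2^{-1}(j')}(y_{\eta})$.

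Finally I would read off the permuted basis vectors. Recalling the notation $\bm{g}_{\chi}(y)=(g_{\chi(0)}(y),\dots,g_{\chi(N-1)}(y))$ from (\ref{chi}), the $i'$-th component of $\bm{g}_{\chi_L}$ is $g_{\chi_L(i')}$, so choosing $\chi_L=\pi_1^{-1}$ and $\chi_R=\pi_2^{-1}$ recasts the last expression exactly as $\bm{g}_{\chi_L}(y_{\theta})\widetilde{\bm{H}}\bm{g}_{\chi_R}(y_{\eta})^T$, which is the claim. I do not expect any serious obstacle here: the only thing to get right is the direction of the permutations, and in particular that $\chi_L,\chi_R$ come out as the inverses $\pi_1^{-1},\pi_2^{-1}$ rather than $\pi_1,\pi_2$ themselves. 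If one prefers a convention-free phrasing, one can simply observe that permuting the rows (resp.\ columns) of $\widetilde{\bm{H}}$ is identical to permuting the entries of the left (resp.\ right) basis vector $\bm{g}$, which is the conceptual content of the lemma.
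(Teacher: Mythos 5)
Your proof is correct and takes essentially the same route as the paper's: both rest on the observation that permutation matrices contribute only trivial phases, so $\widetilde{\bm{H}_2}=\bm{P}_1\widetilde{\bm{H}}\bm{P}_2$, and then absorb the two permutations into the basis vector $\bm{g}$. The paper performs this absorption at the matrix level via associativity, writing $\bm{g}(y_{\theta})\bm{P}_1\widetilde{\bm{H}}\bm{P}_2\bm{g}(y_{\eta})^T=\left(\bm{g}(y_{\theta})\bm{P}_1\right)\widetilde{\bm{H}}\left(\bm{P}_2\bm{g}(y_{\eta})^T\right)$, whereas you carry out the identical step entry-wise as a change of summation indices — a purely presentational difference, though your version has the minor merit of pinning down $\chi_L=\pi_1^{-1}$ and $\chi_R=\pi_2^{-1}$ explicitly where the paper only asserts existence.
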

\begin{proof} There exists $\chi_L$ such that $\bm{g}_{\chi_L}(y_{\theta})=\bm{g}(y_{\theta})\bm{P}_1$ and
$\chi_R$ such that $\bm{g}_{\chi_R}(y_{\eta})=\bm{g}(y_{\eta})\bm{P}_2^T$, respectively. Then we have
\begin{eqnarray*}
\bm{g}(y_{\theta})\widetilde{\bm{H}_2} \bm{g}(y_{\eta})^T
&=&\bm{g}(y_{\theta})\bm{P}_1\widetilde{\bm{H}} \bm{P}_2\bm{g}(y_{\eta})^T\\
&=&\left(\bm{g}(y_{\theta})\bm{P}_1\right)\widetilde{\bm{H}} \left(\bm{P}_2\bm{g}(y_{\eta})^T\right)\\
&=&\bm{g}_{\chi_L}(y_{\theta})\widetilde{\bm{H}} \bm{g}_{\chi_R}(y_{\eta})^T,
\end{eqnarray*}
which completes the proof.
\end{proof}

\begin{corollary}\label{rep1}
The functions extracted from the seed PU matrices can be represented by a general form:
\begin{equation}\label{func-rep1}
f(y_0,y_1,\cdots, y_{m-1})=\sum_{k=1}^{m-1}\bm{g}_{\chi_{L_k}}(y_{k-1})\widetilde{\bm{H}}^{\{k\}} (\bm{g}_{\chi_{R_k}}(y_{k}))^T+\sum_{k=0}^{m-1}\sum_{i=0}^{N-1}c_{k,i}g_i(y_{k}),
\end{equation}
where $\chi_{L_k}$ and $\chi_{R_k}$ are both permutations of symbols $\{0, 1, \cdots, {N-1}\}$, $\widetilde{H}^{\{k\}}\in S_{\widetilde{H}}(q, N)$ for $1\leq k\leq m-1$, $c_{k,i}\in \Z_q$ for $0\leq k< m, 0\leq i\leq N-1$.
\end{corollary}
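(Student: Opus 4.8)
The plan is to start from the exact formula for $f_{i,j}$ already established in Lemma \ref{lem-4}, namely
\begin{equation*}
f_{i,j}(y_0,y_1,\cdots, y_{m-1})=\sum_{k=0}^{m}\bm{g}(y_{k-1})\widetilde{\bm{H}}^{\{k\}} \bm{g}(y_{k})^T,
\end{equation*}
with the convention $y_{-1}=i$ and $y_{m}=j$, and then rewrite each term by replacing $\widetilde{\bm{H}}^{\{k\}}$ with a representative from $S_{\widetilde{H}}(q,N)$ at the cost of explicit correction terms. First I would isolate the two boundary terms ($k=0$ and $k=m$): since $y_{-1}=i$ and $y_{m}=j$ are fixed constants (the row/column index), the factors $\bm{g}(i)$ and $\bm{g}(j)$ collapse the quadratic forms $\bm{g}(i)\widetilde{\bm{H}}^{\{0\}}\bm{g}(y_0)^T$ and $\bm{g}(y_{m-1})\widetilde{\bm{H}}^{\{m\}}\bm{g}(j)^T$ into \emph{linear} functions of $y_0$ and $y_{m-1}$ respectively. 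Each such linear function is of the form $\sum_{i'}c_{0,i'}g_{i'}(y_0)$ (and similarly for $y_{m-1}$), which is exactly the shape of the affine correction terms $\sum_{k}\sum_{i}c_{k,i}g_i(y_k)$ appearing in the claimed general form. This explains why the sum over the quadratic terms in \eqref{func-rep1} runs only over $1\le k\le m-1$.

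Next I would handle the interior terms $1\le k\le m-1$ by invoking the equivalence of phase matrices. For each such $k$, write $\bm{H}^{\{k\}}=\bm{P}_1\bm{Q}_1\bm{H}\bm{Q}_2\bm{P}_2$ as in \eqref{Hadamard-1}, where $\bm{H}$ is the chosen representative of its equivalence class in $S_{\widetilde{H}}(q,N)$. The idea is to peel off the factors one type at a time: Lemma \ref{lem-5} shows that conjugating by the diagonal unitaries $\bm{Q}_1,\bm{Q}_2$ changes $\bm{g}(y_{k-1})\widetilde{\bm{H}}^{\{k\}}\bm{g}(y_k)^T$ by adding linear terms $\sum_i c_ig_i(y_{k-1})+\sum_j d_jg_j(y_k)$, and Lemma \ref{lem-6} shows that the permutation matrices $\bm{P}_1,\bm{P}_2$ merely permute the coordinate functions, producing $\bm{g}_{\chi_{L_k}}(y_{k-1})\widetilde{\bm{H}}\bm{g}_{\chi_{R_k}}(y_k)^T$. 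Applying both lemmas together converts each interior quadratic term into a representative quadratic form plus linear corrections in the variables $y_{k-1}$ and $y_k$.

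The bookkeeping step is to collect all the linear corrections — those coming from the two boundary terms and those spun off from every application of Lemma \ref{lem-5} across the interior terms — and absorb them into a single combined family $\sum_{k=0}^{m-1}\sum_{i=0}^{N-1}c_{k,i}g_i(y_k)$. Here I must be a little careful: each interior term contributes a correction in both $y_{k-1}$ and $y_k$, so a given variable $y_k$ receives contributions from two adjacent factors; after reindexing, these all land inside the allowed range $0\le k< m$, and since the $c_{k,i}$ range freely over $\Z_q$ the sum still has exactly the stated shape. I expect this reindexing and merging of the affine part to be the main obstacle — not because any single step is hard, but because one must verify that the boundary linear terms (in $y_0$ and $y_{m-1}$) and the interior linear terms overlap consistently and that no correction falls outside the index range $0\le k\le m-1$, $0\le i\le N-1$. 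Once the linear pieces are shown to coalesce into the single double sum in \eqref{func-rep1}, and the interior quadratic pieces are in representative form with free permutations $\chi_{L_k},\chi_{R_k}$, the corollary follows directly from Lemma \ref{lem-4}.
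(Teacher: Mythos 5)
Your proposal is correct and follows essentially the same route as the paper's own proof: start from the exact formula in Lemma \ref{lem-4}, apply Lemmas \ref{lem-5} and \ref{lem-6} to replace each phase matrix by its class representative at the cost of linear corrections, observe that the boundary terms $k=0$ and $k=m$ collapse into linear combinations of $g_i(y_0)$ and $g_i(y_{m-1})$ because $y_{-1}=i$ and $y_m=j$ are constants, and absorb all linear pieces into the single sum $\sum_{k=0}^{m-1}\sum_{i=0}^{N-1}c_{k,i}g_i(y_k)$. The only difference is the order of operations (you collapse the boundary terms before invoking the equivalence lemmas, the paper does it after), which is immaterial.
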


\begin{proof}
It is straightforward that $\sum_{i=0}^{N-1}c_ig_{\chi(i)}(y)$ is still a linear combination of the functions $g_{i}(y)$.
According to Lemmas \ref{lem-5} and \ref{lem-6}, the functions extracted from the seed PU matrices can be expressed in the form:
\begin{equation}\label{func-rep4}
f(y_0,y_1,\cdots, y_{m-1})=\sum_{k=0}^{m}\bm{g}_{\chi_{L_k}}(y_{k-1})\widetilde{\bm{H}}^{\{k\}} (\bm{g}_{\chi_{R_k}}(y_{k}))^T+\sum_{k=0}^{m-1}\sum_{i=0}^{N-1}c_{k,i}g_i(y_{k}),
\end{equation}
where $y_{-1}=i$, $y_{m}=j$. However, for the case $k=0$ in the first term of the above formula, we have
$$\bm{g}_{\chi_{L}}(y_{-1}=i)\widetilde{\bm{H}}^{\{0\}}\bm{g}_{\chi_{R}}(y_{0})^T=\sum_{i=0}^{N-1}c_ig_i(y_0),$$
which is a  linear combination of $g_i(y_0)$. Similarly, for the case $k=m$ in (\ref{func-rep4}), it is still a linear combination of $g_i(y_{m-1})$. Then formula (\ref{func-rep1}) is obtained from formula (\ref{func-rep4}) immediately.
\end{proof}

From Corollary \ref{rep1}, we only need to study two kinds of terms in formula (\ref{func-rep1}). The first one is called a
{\em quadratic term}: $$\bm{g}_{\chi_{L}}(y_{0})\widetilde{\bm{H}}\bm{g}_{\chi_{R}}(y_{1})^T, $$
 and the another one is called a {\em linear term}: $$\sum_{k=0}^{m-1}\sum_{i=0}^{N-1}c_{k,i}g_i(y_{k}).$$
 Here the words \lq\lq quadratic\rq\rq \ and \lq\lq linear\rq\rq \  are with respect to the functions $g_i(y_k)$ for $0\leq i\leq N-1, 0\leq k\leq m-1$.

A collection of the linear terms is denoted by $S_L(q, N)$ in the rest of the paper. We study the algebraic structure of the $S_L(q, N)$ firstly.
\begin{lemma}\label{lem-7}
Every linear term is a linear combination of the terms $g_i(y_{k})$ and $1$ over $\Z_q$ for $0\leq k\leq m-1, 1\leq i\leq N-1$, i.e.,
\begin{equation}\label{linear-terms}
S_L(q, N)=\left\{\sum_{k=0}^{m-1}\sum_{i=1}^{N-1}c_{k,i}g_i(y_{k})+c'\bigg| c_{k,i}, c'\in \Z_q, 0\leq k\leq m-1, 1\leq i\leq N-1 \right\}.
\end{equation}
Moreover, the terms $g_i(y_{k})$ and $1$ are linear independent over $\Z_q$ for $0\leq k\leq m-1, 1\leq i\leq N-1$. The number of the  linear  terms is given by
$$|S_L(q, N)|=q^{Nm-m+1}.$$
\end{lemma}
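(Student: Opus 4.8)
The plan is to prove Lemma~\ref{lem-7} in three stages: first establish the spanning claim (every linear term lies in the displayed set), then establish linear independence of the proposed generating set over $\Z_q$, and finally deduce the cardinality by counting free coefficients. The spanning direction is essentially definitional: a linear term as introduced after Corollary~\ref{rep1} has the shape $\sum_{k=0}^{m-1}\sum_{i=0}^{N-1}c_{k,i}g_i(y_k)$ with $c_{k,i}\in\Z_q$. The only work here is to remove the redundant $i=0$ generators. For each fixed $k$, the basis functions satisfy $\sum_{i=0}^{N-1}g_i(y_k)=1$ by Definition~\ref{basis}, so $g_0(y_k)=1-\sum_{i=1}^{N-1}g_i(y_k)$. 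Substituting this identity into the $i=0$ term of each inner sum rewrites $c_{k,0}g_0(y_k)$ as $c_{k,0}-c_{k,0}\sum_{i=1}^{N-1}g_i(y_k)$, absorbing the constant $c_{k,0}$ into a single global constant $c'$ and adjusting the coefficients $c_{k,i}$ for $i\ge 1$. This shows every linear term can be written in the form on the right-hand side of~(\ref{linear-terms}), and the reverse containment is immediate since each such expression is manifestly a linear combination of basis functions.

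Second, I would prove linear independence of the functions $\{g_i(y_k): 0\leq k\leq m-1,\ 1\leq i\leq N-1\}\cup\{1\}$ over $\Z_q$, viewing them as elements of the module of functions $\Z_N^m\to\Z_q$. Suppose a $\Z_q$-linear combination $\sum_{k=0}^{m-1}\sum_{i=1}^{N-1}c_{k,i}g_i(y_k)+c'$ vanishes identically on $\Z_N^m$. The strategy is to evaluate at carefully chosen points to isolate each coefficient. Evaluating at $\bm{y}=\bm{0}=(0,\dots,0)$ kills every $g_i(y_k)$ with $i\ge 1$ (since $g_i(0)=\delta_{i,0}=0$), forcing $c'=0$. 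Then, to extract a fixed $c_{k,i}$, I would evaluate at the point whose $k$th coordinate equals $i$ and all other coordinates equal $0$; by Definition~\ref{basis} the only surviving summand is $g_i(y_k)$ evaluated at $y_k=i$, which equals $1$, so the total reduces to $c_{k,i}=0$. Running over all admissible pairs $(k,i)$ shows every coefficient vanishes, establishing independence.

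Third, the cardinality follows by counting: the free parameters are $c_{k,i}$ for $0\leq k\leq m-1$ and $1\leq i\leq N-1$, giving $(N-1)m$ coefficients, together with the single constant $c'$, and each ranges independently over $\Z_q$. By the independence just proved, distinct coefficient tuples give distinct functions, so $|S_L(q,N)| = q^{(N-1)m+1}=q^{Nm-m+1}$, matching the claimed value.

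I expect the main subtlety to be bookkeeping in the spanning step rather than any deep obstacle: one must verify that eliminating the $N$ redundant generators $g_0(y_k)$ across all $k$ produces a single well-defined constant term $c'$ and does not secretly reintroduce dependencies, i.e.\ that the set~(\ref{linear-terms}) is genuinely described by $Nm-m+1$ free parameters and not fewer. The evaluation argument in the independence step handles this cleanly, since it confirms the reduced generating set is a basis and therefore the parameter count is sharp; the only care needed is to ensure the chosen evaluation points are legitimate elements of $\Z_N^m$ (they are, since each coordinate lies in $\{0,1,\dots,N-1\}$) and that cross-terms from different indices $k$ do not interfere, which they cannot because $g_i(y_k)$ depends only on the single coordinate $y_k$.
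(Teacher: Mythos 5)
Your proof is correct and follows essentially the same route as the paper: both rest on the identity $\sum_{i=0}^{N-1}g_i(y)=1$ to eliminate the redundant generators $g_0(y_k)$, and then count the free coefficients $c_{k,i}$ ($1\le i\le N-1$) and $c'$ over $\Z_q$ to get $q^{m(N-1)+1}=q^{Nm-m+1}$. The only difference is one of completeness: the paper merely asserts that $S_L(q,N)$ is a free $\Z_q$-submodule of the semigroup ring $\Z_q[G]$ of dimension $m(N-1)+1$ with basis $\{g_i(y_k),1\}$, whereas your point-evaluation argument (at $\bm{y}=\bm{0}$ to kill $c'$, then at the points with a single nonzero coordinate $y_k=i$ to isolate each $c_{k,i}$) actually verifies that freeness, so your write-up is, if anything, more self-contained than the paper's.
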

\begin{proof}
 From the definition of $g_i$, it is known
$\sum_{i=0}^{N-1} g_i(y)=1$ for $\forall y\in \Z_N$. Moreover, for $G=(\Z_N^m, \cdot)$,  $\Z_q[G]$ is a semigroup ring, and also a $\Z_q$-module. Treat $1$ as the identity element in $\Z_q[G]$, or the function $\bm{1}: G\rightarrow \Z_q$. Then $S_L(q, N)$ is a free $\Z_q$-submodule of dimension $m(N-1)+1$ with basis $\{g_i(y_{k}), 1| 0\leq k\leq m-1, 1\leq i\leq N-1 \}.$
\end{proof}

Now we study the quadratic term. Let $S'_Q(q, N)$ be a set consisting of all the functions of the form $\bm{g}_{\chi_{L}}(y_{0})\widetilde{\bm{H}}\bm{g}_{\chi_{R}}(y_{1})^T$ for  $\forall \widetilde{\bm{H}}\in S_{\widetilde{H}}(q, N)$.
\begin{definition}\label{func-equiv}
Two functions $h_1(y_0,y_1), h_2(y_0,y_1)\in S'_Q(q, N)$  are said to be {\em equivalent}, if the difference between the functions $h_1(y_0,y_1)$ and $h_2(y_0,y_1)$ is a linear combination of $g_i(y_0)$ and $g_i(y_1)$ for $0\leq i<N$.  Let $S_Q(q, N)$ be a set consisting of  all the representatives of the  functions in $S'_Q(q, N)$ with respect to this equivalent relationship.
\end{definition}

\begin{remark}
The set consisting of  all the following terms
$$\sum_{i=0}^{N-1}c_ig_i(y_0)+\sum_{j=0}^{N-1}d_jg_j(y_1)$$ is a subgroup of the additive group $\Z_q[y_0, y_1]$,  so the definition of equivalence in Definition \ref{func-equiv} is reasonable. Moreover,
for $G=(\Z_N^m, \cdot)$, $\Z_q[G]$ is a $\Z_q$-module, and $S_L(q, N)$ is a free $\Z_q$-submodule. $S_Q(q, N)$ can be treated as the quotient set of $S'_Q(q, N)$ modulo $S_L(q, N)$.
\end{remark}

\begin{theorem}\label{thm-3}
All the  functions extracted from the seed PU matrices  can be represented in a general form
\begin{equation}\label{func-rep2}
f(y_0,y_1,\cdots, y_{m-1})=\sum_{k=1}^{m-1} h_k(y_{k-1}, y_{k})+\sum_{k=0}^{m-1}\sum_{i=1}^{N-1}c_{k,i}g_i(y_{k})+c',
\end{equation}
where $h_k(\cdot,\cdot)\in S_Q(q, N) (1\leq k\leq m-1)$, $c_{k,i}, c'\in \Z_q, (0\leq k\leq m-1, 1\leq i\leq N-1)$.
\end{theorem}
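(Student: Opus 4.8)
The plan is to derive Theorem~\ref{thm-3} as a direct consequence of Corollary~\ref{rep1} together with the structural Lemmas~\ref{lem-5}, \ref{lem-6}, and \ref{lem-7}, essentially by carefully absorbing all the ``equivalence ambiguity'' in the quadratic terms into the linear part. Recall that Corollary~\ref{rep1} already gives the representation
\begin{equation*}
f(y_0,\cdots, y_{m-1})=\sum_{k=1}^{m-1}\bm{g}_{\chi_{L_k}}(y_{k-1})\widetilde{\bm{H}}^{\{k\}} (\bm{g}_{\chi_{R_k}}(y_{k}))^T+\sum_{k=0}^{m-1}\sum_{i=0}^{N-1}c_{k,i}g_i(y_{k}),
\end{equation*}
with each $\widetilde{H}^{\{k\}}\in S_{\widetilde{H}}(q,N)$ a chosen representative. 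So the work remaining is purely to rewrite each summand $\bm{g}_{\chi_{L_k}}(y_{k-1})\widetilde{\bm{H}}^{\{k\}} (\bm{g}_{\chi_{R_k}}(y_{k}))^T$ as a member $h_k(y_{k-1},y_k)\in S_Q(q,N)$ plus a linear remainder, and then to collapse all the linear remainders together with the existing linear sum into the canonical form appearing in \eqref{func-rep2}.

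First I would fix an index $k$ with $1\le k\le m-1$ and treat the single quadratic block $\bm{g}_{\chi_{L_k}}(y_{k-1})\widetilde{\bm{H}}^{\{k\}} (\bm{g}_{\chi_{R_k}}(y_{k}))^T$. By the very definition of $S_Q(q,N)$ in Definition~\ref{func-equiv}, this function lies in $S'_Q(q,N)$ and hence differs from its representative $h_k\in S_Q(q,N)$ by a function of the form $\sum_{i=0}^{N-1}a_{k,i}g_i(y_{k-1})+\sum_{j=0}^{N-1}b_{k,j}g_j(y_k)$ for suitable $a_{k,i},b_{k,j}\in\Z_q$; this is exactly the freedom that Lemmas~\ref{lem-5} and \ref{lem-6} quantify, since changing the diagonal matrices $\bm{Q}_1,\bm{Q}_2$ shifts the value by linear terms in $y_{k-1}$ and $y_k$, while the permutation matrices only permute the symbols inside $\bm{g}_\chi$ and therefore keep us inside $S'_Q(q,N)$. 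Thus each quadratic block equals $h_k(y_{k-1},y_k)$ plus a linear combination of the $g_i(y_{k-1})$ and $g_j(y_k)$.

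Next I would collect all of these linear remainders across $k=1,\dots,m-1$ and combine them with the already-present linear term $\sum_{k=0}^{m-1}\sum_{i=0}^{N-1}c_{k,i}g_i(y_k)$. Since each remainder only involves variables $y_{k-1}$ and $y_k$ (indices in $\{0,\dots,m-1\}$, because the extreme contributions from $y_{-1}=i$ and $y_m=j$ were already absorbed in the proof of Corollary~\ref{rep1}), the total is again a linear combination $\sum_{k=0}^{m-1}\sum_{i=0}^{N-1}c'_{k,i}g_i(y_k)$ for new coefficients $c'_{k,i}$. At this point I would invoke Lemma~\ref{lem-7}: using the relation $\sum_{i=0}^{N-1}g_i(y_k)=1$, the $i=0$ terms can be eliminated in favor of a single constant, giving the canonical linear part $\sum_{k=0}^{m-1}\sum_{i=1}^{N-1}c_{k,i}g_i(y_k)+c'$ with $c'\in\Z_q$ absorbing all the constants. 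Substituting back yields precisely \eqref{func-rep2}, completing the proof.

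The main obstacle, or rather the point that must be handled with care rather than being genuinely hard, is bookkeeping at the boundary indices. One must verify that the $k=0$ and $k=m$ terms of the original sum in \eqref{func-rep4} really do reduce to linear combinations in $y_0$ and $y_{m-1}$ respectively (which Corollary~\ref{rep1} already establishes), so that no spurious quadratic term survives at the ends and the summation range for $h_k$ is genuinely $1\le k\le m-1$. A secondary subtlety is ensuring that the substitution $\widetilde{\bm{H}}^{\{k\}}\mapsto$ representative in $S_{\widetilde{H}}(q,N)$, and then $h_k\mapsto$ representative in $S_Q(q,N)$, are compatible: both modifications only ever cost linear terms, so they may be applied in either order, and the final coefficient relabeling $c_{k,i}\mapsto c'_{k,i}$ is harmless because $c_{k,i}$ range freely over $\Z_q$. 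I expect no genuine analytic difficulty here; the content of the theorem is entirely in the earlier lemmas, and Theorem~\ref{thm-3} is their organized restatement in normal form.
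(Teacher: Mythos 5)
Your proposal is correct and follows essentially the same route as the paper: the paper's own proof is a one-line appeal to Corollary \ref{rep1}, Lemma \ref{lem-7} and Definition \ref{func-equiv}, which is exactly the argument you spell out (replace each quadratic block by its $S_Q(q,N)$ representative at the cost of linear remainders, then normalize all linear terms via $\sum_{i=0}^{N-1}g_i(y_k)=1$ into the form $\sum_{k}\sum_{i\ge 1}c_{k,i}g_i(y_k)+c'$). Your treatment of the boundary indices $k=0$ and $k=m$ also matches how Corollary \ref{rep1} absorbs them, so there is no gap.
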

\begin{proof}
It follows from Corollary \ref{rep1}, Lemma \ref{lem-7} and Definition \ref{func-equiv}.
\end{proof}

\begin{corollary}\label{enum-1}
Theorem \ref{thm-3} determines
$$|S_Q(q, N)|^{m-1}\cdot q^{Nm-m+1}$$
distinct  functions, where $|S_Q(q, N)|$ denotes the number of the functions in the set $S_Q(q, N)$.
\end{corollary}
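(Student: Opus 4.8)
The plan is to recognize that the quantity $|S_Q(q,N)|^{m-1}\cdot q^{Nm-m+1}$ is precisely the number of parameter tuples appearing in the general form of Theorem \ref{thm-3}: there are $m-1$ independent quadratic terms $h_k\in S_Q(q,N)$, contributing the factor $|S_Q(q,N)|^{m-1}$, while the linear part ranges over $S_L(q,N)$, whose cardinality is $q^{Nm-m+1}$ by Lemma \ref{lem-7} (the $m(N-1)$ coefficients $c_{k,i}$ together with the constant $c'$). Since Theorem \ref{thm-3} already guarantees that every extracted function admits such a representation (surjectivity of the parametrization), the corollary will follow once I show the parametrization is injective, i.e. that distinct tuples $(h_1,\dots,h_{m-1};\,c_{k,i};\,c')$ yield distinct functions. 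Thus the entire content of the corollary is a uniqueness-of-representation statement, and the stated number is then just a product count.

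To prove injectivity I would first fix a convenient basis for the free $\Z_q$-module of all functions $\Z_N^m\to\Z_q$. Using $\sum_{i=0}^{N-1}g_i(y)=1$, each single-variable function is spanned by $\{1,g_1(y),\dots,g_{N-1}(y)\}$, so the products $\prod_{k=0}^{m-1}\phi_k$, with each $\phi_k\in\{1,g_1(y_k),\dots,g_{N-1}(y_k)\}$, form a basis of $N^m$ reduced monomials in which every function has a unique expansion. I assign to each such monomial a degree equal to the number of factors $\phi_k\neq 1$. In this language the constant $c'$ has degree $0$, the linear terms $g_i(y_k)$ ($1\le i\le N-1$) have degree $1$, and the expansion of a quadratic term $h_k(y_{k-1},y_k)$ contributes degree-$2$ monomials $g_i(y_{k-1})g_j(y_k)$ ($i,j\ge 1$) supported exactly on the variable pair $(y_{k-1},y_k)$, together with lower-degree terms.

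The crucial observation is that as $k$ runs over $1,\dots,m-1$ the pairs $(y_{k-1},y_k)$ are pairwise distinct, so the degree-$2$ parts coming from different $h_k$ occupy disjoint sets of reduced monomials, while the linear and constant part contributes nothing in degree $2$. Hence, if two tuples produce the same function, subtracting them and reading off the degree-$2$ component forces the degree-$2$ part of each $h_k-h'_k$ to vanish separately. A vanishing degree-$2$ part means $h_k-h'_k$ is a linear combination of the $g_i(y_{k-1})$ and $g_j(y_k)$ (after reabsorbing the constant via $\sum_i g_i=1$), i.e. $h_k$ and $h'_k$ are equivalent in the sense of Definition \ref{func-equiv}; since both are representatives in $S_Q(q,N)$ this gives $h_k=h'_k$ for every $k$. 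Once the quadratic terms agree, the difference collapses to an element of $S_L(q,N)$ equal to $0$, and the linear independence of $\{g_i(y_k),1\}$ established in Lemma \ref{lem-7} forces $c_{k,i}=c'_{k,i}$ and $c'=c''$. This yields injectivity, and with surjectivity from Theorem \ref{thm-3} the parametrization is a bijection, giving the count.

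I expect the main obstacle to be the middle step: setting up the reduced-monomial basis cleanly and verifying the disjoint-support property so that the degree-$2$ component isolates each $h_k$ individually, and then correctly matching the notion of ``no quadratic content'' with the equivalence relation that defines $S_Q(q,N)$. The final linear bookkeeping is routine given Lemma \ref{lem-7}.
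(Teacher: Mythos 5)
Your proposal is correct and follows essentially the same route as the paper: the paper also counts the functions as a union of cosets of $S_L(q,N)$, with the tuples $(h_1,\dots,h_{m-1})$ serving as coset leaders, giving $|S_Q(q,N)|^{m-1}$ cosets each of size $q^{Nm-m+1}$. The only difference is that the paper merely asserts the key injectivity claim (``We can prove that $\sum_k h_k - \sum_k h'_k \in S_L(q,N)$ iff $h_k=h'_k$ for all $k$''), whereas your reduced-monomial basis with the degree-$2$/disjoint-support argument on the consecutive variable pairs $(y_{k-1},y_k)$ actually supplies a valid proof of that claim, correctly matching ``vanishing degree-$2$ part'' with the equivalence of Definition \ref{func-equiv}.
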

\begin{proof}
The set consisting of  all the functions with the  form (\ref{func-rep2}) is a union of the cosets of $S_L(q, N)$.
We can prove that
$$\sum_{k=1}^{m-1} h_k(y_{k-1}, y_{k})-\sum_{k=1}^{m-1} h'_k(y_{k-1}, y_{k})\in S_L(q, N)$$
for $h_k(\cdot,\cdot), h'_k(\cdot,\cdot)\in S_Q(q, N)$ if and only if
$h_k=h'_k$ for $1\leq k\leq m-1$. So different choice of $h_k$ leads to different coset leaders, and the number of the coset leaders is $|S_Q(q, N)|^{m-1}$.
\end{proof}

The following result follows immediately from Lemma \ref{lem-4} and Theorem \ref{thm-3}.
\begin{theorem}\label{thm-4}
The entries of the corresponding function matrix of the seed PU matrix $\bm{M}(z)$ in (\ref{seed-PU})  at position $(i, j)$ ($0\leq i,j< N$) can be expressed by $$f_{i,j}(y_0,y_1,\cdots, y_{m-1})=f(y_0,y_1,\cdots, y_{m-1})+h(i,y_0)+h'(y_{m-1},j),$$
where $f(y_0,y_1,\cdots, y_{m-1})$ is a function with the form (\ref{func-rep2}) and $h(\cdot,\cdot), h'(\cdot,\cdot)\in S_Q(q, N)$.
\end{theorem}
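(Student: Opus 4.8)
The plan is to start from the exact, unreduced formula for the matrix entries supplied by Lemma \ref{lem-4}, namely
$$f_{i,j}(y_0,\ldots,y_{m-1})=\sum_{k=0}^{m}\bm{g}(y_{k-1})\widetilde{\bm{H}}^{\{k\}}\bm{g}(y_k)^T,$$
with the boundary convention $y_{-1}=i$ and $y_m=j$, and to split this $(m+1)$-term sum into a \emph{central block} running over $k=1,\ldots,m-1$ and two \emph{boundary blocks} corresponding to $k=0$ and $k=m$. The two boundary blocks are exactly the terms into which the row index $i$ and the column index $j$ enter (through $y_{-1}$ and $y_m$), so I would isolate them as $\bm{g}(i)\widetilde{\bm{H}}^{\{0\}}\bm{g}(y_0)^T$ and $\bm{g}(y_{m-1})\widetilde{\bm{H}}^{\{m\}}\bm{g}(j)^T$, and argue that they become the summands $h(i,y_0)$ and $h'(y_{m-1},j)$, while the central block becomes the index-free function $f(y_0,\ldots,y_{m-1})$.

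For the central block I would simply invoke Theorem \ref{thm-3}: the sum $\sum_{k=1}^{m-1}\bm{g}(y_{k-1})\widetilde{\bm{H}}^{\{k\}}\bm{g}(y_k)^T$ is precisely the object whose reduction is carried out there (via Lemmas \ref{lem-5}, \ref{lem-6} and \ref{lem-7} together with the equivalence of Definition \ref{func-equiv}), so it equals a function $f$ of the form (\ref{func-rep2}), with quadratic terms $h_k\in S_Q(q,N)$ for $1\le k\le m-1$ and linear terms ranging over $y_0,\ldots,y_{m-1}$. For each boundary block I would apply the same two reduction steps: Lemma \ref{lem-6} replaces the arbitrary phase matrix $\widetilde{\bm{H}}^{\{0\}}$ (resp.\ $\widetilde{\bm{H}}^{\{m\}}$) by a representative of its class in $S_{\widetilde{H}}(q,N)$ at the cost of permuting the two coordinate vectors, and Lemma \ref{lem-5} strips off the diagonal factors; what remains is a representative quadratic form, which by Definition \ref{func-equiv} is the element $h(\cdot,\cdot)$ (resp.\ $h'(\cdot,\cdot)$) of $S_Q(q,N)$ evaluated at $(i,y_0)$ (resp.\ $(y_{m-1},j)$).

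The step I expect to be the real work is the \emph{bookkeeping of the linear terms} spawned by these reductions. Each application of Lemmas \ref{lem-5} and \ref{lem-6} to a boundary block produces, besides the representative quadratic form, a linear combination of the basis functions in each of its two arguments. The linear terms in the \emph{internal} variables $y_0$ and $y_{m-1}$ cause no difficulty: they merge into the linear part $\sum_k\sum_i c_{k,i}g_i(y_k)$ of the central function $f$, and one checks that the resulting coefficients still respect the shape of (\ref{func-rep2}), which is immediate from Lemma \ref{lem-7} identifying $S_L(q,N)$ as a free module with basis $\{g_i(y_k),1\}$. The genuinely delicate point is the residual linear terms in the \emph{boundary} indices $i$ and $j$: these depend only on the row/column label and have no slot in $f$ or in the opposite boundary term. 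The hard part will be to show that such a term, being linear in the first (resp.\ second) argument of the corresponding boundary quadratic form, does not move that form out of its $S_Q(q,N)$-class by Definition \ref{func-equiv}, so that absorbing it leaves $h(i,y_0)$ (resp.\ $h'(y_{m-1},j)$) within the stated family; making this absorption precise, and confirming that no pure-index contribution is silently discarded, is the crux of the verification. Once this is settled, collecting the central block as $f$ and the two boundary blocks as $h(i,y_0)$ and $h'(y_{m-1},j)$ yields the claimed decomposition.
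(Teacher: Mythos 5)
Your route is the paper's own: the paper's entire proof of this theorem is the single assertion that it follows from Lemma~\ref{lem-4} and Theorem~\ref{thm-3}, which unpacked is exactly your splitting of the $(m+1)$-term sum of Lemma~\ref{lem-4} into a central block $k=1,\dots,m-1$ (reduced to the form (\ref{func-rep2}) via Lemmas~\ref{lem-5}, \ref{lem-6}, \ref{lem-7}) and two boundary blocks (reduced via Lemmas~\ref{lem-5}, \ref{lem-6} and Definition~\ref{func-equiv} to elements of $S_Q(q,N)$ evaluated at $(i,y_0)$ and $(y_{m-1},j)$).

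The step you single out as the crux, however, has a one-line resolution that your plan misses, and the way you propose to close it (showing the residual index-linear term keeps the boundary form inside its $S_Q(q,N)$-class) cannot work as stated: class membership modulo linear terms never produces the literal equality the theorem asserts, so that argument would still leave the discrepancy in place. After the reductions, the left boundary block reads
$$\widetilde{H}^{\{0\}}_{i,y_0}=h(i,y_0)+\sum_{l=0}^{N-1}a_l\,g_l(i)+\sum_{l=0}^{N-1}b_l\,g_l(y_0)+a',\qquad h\in S_Q(q,N),$$
and the correct observation is that the theorem is a statement about the entry at a \emph{fixed} position $(i,j)$: there, $\sum_l a_l g_l(i)+a'$ is simply an element of $\Z_q$, and the form (\ref{func-rep2}) contains a free constant $c'$ that absorbs it (likewise for the $j$-side residue, while the terms in $g_l(y_0)$ and $g_l(y_{m-1})$ merge into the linear part of $f$, as you say). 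So the ``pure-index contribution'' does have a slot, nothing is discarded, and no class argument is needed. The only genuine caveat---which your worry correctly senses---is that with this bookkeeping the constant term of $f$ may differ from entry to entry; if one insists on a single uniform $f$ for the whole matrix, the identity holds only up to additive constants $\lambda(i)+\mu(j)$ attached to rows and columns. That ambiguity is exactly the diagonal equivalence of Lemma~\ref{lem-5}: it amounts to multiplying the generating matrix on the left and right by constant diagonal unitary matrices, which preserves the desired-PU property, so every consequence drawn from the theorem (in particular Theorem~\ref{construction}) is unaffected. With this observation inserted in place of your deferred step, your outline is a complete proof and coincides with the paper's.
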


\section{Parameterized Constructions and Examples}

We still assume $N=p$ in this section.

\subsection{Parameterized Constructions}

By applying Theorem \ref{thm-1} in Section 3 and Theorem \ref{thm-4} in Section 4, we obtain the following results immediately.
\begin{theorem}\label{construction}
Let $f(y_0,y_1,\cdots, y_{m-1})$ be a function with the form  (\ref{func-rep2}) and $h(\cdot,\cdot), h'(\cdot,\cdot)\in S_Q(q, N)$.
\begin{itemize}
	\item[(1)] The following arrays form a CCA of size $N$:
$$f_{i,j}(y_0,y_1,\cdots, y_{m-1})=f(y_0,y_1,\cdots, y_{m-1})+h(i,y_0)+h'(y_{m-1},j), \ 0\leq i,j< N. $$
\item[(2)] For $\forall \pi$, the  sequences evaluated by following functions form a CCC of size $N$:
$$\pi\cdot f_{i,j}=f_{i,j}(y_{\pi(0)},y_{\pi(1)},\cdots, y_{\pi(m-1)}), 0\leq i, j< N.$$
	\item[(3)] The following arrays form a CAS of size $N$:
$$f_i(y_0,y_1,\cdots, y_{m-1})=f(y_0,y_1,\cdots, y_{m-1})+h(i,y_0),\  0\leq i< N.$$
\item[(4)] For $\forall \pi$, the $q$-ary sequences evaluated by following functions form a CSS of size $N$ and length $N^m$:
$$\pi\cdot f_{i}=f_i(y_{\pi(0)},y_{\pi(1)},\cdots, y_{\pi(m-1)}),\  0\leq i< N.$$
\end{itemize}
\end{theorem}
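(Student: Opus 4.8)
The plan is to recognize Theorem~\ref{construction} as a straightforward assembly of three facts already available: the seed matrix $\bm{M}(\bm{z})$ of the form (\ref{seed-PU}) is a desired PU matrix (Theorem~\ref{thm-2}), its corresponding function matrix has entries of the stated shape $f_{i,j}=f+h(i,y_0)+h'(y_{m-1},j)$ (Theorem~\ref{thm-4}), and any desired PU matrix transfers into CCAs, CCCs, CASs and CSSs (Theorem~\ref{thm-1}). The one genuine task is to run the extraction of Section~4 \emph{in reverse}: starting from an \emph{arbitrary} $f$ of the form (\ref{func-rep2}) and \emph{arbitrary} $h,h'\in S_Q(q,N)$, I must produce an actual seed PU matrix whose function matrix realizes the prescribed $f_{i,j}$.

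First I would reconstruct the BH matrices stage by stage. For $1\le k\le m-1$ pick $\widetilde{\bm{H}}^{\{k\}}\in S_{\widetilde{H}}(q,N)$ so that its quadratic term is the summand $h_k$ of (\ref{func-rep2}), and pick $\bm{H}^{\{0\}},\bm{H}^{\{m\}}$ to supply $h$ and $h'$. The remaining parameters of (\ref{func-rep2}) are absorbed by the two freedoms already isolated in the proof of Corollary~\ref{rep1}: the permutations $\chi_{L_k},\chi_{R_k}$ are installed by permutation matrices (Lemma~\ref{lem-6}) and the linear coefficients $c_{k,i}$ and constant $c'$ by diagonal unitary factors (Lemma~\ref{lem-5}), all of which keep each $\bm{H}^{\{k\}}$ inside $H(q,N)$. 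The crux --- and the step I expect to be the main obstacle --- is precisely to verify that this reconstruction is \emph{onto}: Corollary~\ref{rep1} only asserts that extracted functions lie \emph{in} the family (\ref{func-rep2}), whereas here I need the converse, that \emph{every} admissible triple $(f,h,h')$ is hit by some choice of $\bm{H}^{\{0\}},\dots,\bm{H}^{\{m\}}$. Once surjectivity is confirmed, Theorem~\ref{thm-2} guarantees that the resulting $\bm{M}(\bm{z})$ is para-unitary with entries equal to the generating functions of the arrays $f_{i,j}$, so Theorem~\ref{thm-1} applies.

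Parts~(1) and~(2) then follow verbatim from Theorem~\ref{thm-1}(1) and~(2): $\{f_{i,j}\}$ is a CCA, and for every $\pi$ the sequences evaluated from $\pi\cdot f_{i,j}$ form a CCC. For parts~(3) and~(4) I would restrict to a single column of the function matrix, which Theorem~\ref{thm-1}(3)--(4) already certifies to be a CAS and, after applying $\pi$ and evaluating, a CSS of size $N$ and length $N^m$. Here a second subtlety must be handled: the $j$-th column consists of the arrays $f_i+h'(y_{m-1},j)$, carrying the common boundary summand $h'(y_{m-1},j)$, while part~(3) asks for the bare $f_i=f+h(i,y_0)$; and one may \emph{not} simply delete this summand, because adding a fixed function to every member of a CAS does not preserve the identity (\ref{CAS-def-2}) in general. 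The clean remedy is to take $\bm{H}^{\{m\}}$ in dephased form, so that its first column has phase $0$ and hence $h'(y_{m-1},0)\equiv 0$, and then read off the column $j=0$; this column is \emph{exactly} $\{f_i\}_{0\le i<N}$. Since $\bm{H}^{\{m\}}$ affects only $h'$ and is otherwise free in parts~(3)--(4), the dephasing costs nothing, and Theorem~\ref{thm-1}(3)--(4) deliver the CAS and CSS conclusions.
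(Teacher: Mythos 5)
Your proposal is correct and follows the same route as the paper, whose entire proof of this theorem is the single line \lq\lq By applying Theorem \ref{thm-1} in Section 3 and Theorem \ref{thm-4} in Section 4, we obtain the following results immediately\rq\rq\ --- i.e., exactly your assembly of Theorem \ref{thm-2}, Theorem \ref{thm-4} and Theorem \ref{thm-1}. The two subtleties you isolate --- that one needs the extraction of Section 4 to be \emph{onto} (every admissible triple $(f,h,h')$ realized by an actual seed PU matrix, which your reconstruction via Lemmas \ref{lem-5} and \ref{lem-6} supplies), and that for parts (3)--(4) one must dephase $\bm{H}^{\{m\}}$ so that column $j=0$ of the function matrix is exactly $\{f_i\}$ rather than subtracting the non-affine common summand $h'(y_{m-1},j)$ --- are genuine gaps that the paper passes over silently, and your handling of both is sound.
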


We denote a set consisting of all the sequences in the CSSs constructed in Theorem \ref{construction}  by $S(q,N)$, i.e.,
\begin{equation}\label{seq-rep}
S(q,N)=\left\{\sum_{k=1}^{m-1} h_k(y_{\pi(k-1)}, y_{\pi(k)})+\sum_{k=0}^{m-1}\sum_{i=1}^{N-1}c_{k,i}g_i(y_{k})+c'\bigg|c_{k,i}, c'\in \Z_q \right\}.
\end{equation}
Obviously, $S(q,N)$ is a union of the cosets of linear terms $S_L(q,N)$ with coset representatives
$$\sum_{k=1}^{m-1} h_k(y_{\pi(k-1)}, y_{\pi(k)}),$$
where $h_k(\cdot,\cdot)\in S_Q(q, N) (1\leq k\leq m-1)$. Each coset representative can be associated with a labeled graph on $m$ vertices as shown in Figure \ref{fig-2}, where we label the vertices of the coset representative by $y_{\pi(0)}, y_{\pi(1)}, \cdots y_{\pi(m-1)}$ and join vertices $y_{\pi(k-1)}$ and $y_{\pi(k)}$ by an edge labeled $h_k(\cdot,\cdot)$.

\begin{figure}
\centering
\begin{tikzpicture}
\draw(0,0)--(4,0);\draw[dashed](4,0)--(6,0);\draw[dashed](8,0)--(10,0);
\fill (0,0) circle(2pt);\node[above] at (0,0){$y_{\pi(0)}$};
\node[below] at (1,0){$h_{1}(\cdot,\cdot)$};
\fill (2,0) circle(2pt);\node[above] at (2,0){$y_{\pi(1)}$};
\node[below] at (3,0){$h_{2}(\cdot,\cdot)$};
\fill (4,0) circle(2pt);\node[above] at (4,0){$y_{\pi(2)}$};
\node[below] at (5,0){$h_{3}(\cdot,\cdot)$};
\fill (10,0) circle(2pt);\node[above] at (10,0){$y_{\pi(m-1)}$};
\node[below] at (9,0){$h_{m-1}(\cdot,\cdot)$};
\end{tikzpicture}
\caption{The Graph of Coset Representatives}\label{fig-2}
\end{figure}
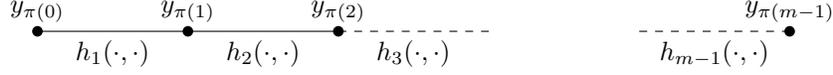

\begin{corollary}\label{seq-enum}
The set $S(q,N)$ determines exactly
$$\frac{1}{2}m!\cdot |S_Q(q, N)|^{m-1}\cdot q^{Nm-m+1}$$
distinct sequences.
\end{corollary}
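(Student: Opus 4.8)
The plan is to count $S(q,N)$ as a disjoint union of cosets of the linear module $S_L(q,N)$ and to show that the number of distinct cosets is exactly $\frac12 m!\,|S_Q(q,N)|^{m-1}$; multiplying by $|S_L(q,N)|=q^{Nm-m+1}$ (Lemma \ref{lem-7}) then yields the claim. Writing $R_{\pi,\bm{h}}=\sum_{k=1}^{m-1}h_k(y_{\pi(k-1)},y_{\pi(k)})$ for a permutation $\pi$ of $\{0,\dots,m-1\}$ and a tuple $\bm{h}=(h_1,\dots,h_{m-1})\in S_Q(q,N)^{m-1}$, formula (\ref{seq-rep}) exhibits $S(q,N)=\bigcup_{\pi,\bm{h}}\big(R_{\pi,\bm{h}}+S_L(q,N)\big)$ (the linear part is the full $S_L(q,N)$ regardless of $\pi$, since its coefficients are free). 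As $S_L(q,N)$ is an additive subgroup, distinct cosets are disjoint and each has exactly $q^{Nm-m+1}$ elements, so everything reduces to counting how many distinct cosets the representatives $R_{\pi,\bm{h}}$ produce.

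To each pair $(\pi,\bm{h})$ I attach the labeled path of Figure \ref{fig-2}: vertices $y_{\pi(0)},\dots,y_{\pi(m-1)}$ in that order, with edge $k$ joining $y_{\pi(k-1)}$ and $y_{\pi(k)}$ and carrying label $h_k$. The central claim is that the coset $R_{\pi,\bm{h}}+S_L(q,N)$ determines this labeled path up to reversal. The invariant I would use is the genuinely bivariate part: expanding in the basis $\{1\}\cup\{g_i(y_k):i\ge1\}\cup\{g_i(y_a)g_j(y_b):i,j\ge1,\ a\neq b\}$ of the ambient module (Lemma \ref{lem-7}), the coefficients of the products $g_i(y_a)g_j(y_b)$ with $i,j\ge1$ are unaffected when an element of $S_L(q,N)$ is added, hence are coset invariants. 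For $R_{\pi,\bm{h}}$ these coefficients are supported exactly on the edge pairs $\{a,b\}$ of the path (distinct edges touch distinct variable pairs, so there is no interference), and on each edge they reproduce the $(N-1)\times(N-1)$ quadratic matrix of the label carried there. Two facts then recover the path. First, no label has vanishing quadratic matrix, because $0\notin S_Q(q,N)$: a representative with zero quadratic matrix would force its underlying BH matrix to equal $\bm{Q}_1\bm{J}_N\bm{Q}_2$ for diagonal phase matrices $\bm{Q}_1,\bm{Q}_2$, which has rank $1$ and cannot be Hadamard for $N\ge2$. Second, the edge set of a path determines the vertex ordering up to reversal.

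It remains to pin down the two-to-one structure. Reversing the path sends $(\pi,\bm{h})$ to $(\pi',\bm{h}')$ with $\pi'=\pi\circ\rho$, where $\rho(k)=m-1-k$, and with $h'_k$ the $S_Q(q,N)$-representative of the transpose $h_{m-k}(y_1,y_0)$; using Lemmas \ref{lem-5} and \ref{lem-6} together with the fact that the transpose of a BH matrix is again a BH matrix, this transpose descends to a well-defined involution on $S_Q(q,N)$, so $(\pi',\bm{h}')$ is again admissible. A short index shift shows $R_{\pi',\bm{h}'}=R_{\pi,\bm{h}}+(\text{linear term})$, so reversal preserves the coset; conversely, for a fixed orientation the labels are recovered uniquely from their quadratic matrices, which is the per-path injectivity already established in Corollary \ref{enum-1}, while across orientations they appear as transposes. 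Thus the coset determines $(\pi,\bm{h})$ precisely up to the reversal involution, and this involution is fixed-point-free for $m\ge2$ since $\rho\neq\mathrm{id}$ forces $\pi'\neq\pi$. Hence the $m!\,|S_Q(q,N)|^{m-1}$ labeled paths collapse into $\frac12 m!\,|S_Q(q,N)|^{m-1}$ cosets, and multiplying by $|S_L(q,N)|=q^{Nm-m+1}$ finishes the count. I expect the delicate step to be the coset-invariant argument of the second paragraph — verifying that the quadratic matrices really are coset invariants, that every edge is visible via the rank-$1$ exclusion, and that the path-reversal and label-transpose ambiguities coincide exactly — rather than the bookkeeping of the final product.
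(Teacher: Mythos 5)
Your proposal is correct and takes essentially the same route as the paper's own proof: both count modulo $S_L(q,N)$ (whose size $q^{Nm-m+1}$ comes from Lemma \ref{lem-7}), identify each coset with a labeled path as in Figure \ref{fig-2}, and show that the only collision is the path-reversal involution $\pi\mapsto\pi\circ\rho$ with each label replaced by the $S_Q(q,N)$-representative of its transpose (legitimate because the transpose of a BH matrix is again BH), which is fixed-point-free and yields the factor $\frac{1}{2}$. The only difference is one of rigor, in your favor: the bivariate-coefficient coset invariant together with the rank-one exclusion ($0\notin S_Q(q,N)$) supplies a justification for the dichotomy that the paper's proof merely asserts at the step ``which leads to'' the two cases.
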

\begin{proof}
The sequences evaluated by the functions in the set $S(q,N)$ can be obtained by permutations acting on the functions in (\ref{func-rep2}). There are  $m!$ permutations and $|S_Q(q, N)|^{m-1}\cdot q^{Nm-m+1}$ different functions in  (\ref{func-rep2}) from Corollary \ref{enum-1}, respectively.

For $G=(\Z_N^m, \cdot)$, $\Z_q[G]$ is a $\Z_q$-module and $S_L(q, N)$ is a $\Z_q$-submodule. For $\forall f \in \Z_q[G]$, denote the quotient element of $f$ in the quotient module, $\Z_q[G]$ modulo $S_L(q, N)$, by $\overline{f}$.

Let $f$ and $f'$ be two functions with the forms
$$f=\sum_{k=1}^{m-1} h_k(y_{k-1}, y_{k})+\sum_{k=0}^{m-1}\sum_{i=1}^{N-1}c_{k,i}g_i(y_{k})+c'$$
and
$$f'=\sum_{k=1}^{m-1} h'_k(y_{k-1}, y_{k})+\sum_{k=0}^{m-1}\sum_{i=1}^{N-1}c'_{k,i}g_i(y_{k})+c''. $$
If $\pi\cdot f=\pi'\cdot  f'$, we must have $\overline{\pi\cdot f}=\overline{\pi'\cdot f'}$, i.e.,
$$\sum_{k=1}^{m-1} \overline{h}_k(y_{\pi(k-1)}, y_{\pi(k)})=\sum_{k=1}^{m-1} \overline{h'}_k(y_{\pi'(k-1)}. y_{\pi'(k)}),$$
which leads to
\begin{itemize}
\item[(1)] $\pi=\pi', \overline{h}_k(y_0,y_1)-\overline{h'}_k(y_0,y_1)=0$ for $1\leq k\leq m-1$;
\item[(2)] $\pi(k)=\pi'(m-k), \overline{h}_k(y_0,y_1)-\overline{h'}_{m-k}(y_1,y_0)=0$
for $1\leq k\leq m-1$.
\end{itemize}
From the definition of $S_Q(q,N)$, the above two conditions are equivalent to
\begin{itemize}
\item[(1)] $\pi=\pi', {h}_k(y_0,y_1)={h'}_k(y_0,y_1)$ for $1\leq k\leq m-1$;
\item[(2)] $\pi(k)=\pi'(m-k), {h}_k(y_0,y_1)-{h'}_{m-k}(y_1,y_0)\in S_L(q, N)$
for $1\leq k\leq m-1$.
\end{itemize}

Assume that ${h}_k(y_0,y_1)$ is a quadratic term such that ${h}_k(y_0,y_1)=\bm{g}_{\chi_{L}}(y_{0})\widetilde{\bm{H}}\bm{g}_{\chi_{R}}(y_{1})^T$ for permutations $\chi_{L}, \chi_{R}$ and Hadamard matrix $\bm{H}$. The transpose of $\bm{H}$, denoted by $\bm{H}^{T}$, is still a Hadamard matrix. Then ${h}_k(y_1,y_0)$   is still a quadratic term, since
$${h}_k(y_1,y_0)=\bm{g}_{\chi_{L}}(y_{1})\widetilde{\bm{H}}\bm{g}_{\chi_{R}}(y_{0})^T=\bm{g}_{\chi_{R}}(y_{0})\widetilde{\bm{H}^T}\bm{g}_{\chi_{L}}(y_{1})^T. $$
Let ${h'}_{m-k}(y_0,y_1)$ be the representative of the quadratic term ${h}_k(y_1,y_0)$ (modulo $S_L(q, N)$) in $ S_Q(q, N)$. Then  ${h'}_{m-k}$ in the above case (2) must be exist and unique.

From the discussion above, for a given pair $(f, \pi)$, there exists exactly another pair $(f', \pi')$, which is determined by the case (2) in the above condition, such that $\pi\cdot f=\pi'\cdot  f'$. Thus, the number of the sequences in set $S(q,N)$ is equal to half of the  product of the number of the permutations and the number of the functions in (\ref{func-rep2}).
\end{proof}
\begin{remark}
It was shown in  \cite[Section VI]{TSP2018}  there are $m!\cdot \left(\frac{(N-1)!)^2}{n_0}\right)^{m}\cdot q^{Nm-m+1}$ distinct sequences, where $n_0$ is a specified number. The results in Corollary \ref{seq-enum} correct the enumeration in \cite{TSP2018}, since the explicit function form is given here while \cite{TSP2018} did not. For example, for the case $N=2$ and the case $N=q=3$, $n_0$ in \cite{TSP2018} equals to Euler function $\varphi(N)$. For the case $N=2$, Corollary \ref{seq-enum} in this paper determines $\frac{1}{2}m!\cdot q^{m+1}$ distinct Golay sequences, whereas there are $m!\cdot q^{m+1}$ in \cite{TSP2018}. For the case $N=q=3$, Corollary \ref{seq-enum} determines $m!\cdot2^{m-2}\cdot3^{2m+1}$  distinct ternary sequences, whereas there are $m!\cdot2^{m} 3^{2m+1}$ in \cite{TSP2018}. A general explicit form of the functions constructed here will be given in Subsections 5.2 and 5.3 for the cases of $N=2$ and $N=q=3$ respectively.
\end{remark}

We now compute the set $S_L(q, N)$ and $S_Q(q, N)$ to determine the explicit form of functions for given $q$ and $N$. The detailed processes are shown as follows.

\begin{itemize}
\item[(1)]Determine the set $S_{\widetilde{H}}(q, N)$, which consists of all the representatives of phase matrices in $\widetilde{H}(q, N)$.
\item[(2)]Find the functions $g_i$: $\mathbb{Z}_N\rightarrow \mathbb{Z}_q$ such that $g_i(j)=\delta_{i,j}$ for $i,j\in \Z_N$. Then compute the set
$$S_L(q, N)=\left\{\sum_{k=0}^{m-1}\sum_{i=1}^{N-1}c_{k,i}g_i(y_{k})+c'\bigg| c_{k,i}, c'\in \Z_q, 0\leq k\leq m-1, 1\leq i\leq N-1 \right\}.$$
\item[(3)]Compute the bivariate functions by $\bm{g}_{\chi_{L}}(y_{0})\widetilde{\bm{H}}\bm{g}_{\chi_{R}}(y_{1})^T$ for $\forall \widetilde{\bm{H}}\in S_{\widetilde{H}}(q, N)$, and $\forall \chi_{L}, \chi_{R}$ being the permutations of symbols $\{0, 1, \cdots, {N-1}\}$. The set $S_Q(q, N)$ is obtained by collecting all the representatives of the functions of the form $\bm{g}_{\chi_{L}}(y_{0})\widetilde{\bm{H}}\bm{g}_{\chi_{R}}(y_{1})^T$ modulo $S_L(q, N)$.
\item[(4)]Any function extracted from the seed PU matrices can be expressed in a general form
$$f=\sum_{k=1}^{m-1} h_k(y_{k-1}, y_{k})+S_L(q, N), $$
where $h_k(\cdot,\cdot)\in S_Q(q, N)$ for $1\leq k\leq m-1$.
\item[(5)]CCAs, CASs, CCCs and CSSs are constructed by Theorem \ref{construction}.
\end{itemize}

For Step (1), there is a large volume of the research papers on the existences, enumerations and constructions of the BH matrices. A source of BH matrices with small $N$ and $q$ can be found in \cite{Hada}. We will see from the following subsections that the computations on Step (2) is easy, but the computations on Step (3) is heavy. It is obvious that Steps (4) and (5) are straightforward if Steps (2) and (3) are done. We give the cases for $N=2$ and $q$ even, $N=q=3$, and $N=4$ $q=2$ or $4$ in the rest of the four subsections.

\subsection{Case $N=2$: Golay Sequences}

From Example \ref{exam-1}, BH matrices of order 2 do not exist for $q$ odd. Let $N=2$ and $q$  even here.

(1) Determine the set  $S_{\widetilde{H}}(q, 2)$.
From Example \ref{exam-4}, there is only one phase matrix in $S_{\widetilde{H}}(q, 2)$:
$$
\begin{bmatrix}
0 & 0 \\
  0 &q/2
\end{bmatrix}.
$$

(2) Computation of the set $S_L(q, 2)$.
From Example \ref{exam-7}, we have the basis functions
$$\left\{
\begin{aligned}
g_0(x)&=1-x,\\
g_1(x)&=x.
\end{aligned}
\right.
$$
Then we obtain the linear terms
$$S_L(q, 2)=\left\{\sum_{k=0}^{m-1}c_kx_{k}+c'\bigg| c_k, c'\in \Z_q, 0\leq k\leq m-1 \right\}.$$

(3) Computation of the set $S_Q(q, 2)$. We have
$$\bm{g}(x)=(g_0(x), g_1(x))=(1-x, x)=(-1,1)x+(1,0)\cdot 1.$$
The term $(1,0)\cdot 1$ only produces the linear terms in $\bm{g}_{\chi_{L}}(x_{0})\widetilde{\bm{H}}\bm{g}_{\chi_{R}}(x_{1})$. So we only need to consider the term $\bm{g}'(x)=(-1,1)x.$ Let $\chi$ be a permutation of symbols $\{0, 1\}$. Then $\bm{g}'_{\chi}(x)=(-1,1)x$ or $\bm{g}'_{\chi}(x)=(1,-1)x$. For any permutations $\chi_{L}$ and $\chi_{R}$, we have
$$\bm{g}'_{\chi_L}(x_0)
\begin{bmatrix}
0 & 0 \\
  0 &q/2
\end{bmatrix}
\bm{g}'_{\chi_R}(x_1)^T
={\chi_L}(-1,1)
\begin{bmatrix}
0 & 0 \\
  0 &q/2
\end{bmatrix}
{\chi_R}(-1,1)^Tx_0x_1=\frac{q}{2}x_0x_1.
$$
So $S_Q(q, 2)$ contains only one function: $\frac{q}{2}x_0x_1$.

By applying  Theorems \ref{thm-3} and \ref{construction},  we obtain the following construction.
\begin{construction}
Let $\pi$ be a permutation of $\{0, 1, \cdots, m-1\}$. For $N=2$ and $q$ even, we have
\begin{itemize}
\item[(1)]
The Golay arrays extracted from the seed PU matrices can be expressed  by
\begin{equation}\label{extrac-func-1}
f(x_0,x_1,\cdots, x_{m-1})=\frac{q}{2}\sum_{k=1}^{m-1} x_{k-1}x_{k}+\sum_{k=0}^{m-1}c_kx_{k}+c',
\end{equation}
for  $c_k, c'\in \Z_q, 0\leq k\leq m-1$.
\item[(2)]
The corresponding function matrix $\widetilde{\bm{M}}(\bm{x})$  is given by
$$\widetilde{\bm{M}}(\bm{x})=f\cdot \bm{J}_2+\frac{q}{2}\cdot
\begin{bmatrix}
     0 & x_{m-1} \\
     x_{0} & x_0+x_{m-1} \\
   \end{bmatrix},
$$
where the GBF $f$ is in the form (\ref{extrac-func-1}). The sequences evaluated by the matrix $\pi\cdot \widetilde{\bm{M}}(\bm{x})$ form a CCC of size $2$.
\item[(3)]
The set $S(q,2)$ consisting of all the Golay sequences  derived from the seed PU matrices of order $2$ is given by
$$S(q,2)=\left\{\frac{q}{2}\sum_{k=1}^{m-1} x_{\pi(k-1)}x_{\pi(k)}+\sum_{k=0}^{m-1}c_kx_{k}+c' \bigg| \forall \pi, c_k, c'\in \Z_q, 0\leq k\leq m-1 \right\}.$$
\item[(4)] The sequences evaluated by
$$\left\{
\begin{aligned}
&f,\\
&f+\frac{q}{2}x_{\pi(0)}\\
\end{aligned}\right.
$$
form a GSP for $\forall f\in S(q,2)$.
\end{itemize}
\end{construction}

This construction  coincides with the well known results on standard Golay sequences \cite{DavisJedwab99} by Davis and Jedwab. The enumeration in Corollary \ref{seq-enum} for $N=2$ also agrees with the number of the standard Golay sequences.

\subsection{Case $N=q=3$: Ternary Sequences of Size 3.}

Let $N=q=3$ in this subsection.

(1) Determine the set $S_{\widetilde{H}}(3, 3)$.
There is only one phase matrix in  $\widetilde{H}(3,3)$:
$$
\begin{bmatrix}
0 & 0 & 0\\
0 & 1 & 2\\
0 & 2 & 1
\end{bmatrix}.
$$

(2) Computation of the set $S_L(3, 3)$.
From Definition \ref{basis}, it is easy to verify that
$$\left\{
\begin{aligned}
g_0(y)&=2y^2+1,\\
g_1(y)&=2y^2+2y,\\
g_2(y)&=2y^2+y.
\end{aligned}
\right.
$$
Then we obtain the linear terms
$$S_L(3, 3)=\left\{\sum_{k=0}^{m-1}c_{k_2}y_{k}^2+\sum_{k=0}^{m-1}c_{k_1}y_{k}+c'\bigg| c_{k_1},c_{k_2}, c'\in \Z_3, 0\leq k\leq m-1 \right\}.$$

(3) Computation of the set $S_Q(3, 3)$. We have
\begin{eqnarray*}
\bm{g}(y)&=&(g_0(y), g_1(y), g_2(y))\\
&=&(2y^2+1, 2y^2+2y, 2y^2+y)\\
&=&(2,2,2)y^2+(0,2,1)y+(1,0,0)\cdot 1.
\end{eqnarray*}

The term $(1,0,0)\cdot 1$ only produces the linear terms with respect to $S_L(3, 3)$ in $\bm{g}_{\chi_{L}}(y_{0})\widetilde{\bm{H}}\bm{g}_{\chi_{R}}(y_{1})$. So we only need to consider
$$\bm{g}'(y)=(2,2,2)y^2+(0,2,1)y$$
and
$$\bm{g}'_{\chi}(y)=(2,2,2)y^2+\bm{e}y,$$
where
$$\bm{e}\in E=\{(0,2,1), (0,1,2),(1,0,2),(1,2,0),(2,1,0), (2,0,1)\}.$$

 For any permutations $\chi_{L}$ and $\chi_{R}$ of $\{0,1,2\}$, we have
$$
\bm{g}'_{\chi_L}(y_0)
\begin{bmatrix}
0 & 0 & 0\\
0 & 1 & 2\\
0 & 2 & 1
\end{bmatrix}
\bm{g}'_{\chi_R}(y_1)^T
=\left((2,2,2)y_0^2+\bm{e}_1y_0\right)
\begin{bmatrix}
0 & 0 & 0\\
0 & 1 & 2\\
0 & 2 & 1
\end{bmatrix}
\left((2,2,2)^Ty_1^2+\bm{e}_2^Ty_1\right),
$$
where $\bm{e}_1, \bm{e}_2\in E$. Since
$$(2,2,2)
\begin{bmatrix}
0 & 0 & 0\\
0 & 1 & 2\\
0 & 2 & 1
\end{bmatrix}=(0,0,0)$$
and
$$\begin{bmatrix}
0 & 0 & 0\\
0 & 1 & 2\\
0 & 2 & 1
\end{bmatrix}
(2,2,2)^T=(0,0,0)^T,$$
we have

$$\bm{g}'_{\chi_L}(y_0)
\begin{bmatrix}
0 & 0 & 0\\
0 & 1 & 2\\
0 & 2 & 1
\end{bmatrix}
\bm{g}'_{\chi_R}(y_1)^T=\bm{e}_1
\begin{bmatrix}
0 & 0 & 0\\
0 & 1 & 2\\
0 & 2 & 1
\end{bmatrix}
\bm{e}_2^Ty_0y_1.
$$
By substituting the vectors $\bm{e}_1, \bm{e}_2\in E$ into the above formula, we obtain

$$S_Q(3, 3)=\{y_0y_1,2y_0y_1\}.$$

By applying Theorems \ref{thm-3} and \ref{construction},  we obtain the following construction.

\begin{construction}
Let $\pi$ be a permutation of $\{0, 1, \cdots, m-1\}$. For $N=q=3$, we have
\begin{itemize}
\item[(1)]
The ternary arrays extracted from the seed PU matrices can be expressed by
\begin{equation}\label{extrac-func-2}
f(y_0,y_1,\cdots, y_{m-1})=\sum_{k=1}^{m-1}d_k y_{k-1}y_{k}+\sum_{k=0}^{m-1}c_{k_2}y_{k}^2+\sum_{k=0}^{m-1}c_{k_1}y_{k}+c',
\end{equation}
for $d_k\in \Z^*_3 \ (1\leq k< m), c_{k_1}, c_{k_2}, c'\in \Z_3 \ (0\leq k< m).$
\item[(2)]
The corresponding function matrix $\widetilde{\bm{M}}(\bm{y})$  is given by
$$\widetilde{\bm{M}}(\bm{y})=f\cdot \bm{J}_3+
          \begin{bmatrix}
            0 & y_{m-1} & 2y_{m-1} \\
            y_{0} & y_0+y_{m-1} & y_0+2y_{m-1} \\
            2y_{0} & 2y_0+y_{m-1} & 2y_0+2y_{m-1} \\
          \end{bmatrix},
          $$
where the function $f$ is in the form (\ref{extrac-func-2}).  The sequences evaluated by the matrix $\pi\cdot \widetilde{\bm{M}}(\bm{y})$ form a CCC of size $3$.
\item[(3)] The collection of ternary sequences in
CSSs of size 3 derived from the seed PU matrices of order 3 can be expressed by
$$S(3,3)=\left\{\sum_{k=1}^{m-1}d_k y_{\pi(k-1)}y_{\pi(k)}+\sum_{k=0}^{m-1}c_{k_2}y_{k}^2+\sum_{k=0}^{m-1}c_{k_1}y_{k}+c' \right\}$$
for $\forall \pi, d_k\in \Z^*_3 \ (1\leq k< m), c_{k_1}, c_{k_2}, c'\in \Z_3 \ (0\leq k< m).$
\item[(3)] The sequences evaluated by
$$\left\{
\begin{aligned}
&f,\\
&f+y_{\pi(0)},\\
&f+2y_{\pi(0)}
\end{aligned}\right.
$$
form a ternary CSS of size 3 for $\forall f\in S(3,3)$.
\end{itemize}
\end{construction}
From the enumeration in Corollary \ref{seq-enum}, the set $S(3,3)$  determines exactly
$m!\cdot2^{m-2}\cdot3^{2m+1}$
distinct sequences for $m>1$. As far as our knowledge, the sequences in Construction 2 are never reported in the literature.

\subsection{Case $N=4$ $q=2$: Binary Sequences of Size 4}

Let $N=4,q=2$ in this subsection.

(1) Determine the set $S_{\widetilde{H}}(2, 4)$.
From Example \ref{exam-5}, there is only one phase matrix in $S_{\widetilde{H}}(2, 4)$:
$$
\begin{bmatrix}
0 & 0 &0&0\\
0 & 1&0&1\\
 0 &0&1&1\\
  0 &1&1& 0
\end{bmatrix}.
$$

(2) Computation of the set $S_L(2, 4)$.
From the proof in the Subsection 4.3, if we take the direct product of two $\Z_2$ to replace $\Z_4$, the results are still valid.
So any variable $y\in \Z_4$ can be represented by $(x_0, x_1)$, where $y=2x_1+x_0$ and $x_0, x_1\in \Z_2$.
From Example \ref{exam-7}, we have the basis functions
$$\left\{
\begin{aligned}
g_0(y)&=(1-x_0)(1-x_1),\\
g_1(y)&=x_0(1-x_1),\\
g_2(y)&=(1-x_0)x_1,\\
g_3(y)&=x_0x_1.
\end{aligned}
\right.
$$
Let $y_k=(x_{2k},x_{2k+1})$. Then the function $f(y_0,y_1,\cdots, y_{m-1})$ can be re-expressed by a GBF
$$f(x_0, x_1, \cdots, x_{2m-1}).$$ The linear terms are given by
$$S_L(2, 4)=\left\{\sum_{k=0}^{m-1}d_kx_{2k}x_{2k+1} +\sum_{k=0}^{2m-1}c_kx_{k}+c'\bigg| d_k, c_k, c'\in \Z_2, 0\leq k\leq 2m-1 \right\}.$$

(3) Computation of the set $S_Q(2, 4)$. We have
\begin{eqnarray*}
\bm{g}(y)&=&(g_0(y), g_1(y), g_2(y), g_3(y))\\
&=&(1,-1,-1,1)x_0x_1+(-1,1,0,0)x_0+(-1,0,1,0)x_1+(1,0,0,0)\cdot 1.
\end{eqnarray*}

It is obvious that $(1,0,0,0)\cdot 1$ only produces linear terms. For any permutation $\chi$ of $\{0,1,2,3\}$, we have
$${\chi}(1,-1,-1,1)
\begin{bmatrix}
0 & 0 &0&0\\
0 & 1&0&1\\
 0 &0&1&1\\
  0 &1&1& 0
\end{bmatrix}=(0,0,0,0)$$
and
$$\begin{bmatrix}
0 & 0 &0&0\\
0 & 1&0&1\\
 0 &0&1&1\\
  0 &1&1& 0
\end{bmatrix}
{\chi}(1,-1,-1,1)^T=(0,0,0,0)^T.$$

So we only need to consider
$$\bm{g}'_{\chi}(y)=\chi(-1,1,0,0)x_0+\chi(-1,0,1,0)x_1=\chi(-x_1-x_0,x_0,x_1,0).$$

By  computing
\begin{eqnarray*}
S&=&\bm{g}'_{\chi_L}(y_0)
\begin{bmatrix}
0 & 0 &0&0\\
0 & 1&0&1\\
 0 &0&1&1\\
  0 &1&1& 0
\end{bmatrix}
\bm{g}'_{\chi_R}(y_1)^T\\
&=&\chi_L(-x_1-x_0,x_0,x_1,0)
\begin{bmatrix}
0 & 0 &0&0\\
0 & 1&0&1\\
 0 &0&1&1\\
  0 &1&1& 0
\end{bmatrix}
\chi_R(-x_3-x_2,x_2,x_3,0)^T
\end{eqnarray*}
for all permutation $\chi_L, \chi_R$ of $\{0,1,2,3\}$, we obtain the set $$S_Q(2, 4)=\{\varphi_i((x_0, x_1), (x_2,x_3))\mid 1 \leq i\leq 6\},$$
where
$$\left\{
\begin{aligned}
&\varphi_1=x_0x_2+x_0x_3+x_1x_2,\\
&\varphi_2=x_0x_2+x_0x_3+x_1x_3,\\
&\varphi_3=x_0x_2+x_1x_3+x_1x_2,\\
&\varphi_4=x_1x_3+x_0x_3+x_1x_2,\\
&\varphi_5=x_0x_3+x_1x_2,\\
&\varphi_6=x_1x_3+x_0x_2.
\end{aligned}\right.
$$

By applying  Theorems \ref{thm-3} and \ref{construction},  we obtain the following construction.
\begin{construction}\label{construction-3}
Let $\pi$ be a permutation of $\{0, 1, \cdots, m-1\}$ and $\bm{x}=(x_0, x_1, \cdots x_{2m-1})$. For $N=4$ and $q=2$, we have
\begin{itemize}
\item[(1)]
The binary arrays extracted from the seed PU matrices can be expressed by

\begin{equation}\label{extrac-func-3}
f(\bm{x})=\sum_{k=1}^{m-1}h_k((x_{2k-2},x_{2k-1}), (x_{2k},x_{2k+1}) )+\sum_{k=0}^{m-1}d_kx_{2k}x_{2k+1} +\sum_{k=0}^{2m-1}c_kx_{k}+c',
\end{equation}
for  $h_k(\cdot,\cdot)\in S_Q(2, 4) (1\leq k\leq m-1)$ and  $d_k, c_k, c'\in \Z_2, 0\leq k\leq m-1$.

\item[(2)] The corresponding function matrix $\widetilde{\bm{M}}(\bm{x})$  is given by
$$\widetilde{\bm{M}}(\bm{x})=f\cdot \bm{J}_4+A(x_0, x_1)\cdot \bm{J}_4+\bm{J}_4 \cdot A(x_{2m-2}, x_{2m-1}),$$
where the GBF $f$ is in the form (\ref{extrac-func-3}) and  $A(x_0, x_1)=diag(0, x_0, x_1, x_0+x_1)$ is a diagonal matrix.
\item[(3)]
The collection of binary  sequences in
CSSs of size $4$ derived from the seed PU matrices of order $4$ can be expressed by
$$S(2,4)=\left\{\sum_{k=1}^{m-1}h_k((x_{2\cdot\pi(k-1)},x_{2\cdot\pi(k-1)+1}), (x_{2\cdot\pi(k)},x_{2\cdot\pi(k)+1}) )+\sum_{k=0}^{m-1}d_kx_{2k}x_{2k+1} +\sum_{k=0}^{2m-1}c_kx_{k}+c' \right\}$$
for  $h_k(\cdot,\cdot)\in S_Q(2, 4) (1\leq k\leq m-1)$ and  $d_k, c_k, c'\in \Z_2, 0\leq k\leq m-1$.
\item[(4)] The sequences evaluated by
$$\left\{
\begin{aligned}
&f,\\
&f+x_{2\cdot\pi(0)},\\
&f+x_{2\cdot\pi(0)+1},\\
&f+x_{2\cdot\pi(0)}+x_{2\cdot\pi(0)+1}
\end{aligned}\right.
$$
form a binary CSS of size 4 for $\forall f\in S(2,4)$.
\end{itemize}
\end{construction}

The sequences in Construction 3 are not shown in \cite{Paterson00,Schmidt07}, but are reported in \cite{Wu2016}. Furthermore, the first-order Reed-Muller code is a sub-code of $S_L(2, 4)$, which is also a linear code. Note that $|RM_2(1, 2m)|=2^{2m+1}$, while $|S_L(2, 4)|=2^{3m+1}$. The sequences shown here fill up $\frac{1}{2}m!\cdot 6^{m-1}$ distinct cosets of $S_L(2, 4)$ where the collection of the coset representatives are in the set $$\left\{\sum_{k=1}^{m-1}h_k((x_{2\cdot\pi(k-1)},x_{2\cdot\pi(k-1)+1}), (x_{2\cdot\pi(k)},x_{2\cdot\pi(k)+1}) )\bigg| h_k(\cdot,\cdot)\in S_Q(2, 4), 1\leq k\leq m-1 \right\}.$$ From the enumeration in Corollary \ref{seq-enum}, the set $S(2,4)$  determines exactly
$m!\cdot 6^{m-1}2^{3m}$
distinct sequences.

\begin{remark}\label{remark-2-6}
Note that the size of the arrays extracted here is $4\times 4\times \cdots \times 4$. So the number of the variables of GBFs in the form (\ref{extrac-func-3}) is $2m$, but $\pi$ in Construction 3 is a permutation of $\{0, 1, \cdots, m-1\}$. We will construct the arrays of size $2\times 2\times \cdots \times 2$ and extend the results in Construction 3 in next section.
\end{remark}

\subsection{Case $N=4$, $q=4$: Quaternary Sequences of Size 4}

Let $N=4,q=4$ in this subsection.

(1) Determine the set $S_{\widetilde{H}}(4, 4)$.
From Example \ref{exam-6}, there are two phase matrices in $S_{\widetilde{H}}(4, 4)$, which are
$$
\begin{bmatrix}
0 & 0 &0&0\\
0 & 2&0&2\\
 0 &0&2&2\\
  0 &2&2& 0
\end{bmatrix}
~and~ \begin{bmatrix}
0 & 0 &0&0\\
0 &1 &2&3\\
 0 &2&0&2\\
 0 &3&2& 1
\end{bmatrix}.
$$

(2) Computation of the set $S_L(4, 4)$.
The arguments are the same as the case $S_L(2, 4)$. Any variable $y\in \Z_4$ can be represented by $(x_0, x_1)$ for $x_0, x_1\in \Z_2$. From Example \ref{exam-7}, we have the basis functions
$$\left\{
\begin{aligned}
g_0(y)&=(1-x_0)(1-x_1),\\
g_1(y)&=x_0(1-x_1),\\
g_2(y)&=(1-x_0)x_1,\\
g_3(y)&=x_0x_1.
\end{aligned}
\right.
$$
Let $y_k=(x_{2k},x_{2k+1})$. We obtain the linear terms
$$S_L(4, 4)=\left\{\sum_{k=0}^{m-1}d_kx_{2k}x_{2k+1} +\sum_{k=0}^{2m-1}c_kx_{k}+c'\bigg| d_k, c_k, c'\in \Z_4, 0\leq k\leq 2m-1 \right\}.$$

(3) Computation of the set $S_Q(4, 4)$.
Note that the case for the first phase matrix has been studied in the previous subsection.
It is known $S_Q(2, 4)=\{\varphi_i\}.$ Then $\{2\varphi_i\}$, which must be derived from first phase matrix, is a subset of $S_Q(4, 4)$. So we only need to consider the second phase matrix. We have
\begin{eqnarray*}
\bm{g}(y)
=(1,-1,-1,1)x_0x_1+(-1,1,0,0)x_0+(-1,0,1,0)x_1+(1,0,0,0)\cdot 1.
\end{eqnarray*}
The term $(1,0,0,0)\cdot 1$ only produces linear terms. So we only consider
$$\bm{g}'(y)=(1,-1,-1,1)x_0x_1+(-1,0,1,0)x_1+(-1,1,0,0)x_0.$$
By  computing
\begin{eqnarray*}
\bm{g}'_{\chi_L}(y_0=(x_0, x_1))
\begin{bmatrix}
0 & 0 &0&0\\
0 & 1&2&3\\
 0 &2&0&2\\
  0 &3&2& 1
\end{bmatrix}
\bm{g}'_{\chi_R}(y_1=(x_2, x_3))^T
\end{eqnarray*}
for all permutation $\chi_L, \chi_R$ of $\{0,1,2,3\}$, and combining the functions in $S_Q(2, 4)$, we obtain the set
$$S_Q(4, 4)=\{\psi_i((x_0, x_1), (x_2,x_3)) \mid 1\leq i\leq 9, a_0\in\{0,1,2,3\}, a_1\in\{1,2,3\},a_2\in\{1,3\} \},$$
where
$$\left\{
\begin{aligned}
&\psi_1=a_0 x_1 x_3 + 2x_0 x_3 + 2 x_1x_2,\\
&\psi_2=a_0 x_1 x_2 + 2 x_0 x_2 + 2 x_1 x_3,\\
&\psi_3=a_1 x_0 x_3 + 2 x_0 x_2 + 2 x_1 x_3,\\
&\psi_4=a_1 x_0 x_2 + 2 x_1 x_2 + 2 x_0 x_3,\\
&\psi_5=a_2 x_1 x_3 + (a_2+2) x_0 x_3 + 2 x_1 x_2 + 2 x_0 x_2+ 2x_0 x_1 x_3,\\
&\psi_6=a_2 x_1 x_2 + (a_2+2) x_0 x_2 + 2 x_1 x_3 + 2 x_0 x_3+2x_0 x_1 x_2,\\
&\psi_7=a_2 x_1 x_3 + (a_2+2) x_1 x_2 + 2 x_0 x_3 + 2 x_0 x_2+2x_1 x_2 x_3,\\
&\psi_8=a_2 x_0 x_2 + (a_2+2) x_0 x_3 + 2 x_1 x_2 + 2 x_1 x_3+2x_0 x_2 x_3,\\
&\psi_9=a_2 x_0 x_2 +(a_2+2)  x_0 x_3 + (a_2+2) x_1 x_2 + a_2 x_1 x_3 + 2x_0 x_1 x_2+ 2 x_0 x_1 x_3 + 2 x_0 x_2x_3 + 2 x_1 x_2 x_3.
\end{aligned}\right.
$$

By applying  Theorems \ref{thm-3} and \ref{construction},  we obtain the following construction.
\begin{construction}\label{construction-4}
Let $\pi$ be a permutation of $\{0, 1, \cdots, m-1\}$ and $\bm{x}=(x_0, x_1, \cdots x_{2m-1})$. For $N=4$ and $q=4$, we have
\begin{itemize}
\item[(1)]
The quaternary arrays extracted from the seed PU matrices can be expressed by
\begin{equation}\label{extrac-func-4}
f(\bm{x})=\sum_{k=1}^{m-1}h_k((x_{2k-2},x_{2k-1}), (x_{2k},x_{2k+1}) )+\sum_{k=0}^{m-1}d_kx_{2k}x_{2k+1} +\sum_{k=0}^{2m-1}c_kx_{k}+c'
\end{equation}
for  $h_k(\cdot,\cdot)\in S_Q(4, 4) (1\leq k\leq m-1)$ and  $d_k, c_k, c'\in \Z_4, 0\leq k\leq m-1$.
\item[(2)]
The collection of quaternary sequences in
CSSs of  size $4$ derived from the seed PU matrices of order $4$ is given by
$$S(4,4)=\left\{\sum_{k=1}^{m-1}h_k((x_{2\cdot\pi(k-1)},x_{2\cdot\pi(k-1)+1}), (x_{2\cdot\pi(k)},x_{2\cdot\pi(k)+1}) )+\sum_{k=0}^{m-1}d_kx_{2k}x_{2k+1} +\sum_{k=0}^{2m-1}c_kx_{k}+c' \right\}$$
for  $h_k(\cdot,\cdot)\in S_Q(4, 4) (1\leq k\leq m-1)$ and  $d_k, c_k, c'\in \Z_4, 0\leq k\leq m-1$.
\end{itemize}
\end{construction}

For the sequences in Construction 4, if $h_k(\cdot,\cdot)$ are all chosen from the subset $\{\psi_i\mid 1\leq i\leq 4 \}$ in $S_Q(4, 4)$ for  $1\leq k\leq m-1$, these sequences have been introduced in \cite{Wu2016}.

\begin{example}\label{exam-8}
Let $m=2$, and $h_1(\cdot,\cdot)$ is chosen as $\psi_5$ with $a_2=1$, then we have
$$f_1=x_1 x_3 + 3x_0 x_3 + 2 x_1 x_2 + 2 x_0 x_2+ 2x_0 x_1 x_3.$$
Let $m=2$, and $h_1(\cdot,\cdot)$ is chosen as $\psi_9$ with $a_2=1$, we have
$$f_2=x_0 x_2 +3x_0 x_3 + 3x_1 x_2 + x_1 x_3 + 2x_0 x_1 x_2+ 2 x_0 x_1 x_3 + 2 x_0 x_2x_3 + 2 x_1 x_2 x_3.$$
Each sequence evaluated by the above GBFs lie in a CSS of size 4 respectively.
\end{example}

 From Example \ref{exam-8}, even for the sequences of length $2^4$ evaluated by the functions in $\{\psi_i\mid 5\leq i\leq 9 \}$, they are never reported in the literature, such as \cite{Chen06,Paterson00,Schmidt07,Stinchcombe,Wu2016}. Moreover, for sequence $f\in S(4,4)$, if there exists $k$, such that $h_k(\cdot,\cdot)$ is chosen from the subset $\{\psi_i\mid 5\leq i\leq 9 \}$, sequence $f$ lying in CSSs of size 4 must be new, which asserts that the number of the quaternary  sequences in CSSs of size $4$ can be  exponentially increased.

The first-order GRM code is a sub-code of $S_L(4, 4)$ which is also a linear code. Note that $|RM_4(1, 2m)|=4^{2m+1}$, while $|S_L(4, 4)|=4^{3m+1}$. The sequences constructed here fill up $\frac{1}{2} m!\cdot 24^{m-1}$ distinct cosets of $S_L(4, 4)$ where the collection of the coset representatives are shown in the set $$\left\{\sum_{k=1}^{m-1}h_k((x_{2\pi(k-1)},x_{2\pi(k-1)+1}), (x_{2\pi(k)},x_{2\pi(k)+1}) )| h_k(\cdot,\cdot)\in S_Q(4, 4), 1\leq k\leq m-1 \right\}.$$
From the enumeration in Corollary \ref{seq-enum}, the set $S(4,4)$  determines exactly
$\frac{1}{2} m!\cdot 24^{m-1}\cdot 4^{3m+1}$
distinct sequences.

For the known complementary sequences with explicit Boolean functions, quaternary sequences are always generalized from the binary case. Here the sequences in  $S(4, 4)$ are generalized from the binary case if and only if all $h_k(\cdot,\cdot)$ are chosen from the  subset $\{\psi_i\mid 1\leq i\leq 4 \}$ in $S_Q(4, 4)$.  One important reason is that the  coset representatives are computed by 2 BH matrices in $S_Q(4, 4)$ which lead to that the number of the coset representatives in $S(4, 4)$ is  $\frac{1}{2}m!\cdot 24^{m-1}$. On the other hand, for the case that the coset representatives are computed by only one phase matrix in $S_Q(2, 4)$, the number of the coset representatives in $S(2, 4)$ is $\frac{1}{2} m!\cdot  6^{m-1}$.

\begin{remark}\label{remark-2-7}
 Similar to Remark \ref{remark-2-6}, we will construct the arrays of size $2\times 2\times \cdots \times 2$ and extend the results in Construction 4 in next section.
\end{remark}

\section{Generalized Seed PU Matrices and Corresponding Sequences}

It is shown in Constructions \ref{construction-3} and \ref{construction-4} in Section 5 that the binary and quaternary  sequences in
CSSs of  size 4 are obtained by permutations acting on the arrays of size $4\times 4\times \cdots \times 4$.  Although these sequences are all evaluated by the GBFs with Boolean variables $(x_0,x_1,\cdots, x_{2m-1})$, we can only permute the variables $(y_0,y_1,\cdots, y_{m-1})$ for $y_k\in \Z_4$.
On the other hand, the sequences proposed in \cite{DavisJedwab99,Paterson00,Schmidt07} are also evaluated by the  GBFs. Permuting the Boolean variables of the complementary sequences  in \cite{DavisJedwab99,Paterson00,Schmidt07} are still complementary sequences of the same CSS size. It is rational to ask whether the GBFs extracted from the seed PU matrices of size $N=2^n$ can represent the arrays of size $2\times 2 \times \cdots \times 2$.

We set $p=2$ and $N=2^n$ in this section, and construct PU matrices of size $2^n$ and arrays of size $2\times 2 \times \cdots \times 2$. Then we can apply Theorem \ref{thm-1}-(5) in Section 3 to obtain more new CCCs and CSSs.

\subsection{Generalized Seed PU Matrices}

For $N=2^n$, we generalize the delay matrix in Section 4 to a Multivariate diagonal polynomial matrix by the Kronecker product of $\bm{D}(z)$, where $\bm{D}(z)$ is the univariate  delay matrix of order $2$ in Subsection 4.1,  i.e.,
$$\bm{D}(z)=\begin{bmatrix}1&0\\
0& z \end{bmatrix}.$$

\begin{definition}\label{G-delay-M}
A generalized delay matrix $\bm{D}(\bm{z})$ with multi-variables $\bm{z}=(z_0, z_{1}, \cdots,  z_{n-1})$
can be represented by the Kronecker product of $\bm{D}(z_v)$ for $0\leq v<n$, i.e.,
\begin{equation}\label{eq-of-delay}
\bm{D}(\bm{z})=\bm{D}(z_{n-1}) \otimes\cdots \otimes \bm{D}(z_{1})\otimes \bm{D}(z_{0}).
\end{equation}
\end{definition}

It is obvious that $\bm{D}(\bm{z})$ is also a  diagonal matrix, so we can write $\bm{D}(\bm{z})$ by
$$\bm{D}(\bm{z})=diag(\phi_0(\bm{z}), \phi_1(\bm{z}), \cdots, \phi_{N-1}(\bm{z})),$$
where $\phi_y(\bm{z})$ $(0\leq y<N)$ is a function of $\bm{z}$.
\begin{example}
Let $N=2^2$ and $\bm{z}=(z_0, z_{1})$. The generalized delay matrix $\bm{D}(z_0,z_1)$ can be written in the form of
\begin{eqnarray*}
\bm{D}(z_0,z_1)=\begin{bmatrix}1&0\\
0& z_1 \end{bmatrix}\otimes\begin{bmatrix}1&0\\
0& z_0 \end{bmatrix}=\begin{bmatrix}
1&0&0&0\\
0&z_{0}&0&0\\
0&0&z_{1}&0\\
0&0&0&z_{1}z_{0}
\end{bmatrix}=\begin{bmatrix}
z_1^0\cdot z_0^0&0&0&0\\
0&z_{1}^0\cdot z_0^1&0&0\\
0&0&z_1^1\cdot z_{0}^0&0\\
0&0&0&z_{1}^1\cdot z_{0}^1
\end{bmatrix}.
\end{eqnarray*}
\end{example}

From the above example, $\phi_y(\bm{z})$ in the main diagonal can be expressed by $z_1^{x_1}\cdot z_0^{x_0}$, where $(x_0,x_1)$ is the binary expansion of integer $y$. In general, the generalized delay matrix $\bm{D}(\bm{z})$ can be explicitly represented by the following lemma.

\begin{lemma}\label{lemma-8}
Let $(x_0, x_1,\cdots, x_{n-1})$ be the binary expansion of integer $y$,  i.e., $y=\sum_{v=0}^{n-1}x_v\cdot 2^v$. Then the generalized delay matrix $\bm{D}(\bm{z})$ in Definition \ref{G-delay-M} can be expressed as
\begin{eqnarray*}
\bm{D}(\bm{z})=\begin{bmatrix}
\phi_0(\bm{z})&0&\cdots&0\\
0&\phi_1(\bm{z})&\cdots&0\\
\vdots&\vdots&\ddots&\vdots\\
0&0&\cdots&\phi_{N-1}(\bm{z})
\end{bmatrix},
\end{eqnarray*}
where
\begin{equation}\label{expansion}
\phi_y(\bm{z})=\prod_{v=0}^{n-1}z_{v}^{x_v}.
\end{equation}
\end{lemma}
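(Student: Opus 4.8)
The plan is to prove the lemma by induction on $n$, exploiting the block structure that a Kronecker product with a $2\times 2$ diagonal matrix induces. Throughout, I index the rows and columns of a matrix of order $2^n$ by integers $y\in\{0,1,\dots,2^n-1\}$, and I use the convention from the definition of the Kronecker product that, writing $\bm{A}\otimes\bm{B}$ with $\bm{A}$ of order $n_1$ and $\bm{B}$ of order $n_2$, the $(i,j)$-block equals $a_{i,j}\bm{B}$. The first observation is that a Kronecker product of diagonal matrices is again diagonal, so $\bm{D}(\bm{z})$ is diagonal and it only remains to identify its $y$-th diagonal entry with $\prod_{v=0}^{n-1}z_v^{x_v}$.

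For the base case $n=1$ we have $\bm{D}(\bm{z})=\bm{D}(z_0)=\mathrm{diag}(1,z_0)=\mathrm{diag}(z_0^{0},z_0^{1})$, which matches $\phi_0=z_0^{0}$ and $\phi_1=z_0^{1}$, since the binary expansions of $0$ and $1$ are $x_0=0$ and $x_0=1$, respectively. For the inductive step, set $\bm{D}'(\bm{z}')=\bm{D}(z_{n-2})\otimes\cdots\otimes\bm{D}(z_0)$, a diagonal matrix of order $2^{n-1}$ whose $y'$-th diagonal entry is $\prod_{v=0}^{n-2}z_v^{x_v}$ by the induction hypothesis, where $y'=\sum_{v=0}^{n-2}x_v2^v$. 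By Definition \ref{G-delay-M} together with associativity of the Kronecker product, $\bm{D}(\bm{z})=\bm{D}(z_{n-1})\otimes\bm{D}'(\bm{z}')$, and since $\bm{D}(z_{n-1})=\mathrm{diag}(1,z_{n-1})$ this is the block-diagonal matrix whose top-left block is $1\cdot\bm{D}'(\bm{z}')$ and whose bottom-right block is $z_{n-1}\cdot\bm{D}'(\bm{z}')$.

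Writing a general index as $y=x_{n-1}\cdot2^{n-1}+y'$ with $x_{n-1}\in\{0,1\}$, the top block ($x_{n-1}=0$) contributes diagonal entries $\prod_{v=0}^{n-2}z_v^{x_v}=\prod_{v=0}^{n-1}z_v^{x_v}$ because $z_{n-1}^{0}=1$, and the bottom block ($x_{n-1}=1$) contributes $z_{n-1}\cdot\prod_{v=0}^{n-2}z_v^{x_v}=\prod_{v=0}^{n-1}z_v^{x_v}$. In both cases the $y$-th diagonal entry equals $\prod_{v=0}^{n-1}z_v^{x_v}$, which is exactly $\phi_y(\bm{z})$ for the binary expansion $y=\sum_{v=0}^{n-1}x_v2^v$, completing the induction.

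I expect the only delicate point to be the bookkeeping of indices: one must verify that the most significant digit $x_{n-1}$ of $y$ corresponds to the leftmost (outermost) factor $\bm{D}(z_{n-1})$ in the Kronecker product, which is precisely the ordering fixed in Definition \ref{G-delay-M}. Once this alignment between the place-value of the binary digits and the nesting order of the Kronecker factors is pinned down, the identity $\phi_y(\bm{z})=\prod_{v=0}^{n-1}z_v^{x_v}$ drops out of the block decomposition with no further computation.
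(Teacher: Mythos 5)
Your induction is correct, and it is exactly the argument the paper has in mind: the paper omits the proof of this lemma, stating only that "it can be easily done by mathematical induction," which is what you have carried out. Your handling of the index bookkeeping (the most significant bit $x_{n-1}$ pairing with the outermost Kronecker factor $\bm{D}(z_{n-1})$) is consistent with Definition \ref{G-delay-M} and with the paper's order-$4$ example, so the proof stands as a complete filling-in of the omitted details.
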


We omit the proof of the above lemma, since it can be easily done by mathematical induction.

Let $\bm{H}^{\{k\}}$ be an arbitrary BH matrix chosen from $H(q,N)$ for $0\leq k\leq m$ and $\bm{D}(\bm{z}_k)$ be the generalized delay matrix from Definition \ref{G-delay-M} where $\bm{z}_k=(z_{kn}, z_{kn+1}, \cdots,  z_{kn+n-1})$ for $0\leq k<m$. And let $\bm{z}=(\bm{z}_{0}, \bm{z}_{1}, \cdots,  \bm{z}_{m-1})=(z_{0}, z_{1}, \cdots,  z_{nm-1})$. We define a multivariate polynomial matrix $\bm{M}(\bm{z})$ as follow.
\begin{eqnarray}\label{seed-PU-2}
\bm{M}(\bm{z})&=&\bm{H}^{\{0\}}\cdot \bm{D}(\bm{z}_0)\cdot
\bm{H}^{\{1\}}\cdot\bm{D}(\bm{z}_1)\cdots \bm{H}^{\{m-1\}}\cdot
\bm{D}(\bm{z}_{m-1})\cdot \bm{H}^{\{m\}}\\
&=&\left(\prod_{k=0}^{m-1}\left(\bm{H}^{\{k\}}\cdot \bm{D}(\bm{z}_k)\right)\right)\cdot \bm{H}^{\{m\}}\nonumber.
\end{eqnarray}

\begin{remark}
The matrix $\bm{M}(\bm{z})$ defined here is similar to the seed PU matrix defined in Subsection 4.1, but the delay matrices are chosen differently.   Let $\pi$ be a permutation of $\{0, 1, \cdots, mn-1\}$ and $z_t=Z^{2^{\pi(t)}}$ in $\bm{M}(\bm{z})$ for $0\leq t< mn$,  we obtain a  matrix $\bm{M}(Z)$, which is identical to the single-variable PU matrix proposed in \cite{SPL2019}.
\end{remark}

\begin{theorem}\label{thm-6}
$\bm{M}(\bm{z})$ defined in formula (\ref{seed-PU-2}), called a generalized seed PU matrix,  is a desired PU matrix, i.e.,
\begin{itemize}
	\item[(1)] $\bm{M}(\bm{z})\cdot\bm{M}^{\dagger}(\bm{z}^{-1})=2^{(m+1)n}\cdot \bm{I}_N$;
	\item[(2)] Each entry of $\bm{M}(\bm{z})$ can be expressed by the generating function of an array $f(\bm{x}): \Z_2^{mn} \rightarrow \Z_q$.
\end{itemize}
\end{theorem}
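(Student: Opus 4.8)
The plan is to mirror the two-part argument already used for Theorem \ref{thm-2}, since the only structural change is that the univariate delay matrices $\bm{D}(z_k)$ are replaced by the Kronecker-product delay matrices $\bm{D}(\bm{z}_k)$ acting on disjoint blocks of variables. For part (1), I would establish the para-unitary identity by the same telescoping peel-off from the innermost factor outward. The two ingredients are $\bm{H}^{\{k\}}\cdot\bm{H}^{\{k\}\dagger}=2^n\cdot\bm{I}_N$ (the defining property of a BH matrix in $H(q,2^n)$) and the fact that the generalized delay matrix is itself para-unitary, namely $\bm{D}(\bm{z})\cdot\bm{D}^{\dagger}(\bm{z}^{-1})=\bm{I}_N$. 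This latter fact is immediate from Lemma \ref{lemma-8}: since $\bm{D}(\bm{z})$ is diagonal with entries $\phi_y(\bm{z})=\prod_{v=0}^{n-1}z_v^{x_v}$, each a monomial with coefficient $1$, we have $\phi_y(\bm{z})\cdot\overline{\phi_y}(\bm{z}^{-1})=\prod_v z_v^{x_v}\cdot\prod_v z_v^{-x_v}=1$. Peeling the product $\bm{M}(\bm{z})\bm{M}^{\dagger}(\bm{z}^{-1})$ from the center then collapses $m+1$ factors of $\bm{H}^{\{k\}}\bm{H}^{\{k\}\dagger}$, each contributing $2^n$, together with $m$ trivial delay factors, yielding $2^{(m+1)n}\cdot\bm{I}_N$.

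For part (2), I would expand $\bm{M}(\bm{z})$ into monomials exactly as in Lemma \ref{lem-1}. Writing each generalized delay matrix as $\bm{D}(\bm{z}_k)=\sum_{y_k=0}^{N-1}\bm{E}_{y_k}\,\phi_{y_k}(\bm{z}_k)$ and distributing the product, the coefficient attached to the monomial $\prod_{k=0}^{m-1}\phi_{y_k}(\bm{z}_k)$ is exactly the matrix $\prod_{k=0}^{m-1}\left(\bm{H}^{\{k\}}\cdot\bm{E}_{y_k}\right)\cdot\bm{H}^{\{m\}}$ that appeared in Lemma \ref{lem-1}. By Lemma \ref{lem-2}, every entry of this coefficient matrix is a single product of entries of the BH matrices $\bm{H}^{\{k\}}$, hence a $q$th root of unity. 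Therefore each entry $M_{i,j}(\bm{z})$ is a sum of squarefree monomials $\bm{z}^{\bm{x}}$ with $q$th-root-of-unity coefficients, i.e.\ the generating function of an array $f_{i,j}:\Z_2^{mn}\to\Z_q$, as required.

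The one genuinely new point to check---and the step I expect to be the main obstacle---is that the monomial-to-coefficient correspondence is a clean bijection, so that distinct index tuples $(y_0,\dots,y_{m-1})$ never collide into the same monomial and each coefficient matrix is extracted unambiguously. This is where the Kronecker-product structure is essential: by Lemma \ref{lemma-8}, within the $k$th block the diagonal entry $\phi_{y_k}(\bm{z}_k)=\prod_{v=0}^{n-1}z_{kn+v}^{x_{kn+v}}$ encodes the binary expansion $y_k=\sum_v x_{kn+v}2^v$ as a squarefree monomial, and the variable blocks $\bm{z}_0,\dots,\bm{z}_{m-1}$ are pairwise disjoint. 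Hence the map $(y_0,\dots,y_{m-1})\mapsto\bm{z}^{\bm{x}}$, with $\bm{x}=(x_0,\dots,x_{mn-1})\in\{0,1\}^{mn}$, is injective onto the squarefree monomials in $z_0,\dots,z_{mn-1}$, and the monomial expansion (analogous to (\ref{matrix-ex})) is well defined. Once this bijection is recorded, the para-unitary bookkeeping of part (1) and the entry-wise argument of part (2) go through verbatim, completing the proof.
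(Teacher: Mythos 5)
Your proposal is correct and follows essentially the same route as the paper's own proof: para-unitarity from $\bm{H}^{\{k\}}\bm{H}^{\{k\}\dagger}=N\cdot\bm{I}_N$ together with $\bm{D}(\bm{z})\bm{D}^{\dagger}(\bm{z}^{-1})=\bm{I}_N$, followed by expanding each $\bm{D}(\bm{z}_k)=\sum_{y_k}\bm{E}_{y_k}\phi_{y_k}(\bm{z}_k)$ so that the coefficient of each squarefree monomial is $\prod_k\bigl(\bm{H}^{\{k\}}\bm{E}_{y_k}\bigr)\cdot\bm{H}^{\{m\}}$, whose entries are products of BH-matrix entries and hence $q$th roots of unity. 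Your explicit verification of the bijection between index tuples $(y_0,\dots,y_{m-1})$ and squarefree monomials $\bm{z}^{\bm{x}}$ is a detail the paper leaves implicit (via $y_k=\sum_v x_{kn+v}2^v$ on disjoint variable blocks), but it is the same argument.
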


\begin{proof}
It is obvious that $\bm{M}(\bm{z})\cdot\bm{M}^{\dagger}(\bm{z}^{-1})=2^{(m+1)n}\cdot \bm{I}_N$.
We expand $\bm{M}(\bm{z})$ in the following form.
\begin{equation*}
\bm{M}(\bm{z})=\sum_{0\leq x_0, x_1, \cdots, x_{mn-1}\leq 1}\left({\bm{{M}}}(x_0, x_1, \cdots, x_{mn-1})\cdot z_0^{x_0}z_1^{x_1}\cdots z_{mn-1}^{x_{mn-1}}\right),
\end{equation*}
where matrix $\bm{{M}}(x_0, x_1, \cdots, x_{mn-1})$ is the coefficient matrix of matrix $\bm{M}(\bm{z})$ of term $z_0^{x_0}z_1^{x_1}\cdots z_{mn-1}^{x_{mn-1}}$.

Based on formula (\ref{expansion}), the generalized delay matrix $\bm{D}(\bm{z}_k)$ can be represented by
$$ \bm{D}(\bm{z}_k)=\sum_{y_k=0}^{N-1} \left(\bm{E}_{y_k}\cdot\phi_{y_k}(\bm{z}_k)\right),$$
where $y_k=\sum_{v=0}^{n-1}x_{kn+v}\cdot 2^v.$

On the other hand,
\begin{eqnarray*}
\bm{M}(\bm{z})
&=&\left(\prod_{k=0}^{m-1}\left(\bm{H}^{\{k\}}\cdot \bm{D}(\bm{z}_k)\right)\right)\cdot \bm{H}^{\{m\}}\\
&=&\left(\prod_{k=0}^{m-1}\left(\bm{H}^{\{k\}}\cdot \sum_{y_k=0}^{N-1} \left(\bm{E}_{y_k}\cdot\phi_{y_k}(\bm{z}_k)\right)\right)\right)\cdot \bm{H}^{\{m\}}\\
&=&\sum_{y_0=0}^{N-1}\cdots\sum_{y_{m-1}=0}^{N-1}\left( \left(\prod_{k=0}^{m-1} \left(\bm{H}^{\{k\}}\cdot \bm{E}_{y_k}\cdot \right)\cdot\bm{H}^{\{m\}}\right)\cdot \prod_{k=0}^{m-1}\phi_{y_k}(\bm{z}_k)\right) \\
&=&\sum_{0\leq x_0, x_1, \cdots, x_{mn-1}\leq 1}\left(\prod_{k=0}^{m-1} \left(\bm{H}^{\{k\}}\cdot \bm{E}_{y_k}\cdot \right)\cdot\bm{H}^{\{m\}}\right)\cdot z_0^{x_0}z_1^{x_1}\cdots z_{mn-1}^{x_{mn-1}}.
\end{eqnarray*}
Thus we have
\begin{equation*}
\bm{{M}}(x_0, x_1, \cdots, x_{mn-1})=\left(\prod^{m-1}_{k=0}\left(\bm{H}^{\{k\}}\cdot
\bm{E}_{y_k}\right)\right)\cdot \bm{H}^{\{m\}},
\end{equation*}
and each entry
\begin{equation}\label{abc}
{{M}}_{i,j}(x_0, x_1, \cdots, x_{mn-1})={H}^{\{0\}}_{i,y_0}\cdot\left(\prod^{m-1}_{k=1}{H}^{\{k\}}_{y_{k-1},y_k}
   \right)\cdot {H}^{\{m\}}_{y_{m-1},j}.
\end{equation}
Therefore, for given $i$ and $j$, there exists GBF $f_{i,j}$ from $\Z_2^{mn}$ to $\Z_q$ such that
$$
{M}_{i,j}(\bm{z})=\sum_{0\leq x_0, x_1, \cdots, x_{mn-1}\leq 1} \left(\omega^{f_{i,j}(x_0, x_1, \cdots, x_{mn-1})}
\cdot z_0^{x_0}z_1^{x_1}\cdots z_{mn-1}^{x_{mn-1}}\right)
$$
which complete the proof.
\end{proof}

Let both the order of the seed PU matrices in Subsection 4.1 and the generalized seed PU matrices in this subsection be $N=2^n$. Then we have
$$\omega^{f_{i,j}(y_0, y_1, \cdots, y_{m-1})}={H}^{\{0\}}_{i,y_0}\cdot\left(\prod^{m-1}_{k=1}{H}^{\{k\}}_{y_{k-1},y_k}
\right)\cdot {H}^{\{m\}}_{y_{m-1},j}$$
from formula (\ref{PU-function}) and
$$\omega^{f_{i,j}(x_0, x_1, \cdots, x_{mn-1})}={H}^{\{0\}}_{i,y_0}\cdot\left(\prod^{m-1}_{k=1}{H}^{\{k\}}_{y_{k-1},y_k}
\right)\cdot {H}^{\{m\}}_{y_{m-1},j}$$
from formula  (\ref{abc}), respectively, where $y_k=\sum_{v=0}^{n-1}x_{kn+v}\cdot 2^v$. Consequently we have $$f_{i,j}(x_0, x_1, \cdots, x_{mn-1})=f_{i,j}(y_0, y_1, \cdots, y_{m-1}).$$
Therefore, if we replace $y_k$ by $\bm{x}_k=(x_{kn}, x_{kn+1}, \cdots,  x_{kn+n-1})$ and denote $\bm{x}=(\bm{x}_{0}, \bm{x}_{1}, \cdots,  \bm{x}_{m-1})=(x_{0}, x_{1}, \cdots,  x_{nm-1})$, then
the GBFs extracted from the generalized seed PU matrices are identical to the functions  extracted from the seed PU matrices.
However, functions $f_{i,j}(y_0, y_1, \cdots, y_{m-1})$ extracted from the seed PU matrices are the arrays of size $2^n\times 2^n \times \cdots \times 2^n$ and dimension $m$, while the GBFs $f_{i,j}(x_0, x_1, \cdots, x_{mn-1})$ extracted from the generalized seed PU matrices are arrays of size $2\times 2 \times \cdots \times 2$ and dimension $mn$.

\subsection{Constructions from the Generalized Seed PU Matrices}\label{sec6.2}

We now extend the results in Theorem \ref{construction} for $N=2^n$.
\begin{theorem}\label{construction-2}
For $N=2^n$, suppose that   $\{ f_{i,j}(y_0,y_1,\cdots, y_{m-1})\}$ and $\{f_i(y_0,y_1,\cdots, y_{m-1})\}$ respectively form a CAS and a CCC of size $N$, where
both $f_{i,j}$ and $f_i$ are given in Theorem \ref{construction}. By replacing the functions $f_{i,j}(y_0,y_1,\cdots, y_{m-1})$ and $f_i(y_0,y_1,\cdots, y_{m-1})$ by GBFs $f_{i,j}(x_0,x_1,\cdots, x_{mn-1})$ and  $f_i(x_0,x_1,\cdots, x_{mn-1})$ respectively, for arbitrary permutation $\pi$ of symbols $\{0, 1, \cdots, mn-1\}$, we have
\begin{itemize}
\item[(1)] The  sequences evaluated by the following GBFs form a CCC of size $N$:
$$\pi\cdot f_{i,j}=f_{i,j}(x_{\pi(0)},x_{\pi(1)},\cdots, x_{\pi(mn-1)}), 0\leq i, j< N.$$
\item[(2)]The sequences evaluated by the following GBFs form a CSS of size $N$ and length $N^m$:
$$\pi\cdot f_{i}=f_i(x_{\pi(0)},x_{\pi(1)},\cdots, x_{\pi(mn-1)}),\  0\leq i< N.$$
\end{itemize}
\end{theorem}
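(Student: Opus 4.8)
The plan is to recognize the GBFs $f_{i,j}(x_0,\dots,x_{mn-1})$ and $f_i(x_0,\dots,x_{mn-1})$ as the entries of (a row or column of) the corresponding function matrix of a \emph{generalized} seed PU matrix, and then to invoke Theorem \ref{thm-1} with the array dimension equal to $mn$ rather than $m$. The functions named in the hypothesis are, by Theorem \ref{construction}, extracted from a seed PU matrix (\ref{seed-PU}) built from a fixed choice of BH matrices $\bm{H}^{\{0\}},\dots,\bm{H}^{\{m\}}\in H(q,2^n)$; first I would fix such a choice realizing the prescribed $f_{i,j}$ and $f_i$ (using the equivalence Lemmas \ref{lem-5} and \ref{lem-6}, which is exactly how the general form (\ref{func-rep2}) was obtained).

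Next I would pass to the generalized seed PU matrix $\bm{M}(\bm{z})$ of (\ref{seed-PU-2}) built from the \emph{same} BH matrices. The identity established just before the theorem — namely that under the substitution $y_k=\sum_{v=0}^{n-1}x_{kn+v}2^v$ the entry formula (\ref{abc}) coincides with (\ref{PU-function}), so that $f_{i,j}(x_0,\dots,x_{mn-1})=f_{i,j}(y_0,\dots,y_{m-1})$ — shows that the corresponding function matrix $\bm{\widetilde{M}}(\bm{x})$ of $\bm{M}(\bm{z})$ has precisely the GBFs $f_{i,j}(x_0,\dots,x_{mn-1})$ as its entries, with the row functions $f_i$ appearing as a column once the outer BH matrix is normalized so that the corresponding common term vanishes. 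By Theorem \ref{thm-6}, this $\bm{M}(\bm{z})$ is a desired PU matrix, i.e.\ a PU matrix that is the generating matrix of a function matrix whose entries are GBFs from $\Z_2^{mn}$ to $\Z_q$.

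With this in hand the conclusion follows from Theorem \ref{thm-1}. The crucial change from Theorem \ref{construction} is that the arrays are now of dimension $mn$ and size $\underbrace{2\times\cdots\times 2}_{mn}$, so in Theorem \ref{thm-1} the symbol is $\bm{x}=(x_0,\dots,x_{mn-1})$ and the permutation $\pi$ ranges over all of $\{0,1,\dots,mn-1\}$. Part (1) is then Theorem \ref{thm-1}-(2): for every such $\pi$, the sequences evaluated by $\pi\cdot\bm{\widetilde{M}}(\bm{x})$ form a CCC of size $N$. Part (2) is Theorem \ref{thm-1}-(4) applied to the relevant column: the sequences evaluated by $\{\pi\cdot f_i\}$ form a $q$-ary CSS of size $N$, of length $2^{mn}=N^m$.

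The step I expect to be the main obstacle — and the reason this is not a triviality — is justifying why one may permute all $mn$ Boolean variables rather than only the $m$ blocks $y_k$. Merely reshaping an $N$-ary CCA of $m$-dimensional arrays into binary arrays of dimension $mn$ does \emph{not} preserve the complementary property: the $m$-variate generating-function identity (\ref{CCA-def-2}) is recovered from the $mn$-variate one only under the restriction $z_{kn+v}=z_k^{2^v}$, so the binary identity is strictly stronger and does \emph{not} follow from the $N$-ary one. The point, therefore, is that Theorem \ref{thm-6} produces a genuine $mn$-dimensional binary desired PU matrix, not a mere reshaping, and it is the para-unitarity of $\bm{M}(\bm{z})$ as a matrix of multivariate polynomials in $z_0,\dots,z_{mn-1}$ that legitimizes applying Theorem \ref{thm-1} with the full symmetric group on $mn$ symbols.
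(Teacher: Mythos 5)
Your proposal is correct and follows essentially the same route as the paper: the paper proves Theorem~\ref{construction-2} implicitly by combining Theorem~\ref{thm-6} (the generalized seed PU matrix of (\ref{seed-PU-2}) is a desired PU matrix in all $mn$ variables), the identification $f_{i,j}(x_0,\dots,x_{mn-1})=f_{i,j}(y_0,\dots,y_{m-1})$ established just before the theorem via (\ref{PU-function}) and (\ref{abc}), and Theorem~\ref{thm-1} applied with $p=2$ and dimension $mn$. Your closing observation — that a mere reshaping of the $N$-ary CCA would not justify permuting all $mn$ Boolean variables, and that the genuine $mn$-variate para-unitarity from Theorem~\ref{thm-6} is what legitimizes it — is precisely the point the paper's construction is designed to secure.
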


\begin{remark}
Without taking the duplication into consideration, the number of the sequences proposed in
Theorem \ref{construction-2} is  about $\frac{(mn)!}{m!}$ times that in Theorem \ref{construction}. Moreover,
 the results in this section also give a method to show GBF form of the sequences proposed in \cite{SPL2019}.
 The comparisons of the number of permutations in Theorems \ref{construction} and \ref{construction-2} for some special cases, can be found in  \cite[Table 1]{SPL2019}.
\end{remark}

We now extend the results in Constructions 3 and 4 in Section 5, which are given by following Constructions 5 and 6.

\begin{construction}\label{construction-5}
Let $\pi$ be a permutation of $\{0, 1, \cdots, 2m-1\}$ and $\bm{x}=(x_0, x_1, \cdots x_{2m-1})$. For $N=4$ and $q=2$, the GBFs extracted from the generalized seed PU matrices are the same as (\ref{extrac-func-3}).
\begin{itemize}
\item[(1)]
The corresponding function matrix $\widetilde{\bm{M}}(\bm{x})$  is given by
$$\widetilde{\bm{M}}(\bm{x})=f\cdot \bm{J}_4+A(x_0, x_1)\cdot \bm{J}_4+\bm{J}_4 \cdot A(x_{2m-2}, x_{2m-1}),$$
where the GBF $f$ is in the form (\ref{extrac-func-3}) and  $A(x_0, x_1)=diag\{0, x_0, x_1, x_0+x_1\}$ is a diagonal matrix. The sequences evaluated by the matrix $\pi\cdot \widetilde{\bm{M}}(\bm{x})$ form a CCC of size $4$.
\item[(3)]
The collection of binary sequences in CSSs of size $4$ derived from the generalized seed PU matrices can be expressed by
$$S'(2,4)=\left\{\sum_{k=1}^{m-1}h_k(x_{\pi(2 k-2)},x_{\pi(2 k-1)},x_{\pi(2 k)},x_{\pi(2 k+1)} )+\sum_{k=0}^{m-1}d_kx_{\pi(2k)}x_{\pi(2k+1)} +\sum_{k=0}^{2m-1}c_kx_{k}+c' \right\}$$
for  $h_k(\cdot,\cdot,\cdot,\cdot)\in S_Q(2, 4) (1\leq k\leq m-1)$ and  $d_k, c_k, c'\in \Z_2, 0\leq k\leq m-1$.
\item[(4)] The sequences evaluated by
$$\left\{
\begin{aligned}
&f,\\
&f+x_{\pi(0)},\\
&f+x_{\pi(1)},\\
&f+x_{\pi(0)}+x_{\pi(1)}
\end{aligned}\right.
$$
form a binary CSS of size 4 for $\forall f\in S'(2,4)$.
\end{itemize}
\end{construction}

\begin{construction}\label{construction-6}
Let $\pi$ be a permutation of $\{0, 1, \cdots, 2m-1\}$ and $\bm{x}=(x_0, x_1, \cdots x_{2m-1})$. For $N=4$ and $q=4$,  the GBFs extracted from the generalized seed PU matrices are the same as (\ref{extrac-func-4}).
\begin{itemize}
\item[(1)]
The corresponding function matrix $\widetilde{\bm{M}}(\bm{x})$  is given by
$$\widetilde{\bm{M}}(\bm{x})=f\cdot \bm{J}_4+A(x_0, x_1)\cdot \bm{J}_4+\bm{J}_4 \cdot B(x_{2m-2}, x_{2m-1}),$$
where the GBF $f$ is in the form (\ref{extrac-func-4}) and the diagonal matrices $A(x_0, x_1), B(x_0,x_1)$ are arbitrarily chosen from the set $\{diag(0, 2x_0, 2x_1, 2x_0+2x_1),
diag(0, 2x_0+x_1, 2x_1, 2x_0+3x_1),
diag(0, x_0+3x_1+2x_0x_1, 2x_0+2x_1, 3x_0+x_1+2x_0x_1).\}$ The sequences evaluated by the matrix $\pi\cdot \widetilde{\bm{M}}(\bm{x})$ form a CCC of size $4$.
\item[(2)]
The collection of quaternary sequences in CSSs of size $4$  derived from the generalized seed PU matrices can be represented by
$$S'(4,4)=\left\{\sum_{k=1}^{m-1}h_k(x_{\pi(2 k-2)},x_{\pi(2 k-1)},x_{\pi(2 k)},x_{\pi(2 k+1)} )+\sum_{k=0}^{m-1}d_kx_{\pi(2k)}x_{\pi(2k+1)} +\sum_{k=0}^{2m-1}c_kx_{k}+c' \right\}$$
for  $h_k(\cdot,\cdot,\cdot,\cdot)\in S_Q(4, 4) (1\leq k\leq m-1)$ and  $d_k, c_k, c'\in \Z_4, 0\leq k\leq m-1$.

\item[(3)]
The sequences evaluated by
$$\left\{
\begin{aligned}
&f,\\
&f+2x_{\pi(0)},\\
&f+2x_{\pi(1)},\\
&f+2x_{\pi(0)}+2x_{\pi(1)},
\end{aligned}\right.
\left\{
\begin{aligned}
&f,\\
&f+2x_{\pi(0)}+x_{\pi(1)},\\
&f+2x_{\pi(1)},\\
&f+2x_{\pi(0)}+3x_{\pi(1)},
\end{aligned}\right.
\mbox{or}\
\left\{
\begin{aligned}
&f,\\
&f+3x_{\pi(0)}+x_{\pi(1)}+2x_{\pi(0)}x_{\pi(1)},\\
&f+2x_{\pi(0)}+2x_{\pi(1)},\\
&f+x_{\pi(0)}+3x_{\pi(1)}+2x_{\pi(0)}x_{\pi(1)}
\end{aligned}\right.
$$
form a complementary set of size $4$ for $\forall f\in S'(4,4)$.
\end{itemize}
\end{construction}

\section{A Recursive Construction of CSSs and CCCs}

For both binary and quaternary sequences in CSSs of size 4 in  Constructions \ref{construction-5} and \ref{construction-6}, the length of sequences must be $4^m$. However, the length of the sequences can be $2^m$ in \cite{Paterson00,Schmidt07} for arbitrary $m$. Moreover, it is obvious that the constructions of sequences in \cite{Paterson00,Schmidt07} cannot be directly obtained by the proposed (generalized) seed PU matrices in Section 6. The motivation of this section and next section is to find a generalized framework for the construction of the desired PU matrices from the generalized seed PU matrices and extract the corresponding GBFs, from which we hope it can explain the known constructions of CSSs, CCCs and beyond.

\subsection{Basic Lemma on the Generating Matrices and Corresponding Function Matrices}

In this section, we present some basic results on the generating matrices and their corresponding function matrices.

Define multivariate variables
\begin{equation}\label{z1z2}
\left\{
\begin{aligned}
&\bm{z_0}=(z_{0}, z_{1},\dots, z_{n-1}),\\
&\bm{z_1}=(z_{n}, z_{n+1},\dots, z_{n+m_1-1}),\\
&\bm{z_2}=(z_{n+m_{1}}, z_{n+m_{1}+1},\dots, z_{n+m_{1}+m_{2}-1}),\\
&(\bm{z_0}, \bm{z_1}, \bm{z_2})=(z_{0}, z_{1},\dots, z_{n+m_{1}+m_{2}-1})
\end{aligned}\right.
\end{equation}
and Boolean variables
\begin{equation}\label{x1x2}
\left\{
\begin{aligned}
&\bm{x_0}=(x_{0}, x_{1},\dots, x_{n-1})\in \Z_2^{n},\\
&\bm{x_1}=(x_{n}, x_{n+1},\dots, x_{n+m_1-1})\in \Z_2^{m_1},\\
&\bm{x_2}=(x_{n+m_{1}}, x_{n+m_{1}+1},\dots, x_{n+m_{1}+m_{2}-1})\in \Z_2^{m_{2}},\\
&(\bm{x_0}, \bm{x_1}, \bm{x_2})=(x_{0}, x_{1},\dots, x_{n+m_{1}+m_{2}-1})\in \Z_2^{n+m_{1}+m_{2}}.
\end{aligned}\right.
\end{equation}

Let $\bm{A}({\bm{z}_1})$ and $\bm{B}({\bm{z}_2})$  be the generating matrices of the GBF matrices $\widetilde{\bm{A}}(\bm{x}_{1})=\{a_{i,j}(\bm{x}_{1})\}$ and $\widetilde{\bm{B}}(\bm{x}_{2})=\{b_{i,j}(\bm{x}_{2})\}$ of size $N$ respectively, where $a_{i,j}(\bm{x}_{1})$ and  $b_{i,j}(\bm{x}_{2})$ are GBFs from $\Z_{2}^{m_1}$ to $\Z_q$ and $\Z_{2}^{m_2}$ to $\Z_q$ respectively.

Recall the generalized delay matrix $\bm{D}(\bm{z}_0)$ of order $N=2^n$ in Definition \ref{G-delay-M} and the basis function $g_{i}:  \mathbb{Z}_N \rightarrow \Z_q$ shown in Example \ref{exam-7}.

\begin{lemma}\label{lem-9}
Let $\bm{C}(\bm{z}_0,\bm{z}_1,\bm{z}_2)$ be a multivariate polynomial matrix defined by
\begin{equation*}
 \bm{C}(\bm{z}_0,\bm{z}_1,\bm{z}_2)=\bm{A}(\bm{z}_{1})\cdot\bm{D}(\bm{z}_{0})\cdot\bm{B}(\bm{z}_{2}).
\end{equation*}
Then $\bm{C}(\bm{z}_0,\bm{z}_1,\bm{z}_2)$ is the generating matrix of the GBF matrix $\widetilde{\bm{C}}(\bm{x}_0,\bm{x}_1,\bm{x}_2)$ where each entry is a GBF from
$ \mathbb{Z}_2^{n+m_1+m_2} $ to $\Z_q$ with the expression
\begin{equation*}
c_{r,s}(\bm{x}_0,\bm{x}_1,\bm{x}_2)
=\sum_{i=0}^{N-1}\left( a_{r,i}(\bm{x}_{1})+b_{i,s}(\bm{x}_{2})\right){g}_{i}(\bm{x}_{0})
\end{equation*}	
for $0\leq r, s<N$. Moreover, if both $\bm{A}(\bm{z}_{1})$ and $\bm{B}(\bm{z}_{2})$ are  desired PU matrices, then $\bm{C}(\bm{z}_0,\bm{z}_1,\bm{z}_2)$ is a desired PU matrix.
\end{lemma}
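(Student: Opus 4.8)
The plan is to prove the two assertions in turn: first identify the corresponding GBF matrix of $\bm{C}$ (which simultaneously exhibits $\bm{C}$ as a generating matrix), and then verify the para-unitary condition. The starting point is the observation from Lemma \ref{lemma-8} that $\bm{D}(\bm{z}_0)$ is diagonal with monomial entries $\phi_i(\bm{z}_0)=\prod_{v=0}^{n-1}z_v^{x_v}$, where $(x_0,\dots,x_{n-1})$ is the binary expansion of $i$. Because $\bm{D}(\bm{z}_0)$ is diagonal, the triple product expands entrywise as $C_{r,s}(\bm{z}_0,\bm{z}_1,\bm{z}_2)=\sum_{i=0}^{N-1}A_{r,i}(\bm{z}_1)\,\phi_i(\bm{z}_0)\,B_{i,s}(\bm{z}_2)$ for $0\le r,s<N$.

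First I would substitute the generating-function expansions $A_{r,i}(\bm{z}_1)=\sum_{\bm{x}_1}\omega^{a_{r,i}(\bm{x}_1)}\bm{z}_1^{\bm{x}_1}$ and $B_{i,s}(\bm{z}_2)=\sum_{\bm{x}_2}\omega^{b_{i,s}(\bm{x}_2)}\bm{z}_2^{\bm{x}_2}$ into this sum. The key step is that, as $i$ runs over $\{0,\dots,N-1\}$, the exponent vector of the monomial $\phi_i(\bm{z}_0)=\bm{z}_0^{\bm{x}_0}$ runs bijectively over $\bm{x}_0\in\Z_2^n$; hence reindexing the diagonal sum by $\bm{x}_0$ isolates the coefficient of $\bm{z}_0^{\bm{x}_0}\bm{z}_1^{\bm{x}_1}\bm{z}_2^{\bm{x}_2}$ as exactly $\omega^{a_{r,i_0}(\bm{x}_1)+b_{i_0,s}(\bm{x}_2)}$, where $i_0$ is the integer with binary expansion $\bm{x}_0$. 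Since $g_i(\bm{x}_0)=\delta_{i,i_0}$ by Definition \ref{basis} and Example \ref{exam-7}, this exponent equals $\sum_{i=0}^{N-1}\bigl(a_{r,i}(\bm{x}_1)+b_{i,s}(\bm{x}_2)\bigr)g_i(\bm{x}_0)=c_{r,s}(\bm{x}_0,\bm{x}_1,\bm{x}_2)$, which is a function from $\Z_2^{n+m_1+m_2}$ to $\Z_q$. Thus every coefficient is a $q$th root of unity, so $\bm{C}$ is the generating matrix of the claimed GBF matrix $\widetilde{\bm{C}}$.

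For the para-unitary claim, I would apply the reversal rule $\bm{C}^{\dagger}(\bm{z}^{-1})=\bm{B}^{\dagger}(\bm{z}_2^{-1})\,\bm{D}^{\dagger}(\bm{z}_0^{-1})\,\bm{A}^{\dagger}(\bm{z}_1^{-1})$ and peel the product from the inside out. Since $\bm{B}$ is a desired, hence para-unitary, matrix, $\bm{B}(\bm{z}_2)\bm{B}^{\dagger}(\bm{z}_2^{-1})=c_B\bm{I}_N$ for a real constant $c_B$; next, because each diagonal entry of $\bm{D}(\bm{z}_0)$ is a monomial with coefficient $1$, $\bm{D}(\bm{z}_0)\bm{D}^{\dagger}(\bm{z}_0^{-1})=\bm{I}_N$ exactly as in the proof of Theorem \ref{thm-6}; finally $\bm{A}(\bm{z}_1)\bm{A}^{\dagger}(\bm{z}_1^{-1})=c_A\bm{I}_N$. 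Pulling the scalar $c_B$ out and chaining these identities gives $\bm{C}(\bm{z})\bm{C}^{\dagger}(\bm{z}^{-1})=c_Ac_B\,\bm{I}_N$, so $\bm{C}$ is para-unitary; together with the first part it is a desired PU matrix.

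The main obstacle is the bookkeeping in the first part: one must keep the three variable blocks $\bm{z}_0,\bm{z}_1,\bm{z}_2$ separate, correctly match the single diagonal index $i$ against the Boolean block $\bm{x}_0$ through the binary-expansion bijection of Lemma \ref{lemma-8}, and invoke the delta property of $g_i$ to collapse the sum into the stated closed form $c_{r,s}$. By contrast, the para-unitary part is routine once one checks that the $\dagger$-operation leaves the monomial entries of $\bm{D}$ with coefficient $1$, which is precisely what makes $\bm{D}(\bm{z}_0)\bm{D}^{\dagger}(\bm{z}_0^{-1})=\bm{I}_N$ hold and lets the two desired-PU hypotheses on $\bm{A}$ and $\bm{B}$ combine cleanly.
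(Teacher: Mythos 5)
Your proof is correct and takes essentially the same route as the paper: expand the triple product using the diagonal monomial structure of $\bm{D}(\bm{z}_0)$, identify the diagonal index $i$ with the binary expansion $\bm{x}_0$ via Lemma \ref{lemma-8}, and use the Kronecker-delta property of the basis functions $g_i$ to collapse the exponent into the stated closed form for $c_{r,s}$. The only differences are cosmetic — you work entrywise with the scalar sum $\sum_i A_{r,i}(\bm{z}_1)\phi_i(\bm{z}_0)B_{i,s}(\bm{z}_2)$ where the paper organizes the same computation through the coefficient matrices $\bm{A}(\bm{x}_1)\bm{E}_i\bm{B}(\bm{x}_2)$, and you spell out the routine para-unitary chaining $\bm{C}(\bm{z})\bm{C}^{\dagger}(\bm{z}^{-1})=c_Ac_B\bm{I}_N$ that the paper states without proof.
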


\begin{proof}
According to formula (\ref{gene-matrix-dec}),  the generating matrices $\bm{A}(\bm{z}_1)$ and $\bm{B}(\bm{z}_2)$ can be expanded in the  form:
\begin{equation*}
\bm{A}(\bm{z}_1)=\sum_{\bm{x}_{1}}\bm{A}(\bm{x}_{1})\cdot \bm{z}_1^{\bm{x}_1}
\end{equation*}
and
\begin{equation*}
\bm{B}(\bm{z}_2)=\sum_{\bm{x}_{2}}\bm{B}(\bm{x}_{2})\cdot \bm{z}_2^{\bm{x}_2},
\end{equation*}
where $A_{i,j}(\bm{x}_{1})=\omega^{a_{i,j}(\bm{x}_1)}$ and $B_{i,j}(\bm{x}_{2})=\omega^{b_{i,j}(\bm{x}_2)}$, respectively, for  $0\leq i, j<N$.
The generalized delay matrix $\bm{D}({z}_0)$ can be expressed in the form
$$ \bm{D}(\bm{z}_0)=\sum_{i=0}^{N-1} \bm{E}_{i}\cdot\phi_{i}(\bm{z}_0)=\sum_{i=0}^{N-1} \bm{E}_{i}\cdot \bm{z}_0^{\bm{x}_0}$$
where $i=\sum_{v=0}^{n-1}x_v\cdot 2^v$.
Then we have
\begin{eqnarray*}
\bm{C}(\bm{z}_0,\bm{z}_1,\bm{z}_2)&=&\bm{A}(\bm{z}_{1})\cdot\bm{D}(\bm{z}_{0})\cdot\bm{B}(\bm{z}_{2})\\
&=&\left(\sum_{\bm{x}_{1}}\bm{A}(\bm{x}_{1})\cdot \bm{z}_1^{\bm{x}_1}\right) \left(\sum_{i=0}^{N-1}\bm{E}_{i}\cdot \bm{z}_0^{\bm{x}_0}\right) \left(\sum_{\bm{x}_{2}}\bm{B}(\bm{x}_{2})\cdot \bm{z}_2^{\bm{x}_2}\right)\\
&=&\sum_{\bm{x}_{0},\bm{x}_{1}, \bm{x}_{2}}\bm{A}(\bm{x}_{1})\cdot\bm{E}_{i}\cdot\bm{B}(\bm{x}_{2})\cdot \bm{z}_0^{\bm{x}_0}\bm{z}_1^{\bm{x}_1}\bm{z}_2^{\bm{x}_2}.
\end{eqnarray*}

Suppose that $\bm{C}(\bm{x}_0,\bm{x}_1,\bm{x}_2)=\bm{A}(\bm{x}_{1})\cdot\bm{E}_{i}\cdot\bm{B}(\bm{x}_{2})$. Then the entries of $\bm{C}(\bm{x}_0, \bm{x}_1, \bm{x}_2)$ can be expressed by
$$C_{r,s}(\bm{x}_0,\bm{x}_1,\bm{x}_2)=A_{r,i}(\bm{x}_{1})B_{i,s}(\bm{x}_2)=\omega^{a_{r,i}(\bm{x}_{1})+b_{i,s}(\bm{x}_2)}.$$
From  the definition of the basis function $g_i$, we have
$$a_{r,i}(\bm{x}_{1})+b_{i,s}(\bm{x}_2)=\sum_{i=0}^{{N-1}} a_{r,i}(\bm{x}_{1}){g}_{i}(\bm{x}_{0})+\sum_{i=0}^{{N-1}}b_{i,s}(\bm{x}_{2}){g}_{i}(\bm{x}_{0}),$$
which is a GBF (array) from
$ \mathbb{Z}_2^{n+m_1+m_2} $ to $\Z_q$,
 denoted by $c_{r,s}(\bm{x_0}, \bm{x_1}, \bm{x_2})$. Let $\widetilde{\bm{C}}(\bm{x_0}, \bm{x_1}, \bm{x_2})$ be the function matrix with entry $c_{r,s}(\bm{x_0}, \bm{x_1}, \bm{x_2})$ at position $(r, s)$. Then $\bm{C}(\bm{z_0}, \bm{z_1}, \bm{z_2})$ is the generating matrix of $\widetilde{\bm{C}}(\bm{x_0}, \bm{x_1}, \bm{x_2})$, which completes the proof.
\end{proof}
\begin{example}
Here comes a simple example  for $N=q=2$, assume that $\bm{A}(z_1)=\begin{bmatrix}1+z_1&-1+z_1\\1-z_1&-1-z_1\end{bmatrix}$ and $\bm{B}(z_2)=\begin{bmatrix}1+z_2&1-z_2\\1-z_2&1+z_2\end{bmatrix}$. Then their GBF matrix are $\widetilde{\bm{A}}(x_1)=\begin{bmatrix}0&1+x_1\\x_1&1\end{bmatrix}$ and $\widetilde{\bm{B}}(z_2)=\begin{bmatrix}0&x_2\\x_2&0\end{bmatrix}$ respectively. Let
 $\bm{C}({z}_0,{z}_1,{z}_2)=\bm{A}({z}_{1})\cdot\bm{D}({z}_{0})\cdot{B}(\bm{z}_{2}).$ From Lemma 9, we obtain the  GBF matrix $$\widetilde{\bm{C}}({x}_0,{x}_1,{x}_2)=
 \begin{bmatrix}
x_0+x_0x_1+x_0x_2 &x_0+x_2+x_0x_1+x_0x_2\\
x_0+x_1+x_0x_1+x_0x_2&x_0+x_1+x_2+x_0x_1+x_0x_2
 \end{bmatrix}.$$
\end{example}

\subsection{The Structure of the PU Matrices of the Known Constructions}

In this subsection, we propose an approach to construct new desired PU matrices by the seed PU matrices of order 2. We will show that the constructions of CSSs proposed in \cite{Paterson00,Schmidt07} and the construction of CCCs proposed in \cite{CCC} are all special cases of the CSSs and CCCs derived from the proposed desired PU matrices.

We have introduced the seed PU matrices of order $N$ in Subsection 4.1. For $N=2$, the corresponding GBF matrices are extracted in Subsection 5.2. We denote the seed PU matrix of order $2$ by $\bm{U}(\bm{z})$ and its corresponding GBF matrix by $\widetilde{\bm{U}}(\bm{x})$. Then we have
\begin{equation}\label{U-order2}
\widetilde{\bm{U}}(\bm{x})=
f(\bm{x})\begin{bmatrix}
1&1\\
1&1
\end{bmatrix}+\frac{q}{2}\begin{bmatrix}
0&x_{\pi({m-1})}\\
x_{\pi(0)}&x_{\pi({0})}+x_{\pi(m-1)}
\end{bmatrix},
\end{equation}
where
\begin{equation}\label{f-order2}
{f}(\bm{x})=
\frac{q}{2}\sum_{k=1}^{m-1}x_{\pi{(k-1)}}x_{\pi({k})}
+\sum_{k=0}^{m-1}c_kx_{k}+c'.
\end{equation}

\begin{definition}\label{def-P}
Let $\bm{P}$ be a permutation matrix of order $2^{n+1}$ with entries
\begin{equation}
P_{u,v}=\begin{cases}
1, &\text{if} \ v\equiv 2u \ (\bmod~{2^{n+1}-1})\\
0, &\text{otherwise}
\end{cases}
\end{equation}
for $0\le u,v<2^{n+1}$. In other word, $P_{u,v}=1$ if and only if $(u_n, u_0, \dots, u_{n-1})=(v_0, v_1 ,\dots, v_n)$ where $(u_{0},u_1 ,\dots ,u_{n})$ and $(v_0 ,v_1,\dots,v_{n})$ are the binary expansions of integers $u$ and $v$ respectively.
\end{definition}

For  $0\le j< 2^n$, let ${\bm{U}}^{\{j\}}(\bm{z})$ be the generating matrix of the GBF matrix $\widetilde{\bm{U}}^{\{j\}}(\bm{x})$ with the form (\ref{U-order2}),  where the function
${f}(\bm{x})$ and permutation $\pi$ are replaced by ${f}^{\{j\}}(\bm{x})$ and $\pi_j$, respectively.
Suppose that $\bm{H}^{\{0\}}, \bm{H}^{\{1\}}\in H(q,2^{n}) $ and their corresponding phase matrices are
$\widetilde{\bm{H}}^{\{0\}},\widetilde{\bm{H}}^{\{1\}}$.

\begin{theorem}\label{thm-8}
Let matrices $\bm{P}$, $\bm{H}^{\{0\}}$, $\bm{H}^{\{1\}}$ and $\bm{U}^{\{j\}}(\bm{z})$ be given as above.
Define a multivariate polynomial matrix of order $2^{n+1}$ by
\begin{equation}\label{U-1}
\bm{G}(\bm{z})
=\begin{bmatrix}\bm{H}^{\{0\}}&\bm{0}\\\bm{0}&\bm{H}^{\{1\}}\end{bmatrix}
\bm{P}	
\begin{bmatrix}
\bm{U}^{\{0\}}(\bm{z})&0&\cdots&0\\
0&\bm{U}^{\{1\}}(\bm{z})&\cdots&0\\
\vdots&\vdots&\ddots&\vdots\\
0&0&\cdots&\bm{U}^{\{2^{n-1}\}}(\bm{z})
\end{bmatrix}
\bm{P}^{T}.
\end{equation}
Then $\bm{G}(\bm{z})$  is a desired PU matrix and the entries of its corresponding GBF $\widetilde{\bm{G}}(\bm{x})$  are functions from $\mathbb{Z}_2^{m}$  to $\Z_q$ with the expression 	
\begin{equation}\label{U1-func} \widetilde{{G}}_{u,v}(\bm{x})=f^{\{j\}}(\bm{x})+\frac{q}{2}\alpha{x}_{\pi_j(0)}+\frac{q}{2}\beta{x}_{\pi_j(m-1)}+\widetilde{{H}}_{i,j}^{\{\alpha\}},
\end{equation}	
where $u=\alpha\cdot 2^{n}+i$ and $v=\beta\cdot 2^{n}+j$ ($0\leq \alpha,\beta\leq 1$ ,$0\leq i,j< 2^{n}$).
  \end{theorem}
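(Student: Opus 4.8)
The plan is to factor $\bm{G}(\bm{z})$ as $\bm{\Theta}\,\bm{P}\,\bm{\Lambda}(\bm{z})\,\bm{P}^{T}$, where $\bm{\Theta}=\mathrm{diag}(\bm{H}^{\{0\}},\bm{H}^{\{1\}})$ is a constant matrix and $\bm{\Lambda}(\bm{z})=\mathrm{diag}(\bm{U}^{\{0\}}(\bm{z}),\dots,\bm{U}^{\{2^{n}-1\}}(\bm{z}))$ is block diagonal, and then to treat the two assertions separately. For the para-unitary property I would argue factor by factor. Each $\bm{U}^{\{j\}}(\bm{z})$ is a seed PU matrix of order $2$, so Theorem~\ref{thm-2} gives $\bm{U}^{\{j\}}(\bm{z})\bm{U}^{\{j\}\dagger}(\bm{z}^{-1})=2^{m+1}\bm{I}_{2}$, whence $\bm{\Lambda}(\bm{z})\bm{\Lambda}^{\dagger}(\bm{z}^{-1})=2^{m+1}\bm{I}_{2^{n+1}}$. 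Since $\bm{P}$ is a real permutation matrix, $\bm{P}^{T}=\bm{P}^{\dagger}$ and $\bm{P}\bm{P}^{T}=\bm{I}$, and since $\bm{H}^{\{0\}},\bm{H}^{\{1\}}\in H(q,2^{n})$ we have $\bm{\Theta}\bm{\Theta}^{\dagger}=2^{n}\bm{I}_{2^{n+1}}$. Substituting these into the expansion of $\bm{G}(\bm{z})\bm{G}^{\dagger}(\bm{z}^{-1})$ collapses the inner factors one at a time to $2^{m+n+1}\bm{I}_{2^{n+1}}$, which settles the PU requirement.

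The substantive work is extracting the entries, and here the plan is to track the conjugation $\bm{P}\bm{\Lambda}(\bm{z})\bm{P}^{T}$ at the level of binary indices. Writing each index through its binary expansion and using Definition~\ref{def-P} (so that the unique nonzero $P_{u,r}$ corresponds to the cyclic bit-shift $(r_0,\dots,r_n)=(u_n,u_0,\dots,u_{n-1})$), I would show that $\left(\bm{P}\bm{\Lambda}(\bm{z})\bm{P}^{T}\right)_{w,v}$ equals $\bm{\Lambda}_{r,s}(\bm{z})$ with $r,s$ the shifts of $w,v$. Because the $2\times 2$ blocks of $\bm{\Lambda}(\bm{z})$ couple only index pairs $r,s$ that agree in all but the lowest bit, after the shift this forces $w$ and $v$ to agree in their low $n$ bits; writing $w=\alpha 2^{n}+j$ and $v=\beta 2^{n}+j$ with $\alpha,\beta\in\{0,1\}$, the shifts land inside block $j$ and the surviving value is $\bm{U}^{\{j\}}_{\alpha,\beta}(\bm{z})$. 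Left-multiplication by $\bm{\Theta}$ then contracts the middle index against $H^{\{\alpha\}}_{i,i'}$; since $\bm{P}\bm{\Lambda}(\bm{z})\bm{P}^{T}$ is supported only where the low bits match, only the term $i'=j$ survives, giving
\begin{equation*}
G_{u,v}(\bm{z})=H^{\{\alpha\}}_{i,j}\cdot \bm{U}^{\{j\}}_{\alpha,\beta}(\bm{z}),\qquad u=\alpha 2^{n}+i,\ v=\beta 2^{n}+j.
\end{equation*}

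Finally I would convert this product into a single GBF generating function. Since $H^{\{\alpha\}}_{i,j}=\omega^{\widetilde{H}^{\{\alpha\}}_{i,j}}$ is a constant $q$th root of unity and $\bm{U}^{\{j\}}_{\alpha,\beta}(\bm{z})$ is, by Theorem~\ref{thm-2}, the generating function of the GBF $\widetilde{U}^{\{j\}}_{\alpha,\beta}(\bm{x})$, their product is again the generating function of a GBF, obtained by adding the constant $\widetilde{H}^{\{\alpha\}}_{i,j}$ to the exponent; this already shows $\bm{G}(\bm{z})$ is \emph{desired}. Reading off $\widetilde{U}^{\{j\}}_{\alpha,\beta}$ from (\ref{U-order2}), whose $(\alpha,\beta)$ entry of the $\tfrac{q}{2}[\cdots]$ block is exactly $\tfrac{q}{2}(\alpha x_{\pi_j(0)}+\beta x_{\pi_j(m-1)})$, yields the asserted formula (\ref{U1-func}). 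The main obstacle I anticipate is the index bookkeeping in the middle paragraph: correctly composing the two bit-shifts coming from $\bm{P}$ and $\bm{P}^{T}$ with the block-diagonal support of $\bm{\Lambda}(\bm{z})$, and verifying that this coupling keeps the low $n$ bits equal (so the label $j$ is shared) while the single top bit plays the role of the $2\times 2$ seed index. Once this permutation action is pinned down, everything else is routine.
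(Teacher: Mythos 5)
Your proposal is correct and follows essentially the same route as the paper's own proof: you compute $\bm{P}\,\mathrm{diag}(\bm{U}^{\{0\}},\dots,\bm{U}^{\{2^n-1\}})\,\bm{P}^{T}$ as a $2\times 2$ block matrix whose blocks are the diagonal matrices $\bm{V}_{\alpha,\beta}=\mathrm{diag}(U^{\{0\}}_{\alpha,\beta},\dots,U^{\{2^n-1\}}_{\alpha,\beta})$, multiply by the block-diagonal Hadamard factor to get $G_{u,v}=H^{\{\alpha\}}_{i,j}U^{\{j\}}_{\alpha,\beta}(\bm{z})$, and read off the GBF from (\ref{U-order2}), exactly as the paper does (your bit-shift bookkeeping is just Definition \ref{def-P} restated). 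The only cosmetic difference is that you spell out the para-unitarity telescoping, which the paper dismisses as obvious.
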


\begin{proof}
It is obvious that $\bm{G}(\bm{z})$  is a PU matrix.
From Definition \ref{def-P}, we know that the entries of the permutation matrix $\bm{P}$ equals to 1 at position $(j, 2j)$ and $(j+2^n, 2j+1)$ for $0\le j< 2^{n}$. Let $\bm{A}$ and $\bm{B}$ be two matrices of order $2^{n+1}$ such that $\bm{B}=\bm{P}\bm{A}\bm{P}^T$. Then the entries $B_{j,j}=A_{2j, 2j}$, $B_{j,j+2^n}=A_{2j, 2j+1}$, $B_{j+2^n,j}=A_{2j+1, 2j}$ and $B_{j+2^n,j+2^n}=A_{2j+1, 2j+1}$. We have
\begin{equation*}
\bm{P}\cdot	
\begin{bmatrix}
\bm{U}^{\{0\}}(\bm{z})&0&\cdots&0\\
0&\bm{U}^{\{1\}}(\bm{z})&\cdots&0\\
\vdots&\vdots&\ddots&\vdots\\
0&0&\cdots&\bm{U}^{\{2^{n-1}\}}(\bm{z})
\end{bmatrix}
 \cdot\bm{P}^{T}
=\begin{bmatrix}
\bm{V}_{0,0}(\bm{z})&\bm{V}_{0,1}(\bm{z})\\
\bm{V}_{1,0}(\bm{z})&\bm{V}_{1,1}(\bm{z})
\end{bmatrix}
\end{equation*}
where $ \bm{V}_{\alpha,\beta}(\bm{z}) $ ($0\le\alpha,\beta\le1$) is a diagonal matrix given by
\begin{equation*}
\bm{V}_{\alpha,\beta}(\bm{z})=diag\{{U}_{\alpha,\beta}^{\{0\}}(\bm{z}),{U}_{\alpha,\beta}^{\{1\}}(\bm{z}),\cdots,{U}_{\alpha,\beta}^{\{2^n-1\}}(\bm{z})\}.
\end{equation*}
Note that ${U}_{\alpha,\beta}^{\{j\}}(\bm{z})$ is the generating function of the GBF $f^{\{j\}}(\bm{x})+\frac{q}{2}\alpha{x}_{\pi_j(0)}+\frac{q}{2}\beta{x}_{\pi_j(m-1)}$ for $0\le j<2^n$.

The multivariate polynomial matrix $\bm{G}(\bm{z})$ can be represented by a block matrix
$$\bm{G}(\bm{z})= \begin{bmatrix}
\bm{G}_{0,0}(\bm{z})&\bm{G}_{0,1}(\bm{z})\\
\bm{G}_{1,0}(\bm{z})&\bm{G}_{1,1}(\bm{z})
\end{bmatrix}.$$
On the other hand, we have
\begin{equation*}
 \bm{G}(\bm{z})=
 \begin{bmatrix}\bm{H}^{\{0\}}&\bm{0}\\\bm{0}&\bm{H}^{\{1\}}\end{bmatrix}
 \begin{bmatrix}
\bm{V}_{0,0}(\bm{z})&\bm{V}_{0,1}(\bm{z})\\
\bm{V}_{1,0}(\bm{z})&\bm{V}_{1,1}(\bm{z})
\end{bmatrix}
= \begin{bmatrix}
 \bm{H}^{\{0\}}\bm{V}_{0,0}(\bm{z})&\bm{H}^{\{0\}}\bm{V}_{0,1}(\bm{z})\\
 \bm{H}^{\{1\}}\bm{V}_{1,0}(\bm{z})&\bm{H}^{\{1\}}\bm{V}_{1,1}(\bm{z})
 \end{bmatrix}.
 \end{equation*}
Then each block matrix can be represented by
$$\bm{G}_{\alpha,\beta}(\bm{z})=\bm{H}^{\{\alpha\}}\bm{V}_{\alpha,\beta}(\bm{z})$$
and each entry of $\bm{G}(\bm{z})$ can be given by
\begin{equation*}
{G}_{u,v}(\bm{z})
={H}_{i,j}^{\{\alpha\}}{U}^{\{j\}}_{\alpha,\beta}(\bm{z}),
\end{equation*}
which is the generating function of the GBF
\begin{equation*} \widetilde{{G}}_{u,v}(\bm{x})=f^{\{j\}}(\bm{x})+\frac{q}{2}\alpha{x}_{\pi_j(0)}+\frac{q}{2}\beta{x}_{\pi_j(m-1)}+\widetilde{{H}}_{i,j}^{\{\alpha\}},
\end{equation*}	
where $u=\alpha\cdot 2^{n}+i$ and $v=\beta\cdot 2^{n}+j$ for $0\leq \alpha,\beta\leq 1$	and $0\leq i,j< 2^{n}$. Thus the assertion is established.
\end{proof}

\begin{remark}
It was shown in \cite{CCC} that the CCCs can be  directly constructed by the sequences in CSSs proposed by Paterson in \cite{Paterson00}.  Theorem \ref{thm-8} provides another type of the  direct construction of CCCs, in which the sequences are Golay sequences proposed by Davis and Jedwab in \cite{DavisJedwab99}.
\end{remark}

\begin{example}
Here we give an example of $ \bm{G}(\bm{z}) $ of order $8$. For convenience, the PU matrices ${\bm{U}}^{\{j\}}(\bm{z})$ of order $2$ are denoted by $\begin{bmatrix}
A_{j}&B_{j}\\
C_{j}&D_{j}
\end{bmatrix}$,
and their corresponding GBF matrices $\widetilde{\bm{U}}^{\{j\}}(\bm{x})$ are denoted by $\begin{bmatrix}
a_{j}&b_{j}\\
c_{j}&d_{j}
\end{bmatrix}$ for $0\le j<4$.

Let $\bm{P}$ be a permutation matrix of order $8$ in Definition \ref{def-P}. Then the entry $P_{u,v}=1$ if and only if $(u,v)$ belongs to $\{(0,0), (1,2), (2,4), (3,6), (4,1), (5,3), (6,5), (7,7)\}$, i.e.,
\begin{equation*}
	\bm{P}= \begin{bmatrix}
	\bm{1}&0&0&0&0&0&0&0\\
	0&0&\bm{1}&0&0&0&0&0\\
	0&0&0&0&\bm{1}&0&0&0\\
	0&0&0&0&0&0&\bm{1}&0\\
	0&\bm{1}&0&0&0&0&0&0\\
	0&0&0&\bm{1}&0&0&0&0\\
	0&0&0&0&0&\bm{1}&0&0\\
	0&0&0&0&0&0&0&\bm{1}
	\end{bmatrix}.
	\end{equation*}

 Then we have
\begin{equation*}
\bm{P}\cdot\left[\begin{array}{@{}cc:cc:cc:cc@{}}
A_0&B_0&&&&&&\\
C_0&D_0&&&&&&\\\cdashline{1-8}
&&A_1&B_1&&&&\\
&&C_1&D_1&&&&\\\cdashline{1-8}
&&&&A_2&B_2&&\\
&&&&C_2&D_2&&\\\cdashline{1-8}
&&&&&&A_3&B_3\\
&&&&&&C_3&D_3\end{array}\right] \cdot\bm{P}^{T}
=\left[\begin{array}{@{}cccc:cccc@{}}
A_0&&&&B_0&&&\\
&A_1&&&&B_1&&\\
&&A_2&&&&B_2&\\
&&&A_3&&&&B_3\\\cdashline{1-8}
C_0&&&&D_0&&&\\
&C_1&&&&D_1&&\\
&&C_2&&&&D_2&\\
&&&C_3&&&&D_3\end{array} \right].
\end{equation*}
If we set
\begin{equation*}
\bm{H}^{\{0\}}=\bm{H}^{\{1\}}=
\begin{bmatrix}
1&1&1&1\\
1&-1&1&-1\\
1&1&-1&-1\\
1&-1&-1&1
\end{bmatrix},
\end{equation*}
then we have
\[
\bm{G}(\bm{z})=\left[
\begin{array}{@{}cccc:cccc@{}}
   {A}_{0}&{A}_{1}&{A}_{2}&{A}_{3}&{B}_{0}&{B}_{1}&{B}_{2}&{B}_{3}\\
{A}_{0}&-{A}_{1}&{A}_{2}&-{A}_{3}&{B}_{0}&-{B}_{1}&{B}_{2}&-{B}_{3}\\
{A}_{0}&{A}_{1}&-{A}_{2}&-{A}_{3}&{B}_{0}&{B}_{1}&-{B}_{2}&-{B}_{3}\\
{A}_{0}&-{A}_{1}&-{A}_{2}&{A}_{3}&{B}_{0}&-{B}_{1}&-{B}_{2}&{B}_{3}\\
\cdashline{1-8}
{C}_{0}&{C}_{1}&{C}_{2}&{C}_{3}&{D}_{0}&{D}_{1}&{D}_{2}&{D}_{3}\\
{C}_{0}&-{C}_{1}&{C}_{2}&-{C}_{3}&{D}_{0}&-{D}_{1}&{D}_{2}&-{D}_{3}\\
{C}_{0}&{C}_{1}&-{C}_{2}&-{C}_{3}&{D}_{0}&{D}_{1}&-{D}_{2}&-{D}_{3}\\
{C}_{0}&-{C}_{1}&-{C}_{2}&{C}_{3}&{D}_{0}&-{D}_{1}&-{D}_{2}&{D}_{3}
\end{array}
\right]
\]
and its corresponding function matrix can be expressed by
\[
\widetilde{\bm{G}}(\bm{x})=\left[
\begin{array}{@{}cccc:cccc@{}}
{a}_{0}&{a}_{1}  &{a}_{2}  &{a}_{3}  &{b}_{0}&{b}_{1}  &{b}_{2}  &{b}_{3}\\
{a}_{0}&{a}_{1}+\frac{q}{2}&{a}_{2}  &{a}_{3}+\frac{q}{2}&{b}_{0}&{b}_{1}+\frac{q}{2}&{b}_{2}  &{b}_{3}+\frac{q}{2}\\
{a}_{0}&{a}_{1}  &{a}_{2}+\frac{q}{2}&{a}_{3}+\frac{q}{2}&{b}_{0}&{b}_{1}  &{b}_{2}+\frac{q}{2}&{b}_{3}+\frac{q}{2}\\
{a}_{0}&{a}_{1}+\frac{q}{2}  &{a}_{2}+\frac{q}{2}&{a}_{3}  &{b}_{0}&{b}_{1}+\frac{q}{2}&{b}_{2}+\frac{q}{2}&{b}_{3}\\
\cdashline{1-8}
{c}_{0}&{c}_{1}  &{c}_{2}  &{c}_{3}  &{d}_{0}&{d}_{1}  &{d}_{2}  &{d}_{3}\\
{c}_{0}&{c}_{1}+\frac{q}{2}&{c}_{2}  &{c}_{3}+\frac{q}{2}&{d}_{0}&{d}_{1}+\frac{q}{2}&{d}_{2}  &{d}_{3}+\frac{q}{2}\\
{c}_{0}&{c}_{1}  &{c}_{2}+\frac{q}{2}&{c}_{3}+\frac{q}{2}&{d}_{0}&{d}_{1}  &{d}_{2}+\frac{q}{2}&{d}_{3}+\frac{q}{2}\\
{c}_{0}&{c}_{1}+\frac{q}{2}&{c}_{2}+\frac{q}{2}&{c}_{3}  &{d}_{0}&{d}_{1}+\frac{q}{2}&{d}_{2}+\frac{q}{2}&{d}_{3}
\end{array}
\right].
\]
\end{example}

%%%%%%%%%%%%%%%%%%%%%%%%%%%%%%%%%%%%%%%%%%%%%%%%%%%%%%%%%%%%%%%%%%%%%%%%%%%%%%%%%%%%%%%%%%%%%%%%

Before introducing another type of desired PU matrices, we define multivariate variables
\begin{equation}\label{z1z2-2}
\left\{
\begin{aligned}
&\bm{z_0}=(z_{0}, z_{1},\dots, z_{m-1}),\\
&\bm{z_1}=(z_{m}, z_{m+1},\dots, z_{m+n-1}),\\
\end{aligned}\right.
\end{equation}
and Boolean variables
\begin{equation}\label{x1x2-2}
\left\{
\begin{aligned}
&\bm{x_0}=(x_{0}, x_{1},\dots, x_{m-1})\in \Z_2^{m},\\
&\bm{x_1}=(x_{m}, x_{m+1},\dots, x_{m+n-1})\in \Z_2^{n}.\\
\end{aligned}\right.
\end{equation}

Let $\bm{G}(\bm{z}_{0})$ be a desired PU matrix of order $ 2^{n+1} $ obtained from Theorem \ref{thm-8}, and $\widetilde{\bm{G}}(\bm{x}_0)$ its corresponding GBF matrix.
Suppose that $\bm{H}^{\{2\}}, \bm{H}^{\{3\}}\in H(q,2^{n}) $ and their corresponding phase matrices are
$\widetilde{\bm{H}}^{\{2\}},\widetilde{\bm{H}}^{\{3\}}$. Recall the definition of the generalized delay matrix $\bm{D}(\bm{z}_1)= \bm{D}(z_{m+n-1})\otimes\cdots\otimes\bm{D}(z_{m+1})\otimes\bm{D}(z_{m})$. Then we have the following theorem.

\begin{theorem}\label{thm-9}
Let matrices $\bm{G}(\bm{z}_{0})$, $\bm{H}^{\{2\}}$ and $\bm{H}^{\{3\}}$ be given as above. Then $\bm{M}(\bm{z_0}, \bm{z_1})$ defined by
\begin{equation}\label{U-2}
\bm{M}(\bm{z_0}, \bm{z_1})=
\bm{G}(\bm{z}_{0})
\begin{bmatrix}\bm{D}(\bm{z}_{1})&\bm{0}\\\bm{0}&\bm{D}(\bm{z}_{1})\end{bmatrix}
\begin{bmatrix}\bm{H}^{\{2\}}&\bm{0}\\\bm{0}&\bm{H}^{\{3\}}\end{bmatrix}
\end{equation}
is a desired PU matrix, and the entries of its corresponding GBF $\widetilde{\bm{M}}(\bm{x}_0, \bm{x}_1)$  are functions from $\mathbb{Z}_2^{m+n}$  to $\Z_q$ with the expression 	
\begin{equation}\label{U2-func}
\widetilde{{M}}_{u,v}(\bm{x_0}, \bm{x_1})=\sum_{i=0}^{2^{n}-1} \left(f^{\{i\}}(\bm{x_0})+\frac{q}{2}\alpha{x}_{\pi_i(0)}+\frac{q}{2}\beta{x}_{\pi_i(m-1)}+
\widetilde{{H}}_{l,i}^{\{\alpha\}}+\widetilde{{H}}_{i,j}^{\{\beta+2\}}\right){g}_{i}(\bm{x}_{1}),
\end{equation}
where $u=\alpha\cdot 2^{n}+l$ and $v=\beta\cdot 2^{n}+j$ for $0\leq \alpha,\beta\leq 1$	and $0\leq l,j< 2^{n}$.	
\end{theorem}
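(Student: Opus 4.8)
The plan is to verify the two defining properties of a \emph{desired} PU matrix in turn: first the para-unitary identity, and then the extraction of each entry of $\bm{M}(\bm{z}_0,\bm{z}_1)$ as the generating function of a GBF in the claimed form (\ref{U2-func}).

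For the para-unitary property I would simply chain the unitarity of the three factors. Forming $\bm{M}(\bm{z}_0,\bm{z}_1)\bm{M}^{\dagger}(\bm{z}_0^{-1},\bm{z}_1^{-1})$ and cancelling from the inside out, the block-diagonal BH factor contributes
$$\begin{bmatrix}\bm{H}^{\{2\}}&\bm{0}\\\bm{0}&\bm{H}^{\{3\}}\end{bmatrix}\begin{bmatrix}\bm{H}^{\{2\}\dagger}&\bm{0}\\\bm{0}&\bm{H}^{\{3\}\dagger}\end{bmatrix}=2^{n}\bm{I}_{2^{n+1}},$$
the block-diagonal delay factor contributes $\bm{D}(\bm{z}_1)\bm{D}^{\dagger}(\bm{z}_1^{-1})=\bm{I}$ in each block, and finally $\bm{G}(\bm{z}_0)\bm{G}^{\dagger}(\bm{z}_0^{-1})$ is a real scalar multiple of the identity because $\bm{G}(\bm{z}_0)$ is a desired PU matrix by Theorem \ref{thm-8}. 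The product of these real constants gives $\bm{M}\bm{M}^{\dagger}=c\,\bm{I}_{2^{n+1}}$.

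For the entry extraction I would exploit the block-diagonal structure of the two right-hand factors. Partition every matrix of order $2^{n+1}$ into four blocks of order $2^{n}$ indexed by $(\alpha,\beta)\in\{0,1\}^{2}$ and write $\bm{G}(\bm{z}_0)=\bigl(\bm{G}_{\alpha,\beta}(\bm{z}_0)\bigr)$. Since $\mathrm{diag}(\bm{D}(\bm{z}_1),\bm{D}(\bm{z}_1))$ and $\mathrm{diag}(\bm{H}^{\{2\}},\bm{H}^{\{3\}})$ are block-diagonal, only the matching block survives each product, so the $(\alpha,\beta)$ block of $\bm{M}$ collapses to
$$\bm{M}_{\alpha,\beta}(\bm{z}_0,\bm{z}_1)=\bm{G}_{\alpha,\beta}(\bm{z}_0)\cdot\bm{D}(\bm{z}_1)\cdot\bm{H}^{\{\beta+2\}}.$$
This is precisely the shape treated in Lemma \ref{lem-9}: a left generating matrix, a generalized delay matrix in the middle, and a right (constant) generating matrix, the only change being that here the delay variable is $\bm{z}_1$, so the basis functions are read off in the variables $\bm{x}_1$. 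Applying the entry formula of Lemma \ref{lem-9} block-wise (which needs only that $\bm{G}_{\alpha,\beta}$ and $\bm{H}^{\{\beta+2\}}$ are generating matrices, not that they are PU) shows each block is itself the generating matrix of a GBF matrix whose $(l,j)$ entry equals $\sum_{i}\bigl([\widetilde{\bm{G}}_{\alpha,\beta}]_{l,i}(\bm{x}_0)+\widetilde{H}^{\{\beta+2\}}_{i,j}\bigr)g_i(\bm{x}_1)$.

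It then remains to substitute the explicit $\bm{G}$ entries supplied by Theorem \ref{thm-8}. Reading the block entry $[\widetilde{\bm{G}}_{\alpha,\beta}]_{l,i}$ as the global entry at $u=\alpha\cdot 2^{n}+l$, $v=\beta\cdot 2^{n}+i$, Theorem \ref{thm-8} gives it as $f^{\{i\}}(\bm{x}_0)+\frac{q}{2}\alpha x_{\pi_i(0)}+\frac{q}{2}\beta x_{\pi_i(m-1)}+\widetilde{H}^{\{\alpha\}}_{l,i}$; inserting this into the sum and folding in the $\widetilde{H}^{\{\beta+2\}}_{i,j}$ term recovers (\ref{U2-func}) verbatim with $u=\alpha\cdot 2^{n}+l$ and $v=\beta\cdot 2^{n}+j$. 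I expect the only genuine difficulty to be index bookkeeping: keeping the two binary digits $\alpha,\beta$ consistent across the block decomposition, Theorem \ref{thm-8}'s convention for the $\bm{G}$ entries, and the summation index $i$, which simultaneously serves as the column index inside $\bm{G}_{\alpha,\beta}$, the row index inside $\bm{H}^{\{\beta+2\}}$, and the argument of the basis function $g_i(\bm{x}_1)$. Once the correspondence $i\leftrightarrow\bm{x}_1$ and the role of $\beta$ in selecting $\bm{H}^{\{\beta+2\}}$ are pinned down, the remaining computation is mechanical.
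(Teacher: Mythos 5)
Your proposal is correct and follows essentially the same route as the paper's own proof: partition $\bm{M}$ into $2^n\times 2^n$ blocks, use the block-diagonal structure of the delay and BH factors to get $\bm{M}_{\alpha,\beta}(\bm{z}_0,\bm{z}_1)=\bm{G}_{\alpha,\beta}(\bm{z}_0)\bm{D}(\bm{z}_1)\bm{H}^{\{\beta+2\}}$, apply Lemma \ref{lem-9} block-wise, and substitute the entry formula (\ref{U1-func}) from Theorem \ref{thm-8}. Your added observation that the entry-extraction part of Lemma \ref{lem-9} needs only that the outer factors are generating matrices (not PU) is a point the paper leaves implicit, and it is exactly what justifies applying the lemma to the non-PU blocks $\bm{G}_{\alpha,\beta}$.
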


\begin{proof}
It is obvious that $\bm{M}(\bm{z_0}, \bm{z_1})$  is a PU matrix.
The multivariate polynomial matrices $\bm{G}(\bm{z_1})$ and $\bm{M}(\bm{z_0}, \bm{z_1})$
can be interpreted by  block matrices
$$\bm{G}(\bm{z_0})= \begin{bmatrix}
\bm{G}_{0,0}(\bm{z_0})&\bm{G}_{0,1}(\bm{z_0})\\
\bm{G}_{1,0}(\bm{z_0})&\bm{G}_{1,1}(\bm{z_0})
\end{bmatrix}$$
and
$$\bm{M}(\bm{z_0}, \bm{z_1})=
\begin{bmatrix}
\bm{M}_{0,0}(\bm{z_0}, \bm{z_1})&\bm{M}_{0,1}(\bm{z_0}, \bm{z_1})\\
\bm{M}_{1,0}(\bm{z_0}, \bm{z_1})&\bm{M}_{1,1}(\bm{z_0}, \bm{z_1})
\end{bmatrix},$$
respectively. From the definition, we have
\begin{eqnarray*}
\bm{M}(\bm{z_0}, \bm{z_1})&=&
\begin{bmatrix}
\bm{G}_{0,0}(\bm{z_0})&\bm{G}_{0,1}(\bm{z_0})\\
\bm{G}_{1,0}(\bm{z_0})&\bm{G}_{1,1}(\bm{z_0})
\end{bmatrix}
\begin{bmatrix}\bm{D}(\bm{z}_{1})&\bm{0}\\\bm{0}&\bm{D}(\bm{z}_{1})\end{bmatrix}
\begin{bmatrix}\bm{H}^{\{2\}}&\bm{0}\\\bm{0}&\bm{H}^{\{3\}}\end{bmatrix}\\
&=&
\begin{bmatrix}
\bm{G}_{0,0}(\bm{z_0})\bm{D}(\bm{z}_{1})\bm{H}^{\{2\}}&\bm{G}_{0,1}(\bm{z_0})\bm{D}(\bm{z}_{1})\bm{H}^{\{3\}}\\
\bm{G}_{1,0}(\bm{z_0})\bm{D}(\bm{z}_{1})\bm{H}^{\{2\}}&\bm{G}_{1,1}(\bm{z_0})\bm{D}(\bm{z}_{1})\bm{H}^{\{3\}}
\end{bmatrix},
\end{eqnarray*}
which imply
\begin{equation*}
\bm{M}_{\alpha,\beta}(\bm{z_0}, \bm{z_1})=\bm{G}_{\alpha,\beta}(\bm{z}_{0})\bm{D}(\bm{z}_{1})\bm{H}^{\{\beta+2\}}
\end{equation*}
for $0\leq \alpha,\beta\leq 1$.

According to Lemma \ref{lem-9}, we have
$$\widetilde{{M}}_{\alpha\cdot 2^{n}+l, \beta\cdot 2^{n}+j}(\bm{x}_0, \bm{x}_1)=\sum_{i=0}^{2^n-1}\left( \widetilde{{G}}_{\alpha\cdot 2^{n}+l, \beta\cdot 2^{n}+i}(\bm{x}_0)+{\widetilde{H}}^{\{\beta+2\}}_{i,j}\right){g}_{i}(\bm{x}_{1}),$$
which completes the proof by substituting (\ref{U1-func}) into the above formula.
\end{proof}

In Theorem \ref{thm-9}, suppose $ \widetilde{H}_{0,i}^{\{0\}}=\widetilde{H}_{i,0}^{\{2\}}=0 $ in formula (\ref{U2-func}) for $0\le i\le 2^{n}-1$, and $\alpha=\beta=0$. We have
\begin{equation*}
\widetilde{{M}}_{0,0}(\bm{x_0}, \bm{x_1})=\sum_{i=0}^{2^{n}-1} f^{\{i\}}(\bm{x_0}){g}_{i}(\bm{x}_{1}).
\end{equation*}
We use it to illustrate the CSSs proposed in \cite{Paterson00,Schmidt07} and  CCCs proposed in \cite{CCC}.

\begin{corollary}\label{coro-4}
The array   (GBF)
\begin{equation}\label{exam-array}
f(\bm{x_0}, \bm{x_1})=\sum_{i=0}^{2^{n}-1} f^{\{i\}}(\bm{x_0}){g}_{i}(\bm{x}_{1})
\end{equation}
from $\Z_2^{m+n}$ to $\Z_q$ lies in a CAS of size $2^{n+1}$.
\end{corollary}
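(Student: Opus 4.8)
The plan is to realize the array $f(\bm{x}_0,\bm{x}_1)$ of (\ref{exam-array}) as a single entry of the corresponding GBF matrix produced by Theorem \ref{thm-9}, and then to read off the conclusion from the CAS guarantee of Theorem \ref{thm-1}. The starting observation is that the right-hand side of (\ref{exam-array}) is precisely the entry $\widetilde{M}_{0,0}(\bm{x}_0,\bm{x}_1)$ of formula (\ref{U2-func}) after setting $\alpha=\beta=0$ and discarding the two boundary phase contributions $\widetilde{H}^{\{0\}}_{0,i}$ and $\widetilde{H}^{\{2\}}_{i,0}$; this is exactly the specialization recorded in the remark immediately preceding the corollary.

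First I would fix the ingredients of Theorem \ref{thm-9}. Choose seed PU matrices $\bm{U}^{\{i\}}(\bm{z})$ of order $2$ (Subsection 4.1) whose $(0,0)$ GBF entry is the prescribed Golay function $f^{\{i\}}$ of the form (\ref{f-order2}), for $0\le i<2^n$, and assemble the matrix $\bm{G}(\bm{z}_0)$ of Theorem \ref{thm-8}. For the two BH matrices $\bm{H}^{\{0\}},\bm{H}^{\{2\}}\in H(q,2^n)$ entering the $(0,0)$ position, I would use normalized representatives of their equivalence classes: every BH matrix is equivalent, via right- and left-multiplication by diagonal unitary matrices, to one whose first row and first column consist of $1$'s, so I may assume $\widetilde{H}^{\{0\}}_{0,i}=0$ and $\widetilde{H}^{\{2\}}_{i,0}=0$ for all $0\le i\le 2^n-1$ (the matrices $\bm{H}^{\{1\}},\bm{H}^{\{3\}}$ play no role in the $(0,0)$ entry and may be arbitrary). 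With these choices, (\ref{U2-func}) at $u=v=0$ collapses to $\widetilde{M}_{0,0}(\bm{x}_0,\bm{x}_1)=\sum_{i=0}^{2^n-1} f^{\{i\}}(\bm{x}_0)g_i(\bm{x}_1)=f(\bm{x}_0,\bm{x}_1)$.

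Next I would invoke Theorem \ref{thm-9} to conclude that the resulting generating matrix $\bm{M}(\bm{z}_0,\bm{z}_1)$ is a desired PU matrix of order $2^{n+1}$ whose corresponding GBF matrix is $\widetilde{\bm{M}}(\bm{x}_0,\bm{x}_1)$, and then apply Theorem \ref{thm-1}-(3): since a desired PU matrix is a generating matrix of a function matrix, the arrays in any fixed row of $\widetilde{\bm{M}}$ form a CAS of size $2^{n+1}$. Taking the first row $\{\widetilde{M}_{0,v}(\bm{x}_0,\bm{x}_1)\}_{0\le v<2^{n+1}}$, which contains $f=\widetilde{M}_{0,0}$, shows that $f$ lies in a CAS of size $2^{n+1}$, as claimed. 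I do not anticipate a genuine obstacle: the only point worth stating carefully is the normalization step, since the boundary term $\sum_i(\widetilde{H}^{\{0\}}_{0,i}+\widetilde{H}^{\{2\}}_{i,0})g_i(\bm{x}_1)$ is a function of $\bm{x}_1$ alone but need not be affine as a GBF, so it cannot simply be absorbed by Property \ref{prop-2}; instead it is forced to vanish by working with normalized BH representatives, which is legitimate because Theorem \ref{thm-9} holds for every choice of BH matrices.
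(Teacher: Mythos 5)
Your proof is correct and follows essentially the same route as the paper: the paper obtains the corollary by specializing Theorem \ref{thm-9} at $u=v=0$ under the assumption $\widetilde{H}^{\{0\}}_{0,i}=\widetilde{H}^{\{2\}}_{i,0}=0$ and then (implicitly) invoking Theorem \ref{thm-1}-(3) on a row of the resulting function matrix. Your only addition is to justify that assumption by choosing dephased BH representatives — a detail the paper glosses over with ``suppose'' — and your observation that the boundary term $\sum_i(\widetilde{H}^{\{0\}}_{0,i}+\widetilde{H}^{\{2\}}_{i,0})g_i(\bm{x}_1)$ is not in general affine, hence cannot be removed by Property \ref{prop-2}, is a correct and worthwhile clarification.
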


Notice that  if $f(\bm{x_0}, \bm{x_1})$ is given in formula (\ref{exam-array}), we have
$$ f(\bm{x_0}, \bm{x_1})|_{\bm{x}_{1}=\bm{i}}=f^{\{i\}}(\bm{x_0}),$$
where $\bm{i}=(i_0 , i_1, \cdots, i_{n-1})$ is the binary expansion of integer $i$.
Therefore, the sequences proposed in \cite{Schmidt07} are exactly evaluated by the GBF: $\pi \cdot f(\bm{x_0}, \bm{x_1})$,  where $f(\bm{x_0}, \bm{x_1})$ is shown in formula (\ref{exam-array}) and $\pi$ is an arbitrary  permutation of binary variables $(\bm{x_0}, \bm{x_1})$. The sequences proposed in \cite{Paterson00} are a subset of those  sequences where all the permutation $\pi_i$ in functions $f^{\{i\}}(\bm{x_0})$ are the identity permutation for all $i$ and $f(\bm{x_0}, \bm{x_1})$ is a quadratic GBF.

\begin{remark}
In Section 5, it is illustrated that all the $q$-ary Golay sequences proposed in \cite{DavisJedwab99} can be derived from a single Golay array of size $2\times 2 \times \cdots \times 2$. Corollary \ref{coro-4} showed that all the sequences in  CSSs proposed in \cite{Paterson00, Schmidt07} can also be derived from arrays of size $2\times 2 \times \cdots \times 2$ in some CASs.
\end{remark}

Let  $ \bm{H}^{\{k\}}=\bm{H} (0\leq k\leq 3)$  in formulas (\ref{U-1}) and (\ref{U-2}) be Hadamard matrices, where $H_{i,j}= (-1)^{\bm{i}\cdot\bm{j}}$, $\bm{i}$ and $\bm{j}$ are the binary expansions of integer $i$ and $j$ respectively, and $\bm{i}\cdot\bm{j}$ denotes the dot product over $\Z_2$. The following result can be  obtained immediately from Theorem \ref{thm-9}.
\begin{corollary}
Let $f(\bm{x_0}, \bm{x_1})$ be of the form (\ref{exam-array}). For $0\le u, v\le 2^{n+1}$,
the  arrays   (GBFs)
\begin{equation}\label{exam-seq1}
f_{u,v}(\bm{x_0}, \bm{x_1})=f(\bm{x_0}, \bm{x_1})+\frac{q}{2}({\bm{i}\cdot\bm{x}_{0}}+{\bm{j}\cdot\bm{x}_{0}})+\frac{q}{2}\alpha\sum_{i=0}^{2^{n}-1}
{x}_{\pi_i(n)}{g}_{i}(\bm{x}_{0})+\frac{q}{2}\beta\sum_{i=0}^{2^{n}-1}
{x}_{\pi_i(n+m-1)}{g}_{i}(\bm{x}_{0})
\end{equation}
form  a CCA of size $2^{n+1}$, where
$u=\alpha\cdot 2^{n}+i$ and $v=\beta\cdot 2^{n}+j$ for $0\leq \alpha,\beta\leq 1$	and $0\leq i,j< 2^{n}$.
\end{corollary}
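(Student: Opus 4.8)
The plan is to obtain this statement as a direct specialization of Theorem \ref{thm-9} to the Walsh--Hadamard case, so that essentially all the work lies in simplifying the general expression (\ref{U2-func}) under the hypothesis $H_{i,j}=(-1)^{\bm{i}\cdot\bm{j}}$. First I would record the phase matrix of $\bm{H}$: since $\omega=e^{2\pi\sqrt{-1}/q}$ satisfies $\omega^{q/2}=-1$, we may write $H_{i,j}=(-1)^{\bm{i}\cdot\bm{j}}=\omega^{(q/2)(\bm{i}\cdot\bm{j})}$, hence the phase matrix has entries $\widetilde{H}_{i,j}=\frac{q}{2}(\bm{i}\cdot\bm{j})$ for every $\bm{H}^{\{k\}}=\bm{H}$ with $0\le k\le 3$. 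Substituting these four identical phase matrices into (\ref{U2-func}) is the starting point.

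The key step is the collapse of the two Hadamard sums via the defining property of the basis functions $g_i$. Because $g_i(\bm{x}_1)=1$ exactly when $\bm{x}_1$ equals the binary expansion $\bm{i}$ of $i$ and vanishes otherwise, for any coefficients $c_i$ one has $\sum_{i} c_i g_i(\bm{x}_1)=c_{\mathrm{val}(\bm{x}_1)}$; in particular the reconstruction identity $\sum_{i=0}^{2^n-1}\bm{i}\,g_i(\bm{x}_1)=\bm{x}_1$ holds on the Boolean cube. Applying this with $\widetilde{H}^{\{\alpha\}}_{l,i}=\frac{q}{2}(\bm{l}\cdot\bm{i})$ turns $\sum_i \widetilde{H}^{\{\alpha\}}_{l,i} g_i(\bm{x}_1)$ into $\frac{q}{2}(\bm{l}\cdot\bm{x}_1)$, and likewise $\sum_i \widetilde{H}^{\{\beta+2\}}_{i,j} g_i(\bm{x}_1)$ into $\frac{q}{2}(\bm{x}_1\cdot\bm{j})$. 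Meanwhile $\sum_i f^{\{i\}}(\bm{x}_0)g_i(\bm{x}_1)$ is $f(\bm{x}_0,\bm{x}_1)$ by (\ref{exam-array}), while the two $\alpha,\beta$ terms retain their summation form against $g_i$. Collecting these pieces reproduces the quadratic-offset plus linear-offset shape displayed in (\ref{exam-seq1}).

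Finally I would close the argument by appeal to the earlier machinery rather than reproving anything: Theorem \ref{thm-9} already guarantees that $\bm{M}(\bm{z}_0,\bm{z}_1)$ is a desired PU matrix, and Theorem \ref{thm-1}(1) then asserts that the arrays in its corresponding function matrix $\widetilde{\bm{M}}(\bm{x}_0,\bm{x}_1)$ form a CCA of size $2^{n+1}$. Since the $f_{u,v}$ just computed are precisely the entries $\widetilde{M}_{u,v}$ under the given parametrization $u=\alpha\cdot 2^n+i$, $v=\beta\cdot 2^n+j$, the CCA conclusion is immediate.

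I expect the main obstacle to be clerical rather than conceptual: keeping the index conventions consistent when translating $u,v$ and their binary expansions $\bm{i},\bm{j}$ through the relabeling of the two variable blocks $\bm{x}_0\in\Z_2^m$ and $\bm{x}_1\in\Z_2^n$, and confirming that each dot product is paired with the correct block (the collapse above produces $\bm{x}_1$-arguments, of length $n$ matching $0\le i,j<2^n$) and that the arguments $\pi_i(\cdot)$ land on the intended coordinates. Once the collapse identity is in place the verification of (\ref{exam-seq1}) is a routine substitution, so I would concentrate the care on this bookkeeping and on checking that the $\frac{q}{2}$ phase factors match throughout.
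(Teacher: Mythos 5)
Your proposal is correct and is essentially the paper's own argument: the paper states the corollary as following ``immediately'' from Theorem~\ref{thm-9}, i.e., exactly your route of substituting the common phase matrix $\widetilde{H}_{i,j}=\frac{q}{2}(\bm{i}\cdot\bm{j})$ into (\ref{U2-func}), collapsing the Hadamard sums via $\sum_{i}c_i g_i(\cdot)$ evaluation, and invoking Theorem~\ref{thm-1}(1) for the CCA conclusion. Your closing ``bookkeeping'' worry is well placed but is a defect of the paper's statement rather than of your proof: the faithful specialization pairs the dot products with the length-$n$ block (your $\bm{x}_1$) and uses $x_{\pi_i(0)},x_{\pi_i(m-1)}$, whereas (\ref{exam-seq1}) silently relabels the variable blocks (indexing block first, $\pi_i$ permuting $\{n,\dots,n+m-1\}$), a convention confirmed by the form of (\ref{exam-seq2}).
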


Suppose that all the permutations $\pi_i$ are the identity permutation  for $0\le i\le 2^{n}-1$. The following result follows immediately from the above corollary.
\begin{corollary}\label{coro3}
Let $f(\bm{x_0}, \bm{x_1})$ be of the form (\ref{exam-array}) and every $\pi_i$ in  $f^{\{i\}}(\bm{x_0})$ be the identity permutation. For $0\le u, v\le 2^{n+1}$,
the  arrays   (GBFs)
\begin{equation}\label{exam-seq2}
f_{u,v}(\bm{x_0}, \bm{x_1})=f(\bm{x_0}, \bm{x_1})+\frac{q}{2}({\bm{i}\cdot\bm{x}_{0}}+{\bm{j}\cdot\bm{x}_{0}}+\alpha{x}_{n}+\beta{x}_{n+m-1})
\end{equation}
form  a CCA of size $2^{n+1}$, where $u=\alpha\cdot 2^{n}+i$ and $v=\beta\cdot 2^{n}+j$ for $0\leq \alpha,\beta\leq 1$	and $0\leq i,j< 2^{n}$.
\end{corollary}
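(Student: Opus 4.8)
The plan is to obtain Corollary \ref{coro3} as a direct specialization of the preceding corollary, whose arrays (\ref{exam-seq1}) are stated for arbitrary permutations $\pi_i$ attached to the component functions $f^{\{i\}}(\bm{x}_0)$. First I would apply that corollary with every $\pi_i$ taken to be the identity permutation of $\{0,1,\dots,m-1\}$; since the preceding corollary is valid for each choice of the $\pi_i$, this is a legitimate instance, so the corresponding arrays already form a CCA of size $2^{n+1}$. It then remains only to verify that, under this choice, the general expression (\ref{exam-seq1}) collapses to the announced form (\ref{exam-seq2}).

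The key computational step is the simplification of the two trailing sums in (\ref{exam-seq1}). With each $\pi_i$ the identity permutation we have $x_{\pi_i(n)}=x_n$ and $x_{\pi_i(n+m-1)}=x_{n+m-1}$ for every $i$, so these factors no longer depend on the summation index. I would pull them out of the sums and invoke the partition-of-unity identity $\sum_{i=0}^{2^{n}-1} g_i(\bm{x}_0)=1$, which is immediate from the definition of the basis functions $g_i$ (Definition \ref{basis}) and is already used in the proof of Lemma \ref{lem-5}. This yields
$$\frac{q}{2}\alpha\sum_{i=0}^{2^{n}-1}x_{n}\,g_i(\bm{x}_0)=\frac{q}{2}\alpha x_{n},\qquad \frac{q}{2}\beta\sum_{i=0}^{2^{n}-1}x_{n+m-1}\,g_i(\bm{x}_0)=\frac{q}{2}\beta x_{n+m-1},$$
and substituting back into (\ref{exam-seq1}) reproduces exactly (\ref{exam-seq2}).

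Because the arrays in (\ref{exam-seq2}) are literally the same objects as in (\ref{exam-seq1}), merely rewritten, the CCA property is inherited verbatim and no new correlation computation is needed. The main (and essentially only) obstacle is bookkeeping: I must carry the index range $0\le i<2^{n}$, the identifications $u=\alpha\cdot 2^{n}+i$ and $v=\beta\cdot 2^{n}+j$, and the labelling of the variable blocks $\bm{x}_0$ and $\bm{x}_1$ through consistently, so that the specialized formula matches (\ref{exam-seq2}) term by term. Once the partition-of-unity collapse is checked, the statement follows at once.
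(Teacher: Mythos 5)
Your proposal is correct and matches the paper's own treatment: the paper derives Corollary \ref{coro3} by exactly this specialization, stating that it ``follows immediately from the above corollary'' once all $\pi_i$ are taken to be the identity. Your explicit use of the partition-of-unity identity $\sum_{i=0}^{2^n-1}g_i(\bm{x}_0)=1$ (already established in the proof of Lemma \ref{lem-5}) is precisely the small computation hiding behind that ``immediately,'' so nothing is missing.
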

The CCC proposed in \cite{CCC} is a special case of Corollary \ref{coro3} by applying  permutation $\pi$ acting on binary variables $(\bm{x_0}, \bm{x_1})$, where $f(\bm{x_0}, \bm{x_1})$ is a quadratic function.

From the discussions above, the sequences and CSSs proposed in \cite{Paterson00, Schmidt07} and CCCs proposed in \cite{CCC} can be derived from  Theorem \ref{thm-9}.  Furthermore, there are many new direct constructions of CCCs with explicit GBFs are proposed in Theorems \ref{thm-8} and \ref{thm-9}.
\section{Generalized Constructions}

From Theorem \ref{thm-9}, we have shown that the known constructions of CSSs in \cite{Paterson00, Schmidt07} and CCCs in \cite{CCC} can be derived form the seed PU matrices of order 2. From Section 5, we know that the explicit GBF forms of the seed PU matrices of higher order were not appeared into literature before. In this section, we generalize the results in Theorems \ref{thm-8} and \ref{thm-9} by applying desired PU matrices with flexible order. It is straightforward that new CSSs and CCCs can be constructed from Theorem \ref{thm-1}.

The first generalization is immediately from Theorem \ref{thm-8} by replacing the BH matrices $\bm{H}^{\{\alpha\}}$ ($\alpha=0, 1$) by desired PU matrices $\bm{V}^{\{\alpha\}}(\bm{z_1})$ of order $2^n$.
Let $\widetilde{\bm{V}}^{\{\alpha\}}(\bm{x_1})$ be the corresponding GBF matrix of $\bm{V}^{\{\alpha\}}(\bm{z_1})$ .
Note that here we use multivariate variables
\begin{equation}\label{z1z2-3}
\left\{
\begin{aligned}
&\bm{z_0}=(z_{0}, z_{1},\dots, z_{m-1}),\\
&\bm{z_1}=(z_{m}, z_{m+1},\dots, z_{m+m'-1}),\\
\end{aligned}\right.
\end{equation}
and Boolean variables
\begin{equation}\label{x1x2-3}
\left\{
\begin{aligned}
&\bm{x_0}=(x_{0}, x_{1},\dots, x_{m-1})\in \Z_2^{m},\\
&\bm{x_1}=(x_{m}, x_{m+1},\dots, x_{m+m'-1})\in \Z_2^{m'}.\\
\end{aligned}\right.
\end{equation}

\begin{corollary}\label{coro-7}
Let the matrices $\bm{U}^{\{j\}}(\bm{z_0})$ and the permutation matrix $\bm{P}$ be the same as those in Theorem \ref{thm-8}, and $\bm{V}^{\{0\}}(\bm{z_1})$, $\bm{V}^{\{1\}}(\bm{z_1})$ given as above. Define a multivariate polynomial matrix of order $2^{n+1}$ by
\begin{equation}\label{U-3}
\bm{G}(\bm{z_0}, \bm{z_1})
=\begin{bmatrix}\bm{V}^{\{0\}}(\bm{z_1})&\bm{0}\\\bm{0}&\bm{V}^{\{1\}}(\bm{z_1})\end{bmatrix}
\cdot\bm{P}\cdot	
\begin{bmatrix}
\bm{U}^{\{0\}}(\bm{z_0})&0&\cdots&0\\
0&\bm{U}^{\{1\}}(\bm{z_0})&\cdots&0\\
\vdots&\vdots&\ddots&\vdots\\
0&0&\cdots&\bm{U}^{\{2^{n-1}\}}(\bm{z_0})
\end{bmatrix}
\bm{P}^{T}.
\end{equation}
Then $\bm{G}(\bm{z_0}, \bm{z_1})$  is a desired PU matrix. And it is the generating matrix of GBF matrix $\widetilde{\bm{G}}(\bm{x_0}, \bm{x_1})$ whose  entries are GBFs from $\mathbb{Z}_2^{m+m'}$  to $\Z_q$ with the expression 	
\begin{equation}\label{U3-func} \widetilde{{G}}_{u,v}(\bm{x_0}, \bm{x_1})=f^{\{j\}}(\bm{x_0})+\frac{q}{2}\alpha{x}_{\pi_j(0)}+\frac{q}{2}\beta{x}_{\pi_j(m-1)}+\widetilde{V}_{i,j}^{\{\alpha\}}(\bm{x_1}),
\end{equation}	
where $u=\alpha\cdot 2^{n}+i$ and $v=\beta\cdot 2^{n}+j$ for $0\leq \alpha,\beta\leq 1$	and $0\leq i,j< 2^{n}$.
\end{corollary}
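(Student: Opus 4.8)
The plan is to follow the proof of Theorem \ref{thm-8} almost verbatim, since the matrix $\bm{G}(\bm{z_0}, \bm{z_1})$ in (\ref{U-3}) differs from the matrix in (\ref{U-1}) only in that the two constant BH matrices $\bm{H}^{\{0\}}, \bm{H}^{\{1\}}$ on the left are promoted to the desired PU matrices $\bm{V}^{\{0\}}(\bm{z_1}), \bm{V}^{\{1\}}(\bm{z_1})$ of the same order $2^n$. First I would dispose of the para-unitary condition: the block-diagonal matrix $diag(\bm{V}^{\{0\}}(\bm{z_1}), \bm{V}^{\{1\}}(\bm{z_1}))$ is PU because each $\bm{V}^{\{\alpha\}}$ is, the permutation matrices $\bm{P}, \bm{P}^{T}$ are unitary, and the block-diagonal matrix of the $\bm{U}^{\{j\}}(\bm{z_0})$ is PU. A product of PU matrices is PU, so $\bm{G}(\bm{z_0}, \bm{z_1})$ satisfies $\bm{G}\cdot\bm{G}^{\dagger}=c\cdot\bm{I}$.

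Next, reusing the computation in the proof of Theorem \ref{thm-8} that is built on Definition \ref{def-P}, I would conjugate the block-diagonal matrix of the $\bm{U}^{\{j\}}(\bm{z_0})$ by $\bm{P}$:
\begin{equation*}
\bm{P}\cdot diag\{\bm{U}^{\{0\}}(\bm{z_0}),\dots,\bm{U}^{\{2^n-1\}}(\bm{z_0})\}\cdot\bm{P}^{T}
=\begin{bmatrix}
\bm{V}_{0,0}(\bm{z_0})&\bm{V}_{0,1}(\bm{z_0})\\
\bm{V}_{1,0}(\bm{z_0})&\bm{V}_{1,1}(\bm{z_0})
\end{bmatrix},
\end{equation*}
where each $\bm{V}_{\alpha,\beta}(\bm{z_0})=diag\{U^{\{0\}}_{\alpha,\beta}(\bm{z_0}),\dots,U^{\{2^n-1\}}_{\alpha,\beta}(\bm{z_0})\}$ is diagonal, and $U^{\{j\}}_{\alpha,\beta}(\bm{z_0})$ is the generating function of $f^{\{j\}}(\bm{x_0})+\frac{q}{2}\alpha x_{\pi_j(0)}+\frac{q}{2}\beta x_{\pi_j(m-1)}$. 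Left-multiplying by $diag(\bm{V}^{\{0\}}(\bm{z_1}),\bm{V}^{\{1\}}(\bm{z_1}))$ shows that the $(\alpha,\beta)$ block of $\bm{G}$ is $\bm{V}^{\{\alpha\}}(\bm{z_1})\cdot\bm{V}_{\alpha,\beta}(\bm{z_0})$. Since $\bm{V}_{\alpha,\beta}(\bm{z_0})$ is diagonal, the $(i,j)$ entry of this block collapses to the single product $V^{\{\alpha\}}_{i,j}(\bm{z_1})\cdot U^{\{j\}}_{\alpha,\beta}(\bm{z_0})$.

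The crux is then to observe that $\bm{z_1}$ and $\bm{z_0}$ are disjoint sets of Boolean variables, so a product of two generating functions in these disjoint variables is exactly the generating function of the sum of the two underlying GBFs over the combined domain $\Z_2^{m+m'}$ (the coefficient is $\omega^{a}\omega^{b}=\omega^{a+b}$, still a $q$th root of unity). Applying this with $a=\widetilde{V}_{i,j}^{\{\alpha\}}(\bm{x_1})$ and $b=f^{\{j\}}(\bm{x_0})+\frac{q}{2}\alpha x_{\pi_j(0)}+\frac{q}{2}\beta x_{\pi_j(m-1)}$ yields formula (\ref{U3-func}) for $u=\alpha 2^{n}+i$, $v=\beta 2^{n}+j$, and simultaneously certifies that every entry of $\bm{G}$ is the generating function of an array, i.e.\ that $\bm{G}$ is a desired PU matrix.

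I expect the only real subtlety to be the bookkeeping around $\bm{P}$ — identifying which diagonal entry $U^{\{j\}}_{\alpha,\beta}$ lands in the $(i,j)$ position of the $(\alpha,\beta)$ block after conjugation — which is already settled in the proof of Theorem \ref{thm-8} and needs only to be transcribed. The remaining product-of-generating-functions step is elementary: it is the multiplicativity of generating functions over disjoint variable sets, the same mechanism already exploited (with an interposed delay matrix) in Lemma \ref{lem-9}, here in the simpler delay-free setting. No genuinely new obstacle arises, because replacing a constant matrix $\bm{H}^{\{\alpha\}}$ by a variable matrix $\bm{V}^{\{\alpha\}}(\bm{z_1})$ merely upgrades the constant phase $\widetilde{H}^{\{\alpha\}}_{i,j}$ appearing in (\ref{U1-func}) to the function $\widetilde{V}_{i,j}^{\{\alpha\}}(\bm{x_1})$, with the disjointness of $\bm{z_0}$ and $\bm{z_1}$ guaranteeing that these two contributions add without interaction.
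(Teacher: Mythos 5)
Your proposal is correct and matches the paper's intent exactly: the paper offers no separate proof for this corollary, stating only that it follows immediately from Theorem \ref{thm-8} by replacing the BH matrices $\bm{H}^{\{\alpha\}}$ with the desired PU matrices $\bm{V}^{\{\alpha\}}(\bm{z_1})$, which is precisely the transcription you carried out. Your added observation---that multiplying generating functions in the disjoint variable sets $\bm{z_0}$ and $\bm{z_1}$ corresponds to adding the underlying GBFs, so the constant phase $\widetilde{H}^{\{\alpha\}}_{i,j}$ in (\ref{U1-func}) upgrades cleanly to $\widetilde{V}_{i,j}^{\{\alpha\}}(\bm{x_1})$ in (\ref{U3-func})---is exactly the detail the paper leaves implicit.
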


Moreover, let $\bm{U}^{\{j\}}(\bm{z_0})$ ($0\le j<2^n$) and $\bm{V}^{\{\alpha\}}(\bm{z_1})$ ($0\le \alpha<2^{n'}$) be desired PU matrices of order $2^{n'}$ and $2^n$,  whose corresponding GBF matrices are $\widetilde{\bm{U}}^{\{j\}}(\bm{x_0})$ and $\widetilde{\bm{V}}^{\{\alpha\}}(\bm{x_1})$,
respectively. Note that $\bm{z_0}, \bm{z_1}, \bm{x_0}, \bm{x_1}$ are defined the same as those in (\ref{z1z2-3}) and (\ref{x1x2-3}). Then we have the following assertion.

\begin{theorem}\label{thm-10}
Let $\bm{P}$ be a permutation matrix of order $2^{n+n'}$ with each entry $P_{u,v}=1$ if and only if $v\equiv 2^{n'}u$  (mod  $2^{n+n'}-1$), $\bm{U}^{\{j\}}(\bm{z_0})$ ($0\le j<2^n$) and $\bm{V}^{\{\alpha\}}(\bm{z_1})$ ($0\le \alpha<2^{n'}$) shown above.
Define a multivariate polynomial matrix of order $2^{n+n'}$ by
\begin{equation}\label{U4}
\bm{G}(\bm{z_0}, \bm{z_1})=
\begin{bmatrix}
\bm{V}^{\{0\}}(\bm{z_1})&0&\cdots&0\\
0&\bm{V}^{\{1\}}(\bm{z_1})&\cdots&0\\
\vdots&\vdots&\ddots&\vdots\\
0&0&\cdots&\bm{V}^{\{2^{n'}-1\}}(\bm{z_1})
\end{bmatrix}
\bm{P}
\begin{bmatrix}
\bm{U}^{\{0\}}(\bm{z_0})&0&\cdots&0\\
0&\bm{U}^{\{1\}}(\bm{z_0})&\cdots&0\\
\vdots&\vdots&\ddots&\vdots\\
0&0&\cdots&\bm{U}^{\{2^n-1\}}(\bm{z_0})
\end{bmatrix}
\bm{P}^{T}.
\end{equation}
Then $\bm{G}(\bm{z_0}, \bm{z_1})$ is a desired PU matrix of order $2^{n+n'}$. And it is the generating matrix of GBF matrix $\widetilde{\bm{G}}(\bm{x_0}, \bm{x_1})$ whose  entries are GBFs from $\mathbb{Z}_2^{m+m'}$  to $\Z_q$ with the expression 	
\begin{equation}\label{U4-func} \widetilde{{G}}_{u,v}(\bm{x_0}, \bm{x_1})=\widetilde{U}_{\alpha,\beta}^{\{j\}}(\bm{x_0})+\widetilde{V}_{i,j}^{\{\alpha\}}(\bm{x_1}),
\end{equation}	
where $u=\alpha\cdot 2^{n}+i$ and $v=\beta\cdot 2^{n}+j$ for $0\leq \alpha,\beta\leq 2^{n'}$	and $0\leq i,j< 2^{n}$.
\end{theorem}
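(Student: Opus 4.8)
The plan is to reduce everything to the block-reshuffling identity already established for Theorem~\ref{thm-8}, now carried out at the level of two general desired PU factors rather than specialized BH matrices. I would first abbreviate $\mathcal{U}(\bm{z_0})=\mathrm{blockdiag}(\bm{U}^{\{0\}}(\bm{z_0}),\dots,\bm{U}^{\{2^n-1\}}(\bm{z_0}))$ and $\mathcal{V}(\bm{z_1})=\mathrm{blockdiag}(\bm{V}^{\{0\}}(\bm{z_1}),\dots,\bm{V}^{\{2^{n'}-1\}}(\bm{z_1}))$, so that $\bm{G}(\bm{z_0},\bm{z_1})=\mathcal{V}(\bm{z_1})\,\bm{P}\,\mathcal{U}(\bm{z_0})\,\bm{P}^{T}$. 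The para-unitary claim is the easy half: each $\bm{U}^{\{j\}}$ and $\bm{V}^{\{\alpha\}}$ obeys the PU identity, so blockwise $\mathcal{U}(\bm{z_0})\mathcal{U}^{\dagger}(\bm{z_0}^{-1})=c_U\bm{I}$ and $\mathcal{V}(\bm{z_1})\mathcal{V}^{\dagger}(\bm{z_1}^{-1})=c_V\bm{I}$, while $\bm{P}$ is a constant permutation matrix with $\bm{P}\bm{P}^{\dagger}=\bm{P}^{T}(\bm{P}^{T})^{\dagger}=\bm{I}$. Substituting and telescoping from the inside out then gives $\bm{G}\,\bm{G}^{\dagger}(\bm{z}^{-1})=c_U c_V\,\bm{I}_{2^{n+n'}}$.

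The crux will be the conjugation $\bm{W}(\bm{z_0}):=\bm{P}\,\mathcal{U}(\bm{z_0})\,\bm{P}^{T}$. I would exploit that, by Definition~\ref{def-P} extended to order $2^{n+n'}$, the relation $v\equiv 2^{n'}u \pmod{2^{n+n'}-1}$ is precisely a cyclic shift of the $(n+n')$-bit expansion by $n'$ positions. Each index admits two decompositions: $u=\alpha\cdot 2^{n}+i$ (the $\mathcal{V}$/$\bm{G}$ convention, with high $n'$ bits $\alpha$ and low $n$ bits $i$) and $w=j\cdot 2^{n'}+\gamma$ (the $\mathcal{U}$ convention, with high $n$ bits $j$ and low $n'$ bits $\gamma$). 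Tracking the $n'$-fold bit rotation shows that $\bm{P}$ sends $u=(\alpha,i)$ to the index whose $\mathcal{U}$-block is $i$ and whose in-block position is $\alpha$, and likewise for $v=(\beta,j)$. Since $\mathcal{U}$ is block diagonal, the $(u,v)$ entry of $\bm{W}$ is nonzero only when the two $\mathcal{U}$-blocks coincide, i.e.\ $i=j$, and then equals $U^{\{i\}}_{\alpha,\beta}(\bm{z_0})$; read as a $2^{n'}\times 2^{n'}$ array of $2^{n}\times 2^{n}$ blocks, the $(\alpha,\beta)$ block of $\bm{W}$ is $\mathrm{diag}(U^{\{0\}}_{\alpha,\beta},\dots,U^{\{2^n-1\}}_{\alpha,\beta})$, exactly as in the $n'=1$ case of Theorem~\ref{thm-8}. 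This index chase is the main obstacle: the remaining steps are formal, but here one must pin down the shift direction and verify that block indices and in-block positions genuinely swap roles between the two decompositions.

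With $\bm{W}$ in hand, I would left-multiply by the block-diagonal $\mathcal{V}(\bm{z_1})$, which only mixes indices sharing the same high block $\alpha$, so the sum over the Kronecker deltas collapses to
\[
G_{u,v}(\bm{z_0},\bm{z_1})=U^{\{j\}}_{\alpha,\beta}(\bm{z_0})\cdot V^{\{\alpha\}}_{i,j}(\bm{z_1}),\qquad u=\alpha\cdot 2^{n}+i,\ v=\beta\cdot 2^{n}+j.
\]
Finally, $U^{\{j\}}_{\alpha,\beta}(\bm{z_0})$ and $V^{\{\alpha\}}_{i,j}(\bm{z_1})$ are the generating functions of the GBFs $\widetilde{U}^{\{j\}}_{\alpha,\beta}(\bm{x_0})$ and $\widetilde{V}^{\{\alpha\}}_{i,j}(\bm{x_1})$, and since they use the disjoint variable blocks $\bm{z_0}$ and $\bm{z_1}$, their product is, by the same mechanism as Lemma~\ref{lem-9}, the generating function of the sum $\widetilde{U}^{\{j\}}_{\alpha,\beta}(\bm{x_0})+\widetilde{V}^{\{\alpha\}}_{i,j}(\bm{x_1})$ from $\Z_2^{m+m'}$ to $\Z_q$. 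This simultaneously exhibits every entry of $\bm{G}$ as the generating function of a GBF, so $\bm{G}$ is a desired PU matrix, and yields the entry formula~(\ref{U4-func}); along the way one notes that the index ranges should read $0\le\alpha,\beta<2^{n'}$.
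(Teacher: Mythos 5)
Your proposal is correct and takes exactly the route the paper intends: the paper omits the proof of Theorem~\ref{thm-10}, saying it is similar to that of Theorem~\ref{thm-8}, and your argument is precisely that proof generalized --- the bit-rotation computation $\sigma(\alpha\cdot 2^{n}+i)=i\cdot 2^{n'}+\alpha$ showing that $\bm{P}\,\mathcal{U}(\bm{z_0})\,\bm{P}^{T}$ has $(\alpha,\beta)$ block $diag\{U^{\{0\}}_{\alpha,\beta},\dots,U^{\{2^n-1\}}_{\alpha,\beta}\}$, followed by left multiplication by the block-diagonal $\mathcal{V}(\bm{z_1})$ and the identification of a product of generating functions in disjoint variables with the generating function of the sum of the corresponding GBFs, exactly as in Theorem~\ref{thm-8} and Lemma~\ref{lem-9}. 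You are also right that the index ranges in the statement should read $0\le\alpha,\beta<2^{n'}$.
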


We omit the proof of Theorem \ref{thm-10} since it is similar to the proof of Theorem \ref{thm-8}.
\begin{remark}
If $\bm{U}^{\{j\}}(\bm{z_0})=\bm{U}(\bm{z_0})$ for  $0\le j<2^n$ and $\bm{V}^{\{\alpha\}}(\bm{z_1})=\bm{V}(\bm{z_1})$ for $0\le \alpha<2^{n'}$ in Theorem \ref{thm-10}, then we have
$$\bm{G}(\bm{z_0}, \bm{z_1})=\bm{U}(\bm{z_0})\otimes \bm{V}(\bm{z_1}).$$
\end{remark}

\begin{corollary}\label{coro4}
Let $f_0(\bm{x_0})$ and $f_1(\bm{x_1})$ be two arrays (GBFs) lying in  CCAs  of size $2^{n}$ and  $2^{n'}$, respectively. Then the GBF
 \begin{equation*}
f(\bm{x}_{0}, \bm{x}_{1})=f_0(\bm{x_0})+f_1(\bm{x_1})
\end{equation*}
is an array of size $\underbrace{2\times 2\times \cdots \times 2}_{m+m'}$ lying in a CAS of size $2^{n+n'}$. The sequence evaluated by GBF $\pi\cdot f(\bm{x}_{0}, \bm{x}_{1})+f'(\bm{x}_{0}, \bm{x}_{1})$ lies in a CSS of size $2^{n+n'}$, where $\pi$ is a permutation of Boolean variables $(\bm{x}_{0}, \bm{x}_{1})$ and $f'(\bm{x}_{0}, \bm{x}_{1})$ is an affine GBF.
\end{corollary}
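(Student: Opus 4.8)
The plan is to exhibit an explicit complementary array set (CAS) of size $2^{n+n'}$ that contains $f$, and then to invoke the affine/permutation stability of CASs for the sequence statement. Since $f_0$ lies in a CCA of size $2^{n}$, it belongs to at least one row of that CCA, i.e.\ to a CAS $\{a_0,a_1,\dots,a_{2^{n}-1}\}$ of arrays from $\Z_2^{m}$ to $\Z_q$ with $f_0=a_{r}$ for some $r$; likewise $f_1=b_{s}$ for some CAS $\{b_0,b_1,\dots,b_{2^{n'}-1}\}$ of arrays from $\Z_2^{m'}$ to $\Z_q$. I would then form the $2^{n+n'}$ arrays
$$c_{u,v}(\bm{x_0},\bm{x_1})=a_u(\bm{x_0})+b_v(\bm{x_1}),\qquad 0\le u<2^{n},\ 0\le v<2^{n'},$$
and claim that $\{c_{u,v}\}$ is a CAS of size $2^{n+n'}$ over the variables $(\bm{x_0},\bm{x_1})\in\Z_2^{m+m'}$ that contains $f=c_{r,s}=f_0+f_1$.

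The key step is a factorization of the generating functions. Because $\bm{x_0}$ and $\bm{x_1}$ are disjoint blocks of Boolean variables, $\omega^{a_u(\bm{x_0})+b_v(\bm{x_1})}=\omega^{a_u(\bm{x_0})}\omega^{b_v(\bm{x_1})}$, so from (\ref{array-gene}) the generating function splits as $F_{c_{u,v}}(\bm{z_0},\bm{z_1})=F_{a_u}(\bm{z_0})\,F_{b_v}(\bm{z_1})$. Substituting this into the left-hand side of the CAS criterion (\ref{CAS-def-2}) and separating the two independent sums gives
$$\sum_{u,v}F_{c_{u,v}}(\bm{z})\overline{F}_{c_{u,v}}(\bm{z}^{-1})=\left(\sum_{u}F_{a_u}(\bm{z_0})\overline{F}_{a_u}(\bm{z_0}^{-1})\right)\left(\sum_{v}F_{b_v}(\bm{z_1})\overline{F}_{b_v}(\bm{z_1}^{-1})\right).$$
Applying (\ref{CAS-def-2}) to the CAS $\{a_u\}$ and to the CAS $\{b_v\}$ separately, the two factors equal $2^{n}\cdot 2^{m}$ and $2^{n'}\cdot 2^{m'}$, whose product is $2^{n+n'}\cdot 2^{m+m'}$. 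This is exactly the CAS condition (\ref{CAS-def-2}) for $\{c_{u,v}\}$ with $N=2^{n+n'}$, $p=2$ and $m+m'$ variables, so $\{c_{u,v}\}$ is a CAS and $f$ lies in it.

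For the final sentence I would apply Property \ref{prop-2}, part (1) (the $p=2$ case) to the CAS $\{c_{u,v}\}$: for any permutation $\pi$ of the $m+m'$ Boolean variables and any affine GBF $f'$, the sequences evaluated by $\{\pi\cdot c_{u,v}+f'\}$ form a CSS of size $2^{n+n'}$, and since $\pi\cdot f+f'=\pi\cdot c_{r,s}+f'$ is one member of this set, it lies in a CSS. Structurally this is the homogeneous (Kronecker-product) specialization of Theorem \ref{thm-10} together with its following remark: if the CCAs of $f_0$ and $f_1$ are realized by desired PU matrices $\bm{A}(\bm{z_0})$ and $\bm{B}(\bm{z_1})$, then $\bm{B}(\bm{z_1})\otimes\bm{A}(\bm{z_0})$ is again a desired PU matrix whose corresponding function matrix has entries $a_u(\bm{x_0})+b_v(\bm{x_1})$ by (\ref{U4-func}), and Theorem \ref{thm-1} then yields both conclusions directly. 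I do not expect a genuine obstacle here; the only points requiring care are selecting a single row (a CAS) containing each $f_i$ rather than manipulating the whole CCA, and checking that the exponent bookkeeping in the factored product reproduces the constant $2^{n+n'}\cdot 2^{m+m'}$ demanded by (\ref{CAS-def-2}).
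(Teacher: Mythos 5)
Your proof is correct, but it takes a genuinely different route from the paper. The paper's own proof is a one-line reduction: it sets $u=v=0$ in Theorem \ref{thm-10} (in its Kronecker-product specialization $\bm{G}(\bm{z_0},\bm{z_1})=\bm{U}(\bm{z_0})\otimes\bm{V}(\bm{z_1})$ noted in the remark), so the CAS containing $f$ is obtained as a row of the corresponding function matrix of a desired PU matrix of order $2^{n+n'}$, and Theorem \ref{thm-1} and Property \ref{prop-2} then finish the argument; note that Theorem \ref{thm-10} itself is stated with proof omitted. You instead bypass the PU-matrix machinery entirely: you pick one CAS (row) from each CCA, form the product family $c_{u,v}=a_u(\bm{x_0})+b_v(\bm{x_1})$, and verify condition (\ref{CAS-def-2}) directly via the factorization $F_{c_{u,v}}(\bm{z_0},\bm{z_1})=F_{a_u}(\bm{z_0})F_{b_v}(\bm{z_1})$ over the disjoint variable blocks, so that the double sum splits as $\bigl(2^{n}2^{m}\bigr)\bigl(2^{n'}2^{m'}\bigr)=2^{n+n'}2^{m+m'}$; this family is exactly the row of $\widetilde{\bm{G}}$ that the paper's proof would produce, so the two constructions coincide, but your verification is self-contained. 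Your approach also exposes something the paper's phrasing hides: the hypothesis can be weakened from ``$f_0$, $f_1$ lie in CCAs'' to ``$f_0$, $f_1$ lie in CASs'' of the respective sizes, since only one row of each CCA is ever used. What the paper's route buys in exchange is the full CCA-level structure (the whole function matrix of $\bm{U}\otimes\bm{V}$ together with mutual orthogonality of its rows), which your row-level computation does not by itself establish — though it is not needed for the statement of the corollary.
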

\begin{proof}
It can be verified by setting $u=v=0$ in Theorem \ref{thm-10}.
\end{proof}

Next we generalize the results in Theorem \ref{thm-9}. Let
$\bm{V}(\bm{z_1})$ and $\bm{U}^{\{\beta \}}(\bm{z_2})$ ($0\le \beta<2^{n'}$)
 be desired PU matrices of order $2^{n+n'}$ and $2^{n}$, whose corresponding GBF matrices are $\widetilde{\bm{V}}(\bm{x_1})$ and $\widetilde{\bm{U}}^{\{\beta\}}(\bm{x_2})$, respectively.  Note that here $\bm{z_0}, \bm{z_1}, \bm{z_2}$ and  $\bm{x_0}, \bm{x_1}, \bm{x_2}$ are defined the same as those in (\ref{z1z2}) and (\ref{x1x2}), respectively.
Then we have the following theorem.

\begin{theorem}\label{thm-11}
Let matrices $\bm{V}(\bm{z_1})$ and $\bm{U}^{\{\beta\}}(\bm{z_2})$ ($0\le \beta<2^{n'}$) be given as above. Then
\begin{equation}\label{U5}
\bm{M}(\bm{z_0}, \bm{z_1}, \bm{z_2})=
\bm{V}(\bm{z}_{1})
\begin{bmatrix}
\bm{D}(\bm{z}_{0})&0&\cdots&0\\
0&\bm{D}(\bm{z}_{0})&\cdots&0\\
\vdots&\vdots&\ddots&\vdots\\
0&0&\cdots&\bm{D}(\bm{z}_{0})
\end{bmatrix}\cdot
\begin{bmatrix}
\bm{U}^{\{0\}}(\bm{z_2})&0&\cdots&0\\
0&\bm{U}^{\{1\}}(\bm{z_2})&\cdots&0\\
\vdots&\vdots&\ddots&\vdots\\
0&0&\cdots&\bm{U}^{\{2^{n'}-1\}}(\bm{z_2})
\end{bmatrix}
\end{equation}
is a desired PU matrix.
And it is the generating matrix of GBF matrix $\widetilde{\bm{M}}(\bm{x}_0, \bm{x}_1, \bm{x}_2)$ whose  entries are functions from $\Z_2^{m_1+m_2+n}$  to $\Z_q$ with the expression 	
\begin{equation}\label{U5-func}
\widetilde{{M}}_{u,v}(\bm{x_0}, \bm{x_1}, \bm{x_2})=\sum_{i=0}^{2^{n}-1} \left(\widetilde{V}_{u,\beta\cdot 2^{n}+i}(\bm{x_1})+\widetilde{U}_{i,j}^{\{\beta\}}(\bm{x_2}) \right){g}_{i}(\bm{x}_{0}),
\end{equation}	
where $0\le u, v<2^{n+n'}$,   $v=\beta\cdot 2^{n}+j$ for $0\leq \beta< 2^{n'}$	and $0\leq j< 2^{n}$.
\end{theorem}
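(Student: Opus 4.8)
The plan is to mirror the proof of Theorem \ref{thm-9} almost verbatim, viewing the present statement as its $2^{n'}$-fold block generalization: in Theorem \ref{thm-9} the left factor is split into $2\times 2$ blocks and the two right factors are BH matrices, whereas here the left factor $\bm{V}(\bm{z}_1)$ is split into $2^{n'}\times 2^{n'}$ blocks of order $2^n$ and the right factors are the $2^{n'}$ desired PU matrices $\bm{U}^{\{\beta\}}(\bm{z}_2)$. As always, there are two things to establish: that $\bm{M}(\bm{z}_0,\bm{z}_1,\bm{z}_2)$ is para-unitary, and that each of its entries is the generating function of a GBF with the claimed form (\ref{U5-func}).

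First I would dispose of the para-unitary condition by noting that $\bm{M}$ is a product of three factors, each a scalar multiple of a unitary matrix. The left factor $\bm{V}(\bm{z}_1)$ is para-unitary by hypothesis. The middle factor is the block-diagonal matrix $\bm{I}_{2^{n'}}\otimes\bm{D}(\bm{z}_0)$, which satisfies $\left(\bm{I}_{2^{n'}}\otimes\bm{D}(\bm{z}_0)\right)\left(\bm{I}_{2^{n'}}\otimes\bm{D}(\bm{z}_0)\right)^{\dagger}=\bm{I}$ because each generalized delay matrix obeys $\bm{D}(\bm{z}_0)\bm{D}^{\dagger}(\bm{z}_0^{-1})=\bm{I}$. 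The right factor is block-diagonal with the para-unitary blocks $\bm{U}^{\{\beta\}}(\bm{z}_2)$; since all of these share the order $2^n$ and the variable count $m_2$, they share the single para-unitary constant $c_U=2^{n}\cdot 2^{m_2}$ (the order times the number of array cells, which one reads off from the constant term of the diagonal of $\bm{U}^{\{\beta\}}\bm{U}^{\{\beta\}\dagger}(\bm{z}_2^{-1})$). Hence the right factor is $c_U$ times a unitary, and multiplying the three factors gives $\bm{M}\cdot\bm{M}^{\dagger}(\bm{z}^{-1})=c\cdot\bm{I}$ for a real constant $c$.

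For the ``desired'' part I would exploit the block structure exactly as in the proof of Theorem \ref{thm-9}. Partition $\bm{V}(\bm{z}_1)$ into blocks $\bm{V}_{\alpha,\beta}(\bm{z}_1)$ of order $2^n$ for $0\le\alpha,\beta<2^{n'}$; because the two right factors are block-diagonal, the $(\alpha,\beta)$ block of $\bm{M}$ collapses to the single product $\bm{M}_{\alpha,\beta}=\bm{V}_{\alpha,\beta}(\bm{z}_1)\cdot\bm{D}(\bm{z}_0)\cdot\bm{U}^{\{\beta\}}(\bm{z}_2)$. This is precisely the shape $\bm{A}(\bm{z}_1)\bm{D}(\bm{z}_0)\bm{B}(\bm{z}_2)$ treated in Lemma \ref{lem-9}, with $\bm{A}=\bm{V}_{\alpha,\beta}$ (the generating matrix of the GBF block $\widetilde{\bm{V}}_{\alpha,\beta}$, since $\bm{V}$ is desired) and $\bm{B}=\bm{U}^{\{\beta\}}$ (the generating matrix of $\widetilde{\bm{U}}^{\{\beta\}}$). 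Applying Lemma \ref{lem-9} blockwise shows every entry of $\bm{M}$ is the generating function of a GBF from $\Z_2^{m_1+m_2+n}$ to $\Z_q$; translating the local block indices into the global indexing $u=\alpha\cdot2^n+l$, $v=\beta\cdot2^n+j$ (so that local column $i$ of the block corresponds to global column $\beta\cdot2^n+i$ of $\bm{V}$) yields exactly formula (\ref{U5-func}).

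The main obstacle will be purely combinatorial bookkeeping rather than anything conceptual. The basis functions $g_i(\bm{x}_0)$ in (\ref{U5-func}) come from the order-$2^n$ generalized delay matrix $\bm{D}(\bm{z}_0)$ sitting \emph{inside} each block, so the summation index $i$ runs only over $0\le i<2^n$ while the outer block index $\beta$ is governed by the $2^{n'}$-fold structure; conflating these two scales is the quickest way to produce a wrong formula, and I would keep them rigorously separate. The one point where the hypotheses are genuinely needed is the equality of the para-unitary constants of the distinct blocks $\bm{U}^{\{\beta\}}$, used above to guarantee that the right factor is a scalar multiple of the identity rather than a general positive diagonal scaling.
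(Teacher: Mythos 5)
Your proposal is correct and matches the paper's intended argument: the paper omits the proof of this theorem precisely because it is "similar to the proof of Theorem \ref{thm-9} by applying Lemma \ref{lem-9}," which is exactly what you do — collapse $\bm{M}$ into blocks $\bm{V}_{\alpha,\beta}(\bm{z}_1)\bm{D}(\bm{z}_0)\bm{U}^{\{\beta\}}(\bm{z}_2)$ and apply Lemma \ref{lem-9} blockwise, with the index translation $u=\alpha\cdot 2^n+l$, $v=\beta\cdot 2^n+j$ giving (\ref{U5-func}). Your observation that all blocks $\bm{U}^{\{\beta\}}(\bm{z}_2)$ share the para-unitary constant $2^{n+m_2}$ (so the right factor is a scalar multiple of a unitary) is a valid and careful touch that the paper leaves implicit.
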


We omit the proof of Theorem \ref{thm-11},  since it is similar to the proof of Theorem \ref{thm-9} by applying Lemma \ref{lem-9}.

\begin{corollary}\label{coro-9}
Let  $\{f^{\{1\}}_i (\bm{x_1}) \}_{0\le i<2^{n+n'}}$ and  $\{f^{\{2\}}_i(\bm{x_2}) \}_{0\le i<2^{n}}$ be two CASs lying in  CCAs  of size $2^{n+n'}$ and  $2^{n}$, respectively. Then we have GBFs
 \begin{equation*}
f(\bm{x}_{0}, \bm{x}_{1}, \bm{x}_{2})=\sum_{i=0}^{2^{n}-1} (f^{\{1\}}_i(\bm{x_1}) +f^{\{2\}}_i(\bm{x_2})){g}_{i}(\bm{x}_{0})
\end{equation*}
is an array of size $\underbrace{2\times 2\times \cdots \times 2}_{m_1+m_2+n}$ lying in a CAS of size $2^{n+n'}$. The sequences  evaluated by  GBFs $\pi\cdot f(\bm{x}_{0}, \bm{x}_{1}, \bm{x}_{2})+f'(\bm{x}_{0}, \bm{x}_{1}, \bm{x}_{2})$ lies in a CSS of size $2^{n+n'}$, where $\pi$ is a permutation of Boolean variables $(\bm{x}_{0}, \bm{x}_{1}, \bm{x}_{2})$ and $f'(\bm{x}_{0}, \bm{x}_{1}, \bm{x}_{2})$ is an affine GBF.
\end{corollary}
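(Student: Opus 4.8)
The plan is to obtain Corollary~\ref{coro-9} as the specialization of Theorem~\ref{thm-11} to a single entry of its corresponding function matrix, in exactly the same way that Corollary~\ref{coro4} was extracted from Theorem~\ref{thm-10} by setting $u=v=0$. Since the matrices $\bm{V}(\bm{z_1})$ and $\bm{U}^{\{\beta\}}(\bm{z_2})$ appearing in Theorem~\ref{thm-11} are required to be desired PU matrices, I would first realize the given data inside such matrices: take $\widetilde{\bm{V}}(\bm{x_1})$ to be the function matrix of the CCA of size $2^{n+n'}$ in whose $0$-th row the given CAS $\{f^{\{1\}}_i(\bm{x_1})\}_{0\le i<2^{n+n'}}$ sits, and $\widetilde{\bm{U}}^{\{0\}}(\bm{x_2})$ to be the function matrix of the CCA of size $2^{n}$ whose $0$-th column is the given CAS $\{f^{\{2\}}_i(\bm{x_2})\}_{0\le i<2^{n}}$. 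The remaining blocks $\bm{U}^{\{\beta\}}$ with $\beta\neq 0$ may be chosen to be any desired PU matrices of order $2^{n}$, since they will not affect the entry we single out.

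With these choices in hand, I would set $u=v=0$ in the expression (\ref{U5-func}). Because $v=\beta\cdot 2^{n}+j$ forces $\beta=0$ and $j=0$, the entry collapses to
\begin{equation*}
\widetilde{{M}}_{0,0}(\bm{x_0}, \bm{x_1}, \bm{x_2})=\sum_{i=0}^{2^{n}-1}\left(\widetilde{V}_{0,i}(\bm{x_1})+\widetilde{U}^{\{0\}}_{i,0}(\bm{x_2})\right){g}_{i}(\bm{x}_{0}).
\end{equation*}
Under the identification $f^{\{1\}}_i=\widetilde{V}_{0,i}$ and $f^{\{2\}}_i=\widetilde{U}^{\{0\}}_{i,0}$ this is precisely the function $f(\bm{x}_0,\bm{x}_1,\bm{x}_2)$ in the statement, and Theorem~\ref{thm-11} already asserts it to be a GBF from $\Z_2^{m_1+m_2+n}$ to $\Z_q$, i.e. an array of size $\underbrace{2\times 2\times\cdots\times 2}_{m_1+m_2+n}$. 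Theorem~\ref{thm-11} also guarantees that $\bm{M}(\bm{z_0}, \bm{z_1}, \bm{z_2})$ is a desired PU matrix, so Theorem~\ref{thm-1}-(3) applies: the $0$-th column $\{\widetilde{M}_{u,0}\}_{0\le u<2^{n+n'}}$ of its function matrix forms a CAS of size $2^{n+n'}$, and in particular $f=\widetilde{M}_{0,0}$ lies in a CAS of that size, which is the first assertion.

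For the sequence statement I would invoke Theorem~\ref{thm-1}-(5) (equivalently Property~\ref{prop-2}), which is available because $p=2$ here: for any permutation $\pi$ of the Boolean variables and any affine GBF $f'$, the sequences evaluated by the functions in $\pi\cdot\widetilde{M}_{u,0}+f'$ over $0\le u<2^{n+n'}$ form a $q$-ary CSS of size $2^{n+n'}$, whence $\pi\cdot f+f'$ lies in such a CSS. The only point requiring care is the index alignment: I must check that restricting the column index to the block $\beta=0$ (so that only $i=0,\dots,2^{n}-1$ enter the sum, while the row data $\widetilde{V}_{0,i}$ nevertheless ranges over a full CAS of size $2^{n+n'}$) is consistent, and that the blocks $\bm{U}^{\{\beta\}}$ with $\beta\neq 0$ genuinely drop out of $\widetilde{M}_{0,0}$. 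This bookkeeping is the main---though routine---obstacle; everything else is an immediate reading-off of Theorem~\ref{thm-11}.
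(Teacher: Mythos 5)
Your proposal is correct and takes essentially the same route as the paper, whose entire proof of Corollary~\ref{coro-9} is the single sentence ``It can be verified by setting $u=v=0$ in Theorem~\ref{thm-11}.'' The details you supply --- realizing the given CASs as the $0$-th row of $\widetilde{\bm{V}}(\bm{x_1})$ and the $0$-th column of $\widetilde{\bm{U}}^{\{0\}}(\bm{x_2})$, filling the remaining blocks $\bm{U}^{\{\beta\}}$ ($\beta\neq 0$) with arbitrary desired PU matrices, and then invoking Theorem~\ref{thm-1}-(3) and (5) --- are exactly the bookkeeping the paper leaves implicit.
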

\begin{proof}
It can be verified by setting $u=v=0$ in Theorem \ref{thm-11}.
\end{proof}

\begin{example}
It is experimentally shown in  \cite[Section VII]{Paterson00}, the PMEPR of the sequences evaluated by first order Reed-Muller coset of Boolean function
$$f(x_0, x_1, x_2, x_3, x_4)=x_0x_1+x_0x_4+x_1x_4+x_2x_4+x_3x_4$$
equals to $3.449$. However all the theorems in \cite{Paterson00} can only show that the PMEPR is bounded by $8$, since it lies in a CSS of size $8$. Note that an explanation of this example is given in \cite{Chen06}. Here  we illustrate the function $f(x_0, x_1, x_2, x_3, x_4)$ lies in a CAS of size $4$ by applying Corollary \ref{coro-9}.

Let $n=2, n'=0$, $f^{\{2\}}_i=0$ for $1\le i\le 3$, and $f^{\{1\}}_i$ be given as follows.
\begin{equation*}
\left\{
\begin{aligned}
&f^{\{1\}}_0(x_1, x_3, x_4)=x_1x_4+x_3x_4,\\
&f^{\{1\}}_1(x_1, x_3, x_4)=x_1x_4+x_3x_4+x_1+x_4,\\
&f^{\{1\}}_2(x_1, x_3, x_4)=x_1x_4+x_3x_4+x_4,\\
&f^{\{1\}}_3(x_1, x_3, x_4)=x_1x_4+x_3x_4+x_1.
\end{aligned}\right.
\end{equation*}
It is obvious that $\{f^{\{1\}}_0, f^{\{1\}}_3\}$ and $\{f^{\{1\}}_1, f^{\{1\}}_2\}$ are both GAPs. So $\{f^{\{1\}}_0, f^{\{1\}}_1, f^{\{1\}}_2, f^{\{1\}}_3\}$ lies in a CAS of size 4. Then it is easy to check
$$f(x_0, x_1, x_2, x_3, x_4)=\sum_{i=0}^{3}f^{\{1\}}_i(x_1, x_3, x_4){g}_{i}(x_0, x_2),$$
so $f(x_0, x_1, x_2, x_3, x_4)$ lies in a CAS of size 4 by Corollary \ref{coro-9} and the PMEPR of the sequences evaluated by the first order Reed-Muller coset of this Boolean function is bounded by $4$.
\end{example}

\section{Concluding Remarks}

In this paper, we present a state-of-the-art method  to construct CCSs and CCCs by PU matrices, which can be divided into three steps:
\begin{itemize}
\item[(1)]Construct a desired PU matrix $U(\bm{z})$;
\item[(2)]Extract the corresponding function matrix $\widetilde{U}(\bm{x})$;
\item[(3)] Construct CCA, CAS, CSS and CCC according to Theorem \ref{thm-1}.
\end{itemize}

The results in this paper are summarized in Figure \ref{fig-3}. Our proposed approach not only  constructs    a large number of new CSSs and CCCs with explicit function forms, but also  explains  existing constructions with explicit  GBF form in the literature, i.e., they all can be considered as a special case of our constructions.  Especially,  the well known construction of Golay sequences in  \cite{DavisJedwab99} can be extracted from the seed PU matrices of order $2$,  and the known
constructions of CSSs and CCCs with explicit GBFs in \cite{Paterson00,CCC,Schmidt07} can be obtained by the recursive formula given  in Section 7, which also only involves the seed PU matrices of order 2. Although we did not explicitly show the  constructions in \cite{Chen06,Chen08, Wu2016}, they  can also be obtained by the recursive formula which only involves the seed PU matrices of order 2. This observation makes the functions derived from (generalized) seed PU matrices of order $N>2$ very interesting.

\usetikzlibrary{shapes,arrows}
\begin{center}
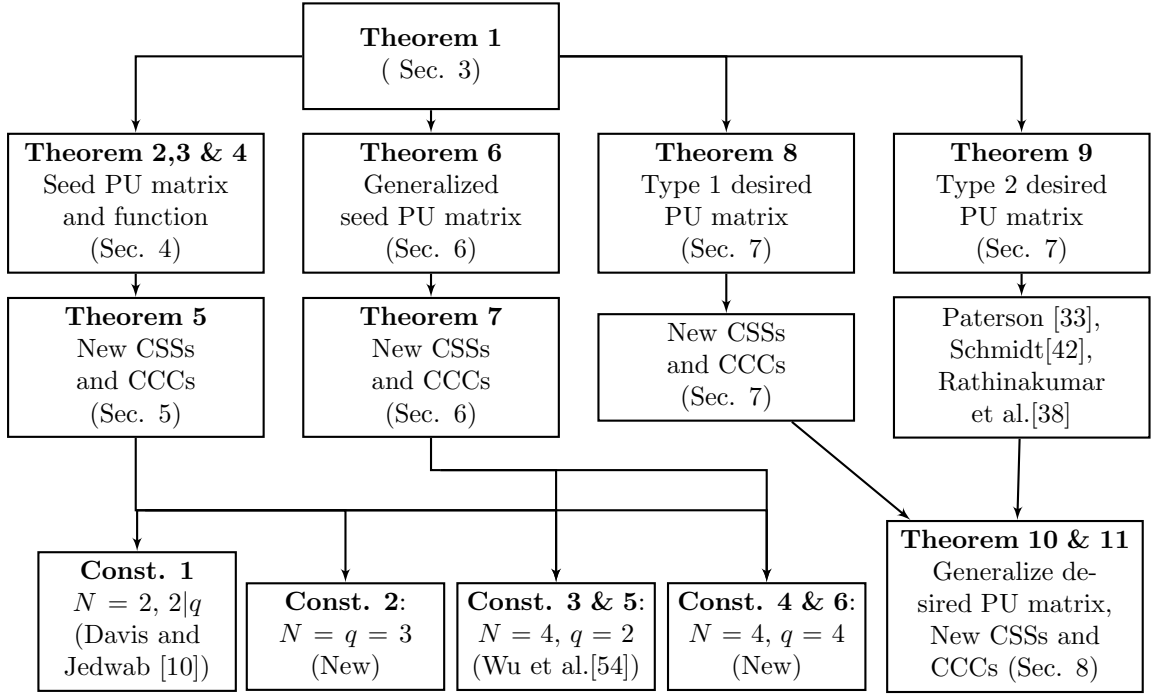
\begin{figure}[!t]
	\ls{1.0}
	\begin{tikzpicture}
	[auto,
	block/.style ={rectangle, draw=black, thick, fill=white,
		text width=9em, text centered,
		minimum height=4em},
	block'/.style ={rectangle, draw=black, thick, fill=white,
		text width=10em, text ragged,
		minimum height=4em},
	block''/.style ={rectangle, draw=black, thick, fill=white,
		text width=6.8em, text centered,
		minimum height=4em},
	line/.style ={draw, thick, -latex', shorten >=0pt}]
	% outlining the flowchart using the PGF/TikZ matrix function
	\matrix [column sep=5mm,row sep=3mm] {
		% enrollment - row 1
		& \node [block] (02) {\textbf{Theorem 1} \\ (  Sec. 3)};
		\\
		% enrollment - row 1
		\node [block] (11) {\textbf{Theorem 2,3 \& 4} \\Seed PU matrix and function\\(Sec. 4)};
		& \node [block] (12) {\textbf{Theorem 6} \\Generalized seed PU matrix\\(Sec. 6)};
		& \node [block] (13) {\textbf{Theorem 8} \\Type 1 desired PU matrix\\(Sec. 7) };
		& \node [block] (14) {\textbf{Theorem 9} \\Type 2 desired PU matrix\\(Sec. 7) };\\
		% enrollment - row 1
		\node [block] (21) {\textbf{Theorem 5} \\New CSSs and CCCs\\(Sec. 5)};
		& \node [block] (22) {\textbf{Theorem 7} \\New CSSs and CCCs\\(Sec. 6)};
		& \node [block] (23) {New CSSs and CCCs\\(Sec. 7)};
		& \node [block] (24) {Paterson \cite{Paterson00},\\ Schmidt\cite{Schmidt07},\\ Rathinakumar et al.\cite{CCC}};
		\\
		\node(31){}; & \node(32){};
		\\
		\node(41){}; & \node(42){};\\
	};% end matrix
	% connecting nodes with paths
	
	\node at (-5.87,-4.8) [block''] (51) {\textbf{Const. 1} \\ $N=2$, $2|q$  \\(Davis and Jedwab \cite{DavisJedwab99})};
	\node at (-3.1,-5) [block''] (52) {\textbf{Const. 2}: $N=q=3$ \\ (New)};
	\node at (-0.3,-5) [block''] (53) {\textbf{Const. 3 \& 5}: $N=4$, $q=2$ \\ (Wu et al.\cite{Wu2016})};
	\node at (2.5,-5) [block''] (54) {\textbf{Const. 4 \& 6}: $N=4$, $q=4$\\ (New)};
	\node at (5.8,-4.6) [block] (55) {\textbf{Theorem 10 \& 11} \\Generalize desired PU matrix, New CSSs and CCCs (Sec. 8)};
	\begin{scope}[every path/.style=line]
	% paths for enrollment rows
	\path (02)   -| (11);	\path (02)   -- (12);	\path (02)   -| (13);	\path (02)   -| (14);
	% paths for enrollment rows
	\path (11)   -- (21);	\path (12)   -- (22);	\path (13)   -- (23);	\path (14)   -- (24);
	% paths for enrollment rows
	\draw[-] (21)   |- (41);	\draw[-](22)   |- (32);  \path(24)   -- (55);\path(23)   -- (55);
	% paths for enrollment rows
	\path (41)   -|(51);	\path (41)   -| (52);	\path (41)   -|(53);	\path (41)   -| (54);
	% paths for enrollment rows
	\path (32)   -| (53);	\path (32)   -| (54);
	\end{scope}
	\end{tikzpicture}
	\caption{The Summarization of the Results}\label{fig-3}
\end{figure}
\end{center}

On the other hand, we propose a systematic approach to extract the explicit forms of the functions from  seed PU matrices, which is a key contribution of this paper. The general form of these functions only depends on a basis of functions from  $\Z_N$ to $\Z_q$ and representatives in the equivalent class of phase BH matrices.  And we proved that the sequences extracted from the seed PU matrices must fill up a large number of cosets of a linear code. For $N=q=3$, the ternary complementary sequences in CSS of size 3, constructed from the seed PU matrices of order 3, are never reported before. Moreover, for the quaternary complementary sequences of size 4, most of the quaternary  sequences constructed here are novel. We would like to elaborate this more.  The inspired narrative for  the new sequences extracted from seed PU matrices of order 4 can be explained as follows.

There is only one representative for binary BH matrices which are the Walsh matrices, while there are two representatives for quaternary BH matrices: the Walsh matrices and  the Fourier matrices.
We observe that all the known sequences reported in the literature are extracted from the PU matrices for which only the equivalent class of the  Walsh matrices of order 4 is involved. Note that Walsh matrices of order 4 can be generated by the Kronecker product of the Walsh matrices of order 2. So these sequences can also be produced by the recursive formula in Section 8 by seed PU matrices of order 2. On the other hand, even if there is only  one BH matrix which is equivalent to a Fourier matrix in the seed PU matrices of order 4, any sequences extracted from such a seed PU matrix are new.

Note that there are 15 equivalent classes of  BH matrices in $H(4,8)$, and 319 equivalent classes of  BH matrices in $H(4,12)$ \cite{Hada}. Thus, the results for the seed PU matrices of order $N$, combined with  the recursive formula in Section 8, may exponentially increase the number of the quaternary  sequences in CSSs of size $N$.

The open problem  given in \cite{Tseng72} has influenced the field of constructing new CSSs and CCCs for almost 5 decades, i.e.,
{\em Obtain direct construction procedures for complementary sets with given parameters, namely, the number of sequences in the set and their lengths.} From the PU matrix construction in \cite{TSP2018, Marziani,Suehiro88}, the CCSs constructed in \cite{Tseng72} by Hadamard matrices are a subset of the CSSs derived from our seed PU matrices. It is stated  that the results in \cite{Paterson00}  provide a partial solution to this open problem. Nevertheless,  the results in this paper present a much more deeply nailed down solution to this  open problem.

We now pay attention back to  the functions derived from seed PU matrices, which contain two components for each function. One component is the linear term $S_L(q, N)$ which  can be easily obtained.   However, we need to calculate the second component, i.e., the quadratic term  $S_{Q}(q, N)$ intensely.  Note that  the numbers of $\chi_{L}$ and $\chi_{R}$ which are permutations of symbols $\{0, 1, \cdots, {N-1}\}$, are both equal to $N!$. Thus, the computation complexity of the quadratic terms is about $(N!)^2$ from only one representative  of
BH matrices. Is it possible to  find a general explicit GBF form derived from seed PU matrices for specified BH matrices? Currently,  we are able to
obtain the sequences derived from the seed PU matrices involving in Fourier matrices or Walsh matrices, which will be presented in a separate work \cite{CSS-Permutation}. Surprisedly these constructions provide a connection between the complementary sequences and permutation polynomials over finite fields.

An important application of sequences in CSSs is  PMEPR control. From our recursive framework, binary and quaternary sequences in CSSs of order 4 are determined by the generalized seed PU matrices of order 2 and 4. The GBFs extracted from the generalize seed PU matrices of order 2 and 4 have been explicitly given case by case. A question we ask is whether  there is  a general form    which contains  all (or a large subset of)  binary and quaternary sequences in the CSSs of size 4 proposed in this paper.  There are more interesting questions that we could ask by exploring those new constructions presented in this paper. Here, we just list a few.

\section*{Acknowledgment}

The first author wishes to thank Dr. K.-U. Schmidt for his valuable discussions of the known constructions of complementary sets in \cite{Chen06,Chen08,Schmidt07,Stinchcombe} when the author presented  that the new CSSs were constructed in SETA 2016 \cite{SETA2016}. The authors wish to thank Dr. S.~Z.~Budi\v{s}in for his valuated suggestions on the enumeration of sequences in Corollary \ref{seq-enum} and the general constructions in Section 8.

\end{document}